\newcommand{\newc}{\newcommand}
\newc{\beq}{\begin{equation}}
\newc{\eeq}{\end{equation}}
\newc{\kt}{\rangle}
\newc{\br}{\langle}
\newc{\Tr}{\mbox{{\rm Tr}}}
\newc{\intpar}{\mbox{{\rm par}}}
\DeclareMathOperator{\sgn}{sgn}
\newtheorem{theorem}{Theorem}
\newtheorem{corollary}{Corollary}[theorem]
\newtheorem*{remark}{Remark}
\newtheorem{lemma}{Lemma}[theorem]
\newtheorem{proposition}{Proposition}[theorem]
\numberwithin{theorem}{section}
\numberwithin{corollary}{section}
\numberwithin{lemma}{section}
\numberwithin{proposition}{section}
\numberwithin{figure}{section}			
\numberwithin{equation}{section}		
\renewcommand\theequation{\arabic{section}.\arabic{equation}}	
\newcommand*\pFqskip{8mu}
\newcommand*\pFq{\begingroup
        \catcode`\,\active
        \def ,{\mskip\pFqskip\relax}%
        \dopFq
}
\def\dopFq#1#2#3#4#5{%
        {}_{#1}F_{#2}\biggl(\genfrac..{0pt}{}{#3}{#4};#5\biggr)%
        \endgroup
}
\begin{document}

\title{Symmetric Function Theory and Unitary Invariant Ensembles\footnote{Revised version: 12 August 2021}}

\author[1]{ Bhargavi Jonnadula}
\affil[1]{\small School of Mathematics, University of Bristol, Fry Building,
Bristol,
BS8 1UG, UK}
\author[2]{Jonathan P. Keating}
\affil[2]{\small Mathematical Institute, University of Oxford, Andrew Wiles Building, Oxford, OX2 6GG, UK}
\author[1]{Francesco Mezzadri}

\date{}


\maketitle

\begin{abstract}
Representation theory and the theory of symmetric functions have played a central role in Random Matrix Theory in the computation of quantities such as joint moments of traces and joint moments of characteristic polynomials of matrices drawn from the Circular Unitary Ensemble and other Circular Ensembles related to the classical compact groups.  The reason is that they enable the derivation of exact formulae, which then provide a route to calculating the large-matrix asymptotics of these quantities.
We develop a parallel theory for the Gaussian Unitary Ensemble of random matrices, and other related unitary invariant matrix ensembles.  This allows us to write down exact formulae in these cases for the joint moments of the traces and the joint moments of the characteristic polynomials in terms of appropriately defined symmetric functions.  As an example of an application, for the joint moments of the traces we derive explicit asymptotic formulae for the rate of convergence of the moments of polynomial functions of GUE matrices to those of a standard normal distribution when the matrix size tends to infinity.
\end{abstract}

\section{Introduction}


Many important quantities in Random Matrix Theory, such as joint moments of traces and joint moments of characteristic polynomials, can be calculated exactly for matrices drawn from the Circular Unitary Ensemble and the other Circular Ensembles related to the classical compact groups using representation theory and the theory of symmetric polynomials.  In the case of joint moments of the traces, this approach has proved highly successful, as in, for example, the work of Diaconis and Shahshahani \cite{Diaconis1994}.  Similarly, the joint moments of characteristic polynomials were calculated exactly in terms of Schur polynomials by Bump and Gamburd \cite{Bump2006}, leading to expressions equivalent to those obtained using the Selberg integral and related techniques \cite{Baker1997finite, Keating2000random, Keating2000, Conrey2003}.  Lee and Oh~\cite{Lee2020} extended the work of Bump and Gamburd \cite{Bump2006} and computed the correlation functions of characteristic polynomials in the Sato-Tato groups as sums of characters of irreducible characters of the symplectic group $Sp(N)$. Our aim here is to develop a parallel theory for the classical unitary invariant Hermitian ensembles of random matrices, in particular for the Gaussian (GUE), Laguerre (LUE), and Jacobi ensembles (JUE).

Characteristic polynomials and their asymptotics have been well studied for Hermitian matrices using orthogonal polynomials, super-symmetric techniques, Selberg and Itzykson-Zuber integrals, see, for example, \cite{BH00,Borodin2006,Fyodorov2002,Fyodorov2002chiralGUE,Fyodorov2003,Baik2003}.  
Other properties including universality \cite{Strahov2003,Breuer2012}, and ensembles with external sources \cite{Fyodorov2018,Forrester2013} have also been considered. Here we give a symmetric-function-theoretic approach similar to that established by Bump and Gamburd \cite{Bump2006}, using generalised Schur polynomials \cite{Sergeev2014} or multivariate orthogonal polynomials \cite{Baker1997,Baker1997calogero} to compute correlation functions of characteristic polynomials for $\beta=2$ ensembles. 

Diaconis and Shashahani \cite{Diaconis1994} used group-theoretic arguments and symmetric functions to calculate joint moments of traces of matrices for classical compact groups. Here, using multivariate orthogonal polynomials, we develop a similar approach to calculate joint moments of traces for Hermitian ensembles, leading to closed form expressions using combinatorial and symmetric-function-theoretic methods. 

Moments of Hermitian ensembles and their correlators have recently received considerable attention. Cunden \emph{et al.}~\cite{Cunden2019} showed that as a function of their order, the moments are hypergeometric orthogonal polynomials.  Cunden, Dahlqvist and O'Connell~\cite{Cunden2021} showed that the cumulants of the Laguerre ensemble admit an asymptotic expansion in inverse powers of $N$ of whose coefficients are the Hurwitz numbers.  Dubrovin and Yang~\cite{Dubrovin2017} computed the cumulant generating function for the GUE, while Gisonni, Grava and Ruzza calculated the generating function of the cumulants of the LUE in \cite{Gisonni2020} and the JUE in \cite{Gisonni2020jacobi}.

If $M$ is drawn at random from the classical compact groups $ U(N), O(N), Sp(N)$ equipped with Haar measure, then $\Tr M^k$, $k\in\mathbb{N}$, converges to a complex normal random variable as $N \to \infty$.  Johansson~\cite{Johansson1998} was the first to prove a central limit theorem when $M$ belongs to an ensemble of Hermitian matrices invariant under unitary conjugation.  In this case the analogue of  $\Tr M^k$ is played by $\Tr \, T_k(M)$, where $T_k$ is the Chebyshev polynomial of the first kind; see also \cite{Borot2013,Witte2014,Shcherbina2007,Forrester2017large,Forrester2017,Simm2017,Berezin2021,Pastur2006,Nakano2018,Lambert2019,Bekerman2018,Kopel2015} and references therein. 
As an example of an application of the general approach we take here, we apply our results to establish explicit asymptotic formulae for the rate of convergence of the moments and cumulants of Chebyshev-polynomial functions of GUE matrices to those of a standard normal distribution when the matrix size tends to infinity. In a companion article~\cite{Jonnadula2021}, we use the techniques developed in this paper to investigate the moments of the characteristic polynomials in the GUE,  uncovering structure that had been overlooked in previous studies.

The theory of symmetric functions has been applied to orthogonal polynomials also outside 
the context of random matrices. In a spirit similar to that in this paper, generalizations of classical combinatorial identities like the Cauchy formula have played a fundamental role.   The Schur functions arise naturally  in the representation theory of the Heisenberg algebra.  Lam~\cite{Lam2006} showed that by choosing a particular representation,  the Schur polynomials are replaced by a new class of symmetric functions that obey
Pieri and Cauchy-like identities.  In turn, these can be applied to the theory of Hall-Littlewood  and Macdonald polynomials. More recently, Borodin~\cite{Borodin2017} studied a one-parameter family of rational symmetric functions that generalize the Hall-Littlewood polynomials.  He also showed that such symmetric functions satisfy Pieri and Cauchy-like identities.  



This paper is structured as follows. In Sec.~\ref{sec:statements_and_results} we introduce our main results. The preliminaries, multivariate orthogonal polynomials and their properties are discussed in Sec.~\ref{sec:background}. The correlation functions of characteristic polynomials are calculated in Sec.~\ref{sec:correlation_functions}. We discuss the change of basis among different symmetric functions in Sec.~\ref{sec:moments} and prove there the results for the moments of characteristic polynomials and the joint moments of traces for different ensembles. Finally, in Sec.~\ref{sec:fluctuations}, by way of an example, we apply the results to derive explicit asymptotic formulae for the rate of convergence of the moments and cumulants of Chebyshev-polynomial functions of GUE matrices to those of a standard normal distribution when the matrix size tends to infinity.

\section{Statements and results}\label{sec:statements_and_results}
For the classical compact groups, Schur polynomials and their generalisations are the characters of $U(N)$, $O(N)$ and $Sp(N)$.  In this context they have been used extensively to calculate correlation functions of characteristic polynomials, joint moments of the traces, see \cite{Diaconis1994,Bump2006}.  Although group theoretic tools are not available for the set of Hermitian matrices, multivariate orthogonal polynomials play the role of Schur functions for the GUE, LUE and JUE and can be used to study fundamental quantities like moments of traces and characteristic polynomials.

For a partition $\mu$, let $\varPhi_\mu$ be the multivariate symmetric polynomials with leading coefficient equal to 1 that obey the orthogonality relation
\beq\label{eq:mul_poly}
\int \varPhi_\mu(x_1,\dots,x_N)\varPhi_\nu(x_1,\dots,x_N)\prod_{1\leq i<j\leq N}(x_i-x_j)^2\prod_{j=1}^Nw(x_j)\,dx_j = \delta_{\mu\nu}C_\mu
\eeq
for a weight function $w$.
Here the lengths of the partitions $\mu$ and $\nu$ are less than or equal to the number of variables $N$, and $C_\mu$ is a constant which depends on $N$. We prove the following lemma, which is a generalization of the dual Cauchy identity. 
\begin{lemma}\label{lemma:dualcauchy_to_mulmonicpoly}
Let $\varPhi_\mu$ be multivariate polynomials given in \eqref{eq:mul_poly}. Let $p,q\in\mathbb{N}$ and for $\lambda\subseteq (q^p)\equiv (\underbrace{q,\dots,q}_{p})$ let $\tilde{\lambda}=(p-\lambda_q^\prime,\dots,p-\lambda_1^\prime)$. Then
\beq\label{eq:new_identity}
\prod_{i=1}^p\prod_{j=1}^q(t_i-x_j) = \sum_{\lambda\subseteq (q^p)} (-1)^{|\tilde{\lambda}|}\varPhi_\lambda(t_1,\dots,t_p)\varPhi_{\tilde{\lambda}}(x_1,\dots ,x_q).
\eeq
\end{lemma}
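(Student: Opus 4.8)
The plan is to recognise the left-hand side as a ratio of Vandermonde determinants and then to read off the right-hand side from a single Laplace expansion, using the bialternant (generalised Weyl) form of $\varPhi_\mu$. Let $P_0,P_1,P_2,\dots$ be the monic univariate orthogonal polynomials for the weight $w$, with $\deg P_n=n$; crucially these do not depend on the number of variables, so the same family governs both the $t$-side and the $x$-side. I would first invoke the determinantal representation of these $\beta=2$ polynomials (the form to be recalled in the preliminaries),
\[
\varPhi_\mu(y_1,\dots,y_N)=\frac{\det\bigl(P_{\mu_j+N-j}(y_i)\bigr)_{i,j=1}^{N}}{\det\bigl(y_i^{N-j}\bigr)_{i,j=1}^{N}},
\]
noting that the denominator is the plain Vandermonde $V(y):=\prod_{i<i'}(y_i-y_{i'})$, since each monic $P_{N-j}$ reduces to $y^{N-j}$ by column operations. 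Hence $\det\bigl(P_{\mu_j+N-j}(y_i)\bigr)=\varPhi_\mu(y)\,V(y)$.

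Next I would set $(u_1,\dots,u_{p+q})=(t_1,\dots,t_p,x_1,\dots,x_q)$ and $n=p+q$, and use the factorisation of the combined Vandermonde,
\[
V(u_1,\dots,u_n)=V(t)\,V(x)\prod_{i=1}^{p}\prod_{j=1}^{q}(t_i-x_j),
\]
so that the left-hand side of \eqref{eq:new_identity} equals $V(u)/\bigl(V(t)V(x)\bigr)$. Replacing powers by the monic $P_m$ as above, $V(u)=\det\bigl(P_{n-l}(u_k)\bigr)_{k,l=1}^{n}$. The heart of the argument is to Laplace-expand this $n\times n$ determinant along its first $p$ rows (those carrying $t_1,\dots,t_p$): a $p$-subset of columns $S=\{s_1<\dots<s_p\}$ selects $p$ of the degrees $\{0,1,\dots,n-1\}$ for the $t$-block, and writing these as $\lambda_a+p-a$ determines a partition $\lambda\subseteq(q^p)$ whose minor is exactly $\varPhi_\lambda(t)\,V(t)$. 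The complementary $q$ degrees, written as $\tilde\lambda_b+q-b$, give the minor $\varPhi_{\tilde\lambda}(x)\,V(x)$; here I would verify the standard combinatorial fact that complementing the $\beta$-set $\{\lambda_a+p-a\}$ inside $\{0,\dots,n-1\}$ produces precisely the $\beta$-set of $\tilde\lambda=(p-\lambda'_q,\dots,p-\lambda'_1)$. Summing over $S$ is then the same as summing over $\lambda\subseteq(q^p)$, and after cancelling $V(t)V(x)$ one recovers the stated sum.

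The one delicate point, which I expect to be the main obstacle, is the sign. The generalised Laplace expansion contributes $(-1)^{\sum_{r\in R}r+\sum_{s\in S}s}$ with $R=\{1,\dots,p\}$. From $n-s_a=\lambda_a+p-a$ one gets $s_a=q+a-\lambda_a$, whence $\sum_{s\in S}s=pq+\tfrac{p(p+1)}{2}-|\lambda|$, while $\sum_{r\in R}r=\tfrac{p(p+1)}{2}$; the two triangular numbers cancel modulo $2$, leaving the overall sign $(-1)^{pq-|\lambda|}$. Since $|\tilde\lambda|=pq-|\lambda'|=pq-|\lambda|$, this is exactly $(-1)^{|\tilde\lambda|}$, matching \eqref{eq:new_identity}. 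I would keep the bookkeeping honest by fixing at the outset the convention ``degree $=n-(\text{column index})$'' and by checking the $\tilde\lambda$ identification directly on $\beta$-sets, where it is a one-line computation, rather than on Young diagrams.
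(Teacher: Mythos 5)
Your proposal is correct and is essentially the paper's own proof run in the opposite direction: the paper assembles the right-hand side into the single $(p+q)\times(p+q)$ determinant via the Laplace expansion (Prop.~\ref{prop:laplace_expansion}) and the $\beta$-set complementation fact (Prop.~\ref{prop:combinatorial_fact}), then reduces to monomials and evaluates the Vandermonde, whereas you expand that same determinant starting from the left-hand side. Your explicit sign bookkeeping, which the paper leaves implicit, checks out: $(-1)^{pq-|\lambda|}=(-1)^{|\tilde{\lambda}|}$.
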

Here partition $\lambda=(\lambda_1,\dots,\lambda_l)$ such that $\lambda_1\geq\dots\geq\lambda_l$ is a sub-partition of the partition $(q^p)$, denoted by $\lambda\subseteq (q^p)$. (See Sec.~\ref{revsymfun}). This lemma appears in \cite[p.625]{Forrester2010} for the Jacobi multivariate polynomials for arbitrary $\beta$. 
Here we present a different proof for $\beta=2$, which holds for the Hermite and Laguerre polynomials, too. A key difference in our approach is that we have closed-form expressions for multivariate polynomials as determinants of univariate classical orthogonal polynomials, while in the previous literature their construction was based on recurrence relations.   This means that in this paper formula~\eqref{eq:new_identity} becomes a powerful tool and plays a role analogous to that of the  classical dual Cauchy identity for $U(N)$. It is worth noting that the identity in Lemma~\ref{lemma:dualcauchy_to_mulmonicpoly} is independent of the weight $w$. This is a consequence of the row and column operations on determinants (see Eqs.~\eqref{eq:dummy_1} and~\eqref{eq:dummy_2}) that are central to the proof. It is the analogue of the well known
equality for Vandermonde determinants,
\beq
\prod_{1 \le i < j \le N}(x_i - x_j) = \det\left[ x^{N-j}_i\right]_{i,j=1,\dotsc,N} =\det\left[ \varphi_{N-j}(x_i)\right]_{i,j=1,\dotsc,N}, 
\eeq
where $\varphi_i$, $i=1,2\dotsc, $ is any sequence of monic orthogonal polynomials.


We focus in particular on when $w(x)$ in \eqref{eq:mul_poly} is a Gaussian, Laguerre and Jacobi weight:
\beq\label{eq:weights}
w(x) = 
\begin{cases}
e^{-\frac{x^2}{2}}, \qquad\qquad\,\, x\in \mathbb{R}, \qquad\qquad\qquad\quad\quad\,\,\, \text{Gaussian},\\
x^\gamma e^{-x},  \qquad\quad\,\,\,\, x\in \mathbb{R}_+,\qquad \gamma>-1, \qquad\,\,\text{Laguerre},\\
x^{\gamma_1}(1-x)^{\gamma_2}, \quad x\in[0,1], \quad \gamma_1,\gamma_2>-1, \quad \text{Jacobi}.
\end{cases}
\eeq
The classical polynomials orthogonal with respect to these weights satisfy 
\begin{subequations}
\label{eq:classical_hlj_ortho}
\begin{gather}
\label{eq:classical_hlj_ortho1}
\int_{\mathbb{R}}H_j(x)H_k(x)e^{-\frac{x^2}{2}}\,dx = \sqrt{2\pi}j!\delta_{jk},\\
\label{eq:classical_hlj_ortho2}
 \quad\int_{\mathbb{R}_{+}}L^{(\gamma)}_mL^{(\gamma)}_nx^\gamma e^{-x}\, dx 
= \frac{\Gamma(n+\gamma +1)}{\Gamma(n+1)}\delta_{nm},\\
\int_0^1 J^{(\gamma_1,\gamma_2)}_n(x)J^{(\gamma_1,\gamma_2)}_m(x)x^{\gamma_1}(1-x)^{\gamma_2}\,dx \nonumber\\
\label{eq:classical_hlj_ortho3}
= \frac{1}{(2n+\gamma_1+\gamma_2+1)}\frac{\Gamma(n+\gamma_1+1)\Gamma(n+\gamma_2+1)}{n!\Gamma(n+\gamma_1+\gamma_2+1)}\delta_{mn}.
\end{gather}
\end{subequations}
The identity in \eqref{eq:new_identity} gives a compact way to calculate the correlation functions and moments of characteristic polynomials of unitary ensembles using symmetric functions.  The results are as follows.
\begin{theorem}\label{thm:correlations_charpoly}
Let $M$ be an $N\times N$ GUE, LUE or JUE matrix and $t_1,\dots, t_p\in\mathbb{C}$. Then,
\beq
\begin{split}
\textit{(a)}\quad\mathbb{E}^{(H)}_N[\prod_{j=1}^p\det(t_j - M)] &=\mathcal{H}_{(N^p)}(t_1,\dots, t_p)\\
\textit{(b)}\quad\mathbb{E}^{(L)}_N[\prod_{j=1}^p\det(t_j-M)] &= \left(\prod_{j=N}^{p+N-1}(-1)^jj!\right)\mathcal{L}^{(\gamma)}_{(N^p)}(t_1,\dots,t_p)\\
\textit{(c)}\quad\mathbb{E}^{(J)}_N[\prod_{j=1}^p\det(t_j-M)] &= \left(\prod_{j=N}^{p+N-1}(-1)^jj!\frac{\Gamma(j+\gamma_1+\gamma_2+1)}{\Gamma(2j+\gamma_1+\gamma_2+1)}\right)\mathcal{J}^{(\gamma_1,\gamma_2)}_{(N^p)}(t_1,\dots,t_p)
\end{split}
\eeq
\end{theorem}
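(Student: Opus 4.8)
The plan is to reduce all three cases to a single integral computation and then to read off the normalisation constants that distinguish $\mathcal{H}$, $\mathcal{L}$, and $\mathcal{J}$ only at the very end. First I would diagonalise: writing the eigenvalues of $M$ as $x_1,\dots,x_N$, the joint eigenvalue density is proportional to $\prod_{i<j}(x_i-x_j)^2\prod_j w(x_j)$ with $w$ the corresponding weight from \eqref{eq:weights}, and $\det(t_j-M)=\prod_{k=1}^N(t_j-x_k)$. Hence $\prod_{j=1}^p\det(t_j-M)$ is exactly the left-hand side of the dual Cauchy identity \eqref{eq:new_identity} with $q=N$, and the expectation becomes the integral of $\prod_{i=1}^p\prod_{k=1}^N(t_i-x_k)$ against this density.

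The crux is then to apply Lemma~\ref{lemma:dualcauchy_to_mulmonicpoly} with $q=N$ and integrate term by term. Substituting \eqref{eq:new_identity} gives a sum over $\lambda\subseteq(N^p)$ in which the $t$-dependence factors out as $\varPhi_\lambda(t_1,\dots,t_p)$, while the eigenvalue integral is $\int\varPhi_{\tilde\lambda}(x_1,\dots,x_N)\prod_{i<j}(x_i-x_j)^2\prod_j w(x_j)\,dx_j$. Recognising the constant polynomial $\varPhi_\emptyset\equiv 1$ and invoking the orthogonality \eqref{eq:mul_poly}, this integral equals $C_\emptyset\,\delta_{\tilde\lambda,\emptyset}$, and the normalising partition function is precisely $\int\prod_{i<j}(x_i-x_j)^2\prod_j w(x_j)\,dx_j=C_\emptyset$. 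Thus a single term survives: since $\tilde\lambda=(p-\lambda'_N,\dots,p-\lambda'_1)$, the condition $\tilde\lambda=\emptyset$ forces $\lambda'_i=p$ for every $i$, i.e. $\lambda=(N^p)$, for which $|\tilde\lambda|=0$ and the sign is trivial. This immediately yields $\mathbb{E}_N[\prod_j\det(t_j-M)]=\varPhi_{(N^p)}(t_1,\dots,t_p)$, the monic multivariate polynomial in the $p$ variables $t_1,\dots,t_p$.

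It remains to match the monic $\varPhi_{(N^p)}$ to the normalised families $\mathcal{H},\mathcal{L},\mathcal{J}$, and this is where the three cases diverge. Using the determinantal expression for $\varPhi_{(N^p)}$ in $p$ variables, the degrees appearing are $\{N,N+1,\dots,N+p-1\}$, so its leading symmetric-function term is $s_{(N^p)}$ scaled by the product of the leading coefficients of the univariate polynomials of those degrees. For the Hermite weight the classical polynomials in \eqref{eq:classical_hlj_ortho1} are already monic, so $\varPhi_{(N^p)}=\mathcal{H}_{(N^p)}$ and part (a) follows with no prefactor. For Laguerre, the degree-$j$ polynomial has leading coefficient $(-1)^j/j!$, whose reciprocal over $j=N,\dots,N+p-1$ produces exactly $\prod_{j=N}^{p+N-1}(-1)^j j!$, giving part (b); for Jacobi the analogous leading coefficient on $[0,1]$ carries the extra ratio $\Gamma(2j+\gamma_1+\gamma_2+1)/\Gamma(j+\gamma_1+\gamma_2+1)$, whose reciprocal supplies the Gamma-factor in part (c).

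The step I expect to demand the most care is the last one: correctly computing and assembling the leading coefficients of the univariate Laguerre and (shifted) Jacobi polynomials across the window of degrees $N,\dots,N+p-1$, signs included, so that the products collapse to the stated closed forms. By contrast the analytic heart of the argument—the collapse of the entire sum to the single index $\lambda=(N^p)$—is essentially immediate once $\varPhi_\emptyset\equiv1$ and the orthogonality relation \eqref{eq:mul_poly} are invoked.
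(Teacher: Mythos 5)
Your proof is correct and follows essentially the same route as the paper: apply Lemma~\ref{lemma:dualcauchy_to_mulmonicpoly} with $q=N$, then use the orthogonality \eqref{eq:mul_poly} to kill every term except $\tilde\lambda=\emptyset$, which forces $\lambda=(N^p)$. The only (cosmetic) difference is the bookkeeping of normalisations: the paper carries the non-monic $\mathcal{L}^{(\gamma)}$ and $\mathcal{J}^{(\gamma_1,\gamma_2)}$ through the identity with a prefactor running over $j=0,\dots,p+N-1$ and then cancels the $j=0,\dots,N-1$ portion against the explicit constants $\mathcal{L}^{(\gamma)}_0$ and $\mathcal{J}^{(\gamma_1,\gamma_2)}_0$, whereas you stay monic throughout and convert only at the end via the leading coefficients of the univariate polynomials of degrees $N,\dots,N+p-1$ --- the two computations agree.
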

Here the subscripts $(H),(L),(J)$ indicate Hermite, Laguerre and Jacobi, respectively, and $\mathcal{H}_\lambda$, $\mathcal{L}^\gamma_\lambda$, $\mathcal{J}^{(\gamma_1,\gamma_2)}_\lambda$ are multivariate polynomials orthogonal with respect to the generalised weights in \eqref{eq:mul_poly}. 

Similar to the case of the classical compact groups, correlations of traces of Hermitian ensembles can be calculated using the theory of symmetric functions. For a partition $\lambda=(\lambda_1,\lambda_2,\dots,\lambda_N)$, $\sum_j \lambda_j\leq N$, define
\beq\label{eq:C_lambda_and_G_lambda}
\begin{split}
C_\lambda(N) &= \prod_{j=1}^N\frac{(\lambda_j+N-j)!}{(N-j)!},\\
G_\lambda(N,\gamma) &= \prod_{j=1}^N\Gamma(\lambda_j+N-j+\gamma +1).
\end{split}
\eeq
The constants $C_\lambda(N)$ and $G_\lambda(N,\gamma)$ have several interesting combinatorial interpretations which are discussed in Sec.~\ref{sec:change_of_basis}.

\begin{theorem}
\label{thm:joint_mom_traces}
Let $M$ be an  $N\times N$ GUE, LUE or JUE matrix and let $\mu=(\mu_1,\dots,\mu_l)$ be a partition such that $|\mu|=\sum_{j=1}^l\mu_l\leq N$. Then
\begin{enumerate}[label=(\alph*)]
\item 
\beq\label{eq:mom_power_gue}
\mathbb{E}_N^{(H)}\big[\prod_{j=1}^l\Tr M^{\mu_j}\big]=
\begin{cases}
\frac{1}{2^{\frac{|\mu|}{2}}\frac{|\mu|}{2}!}\sum_{\lambda\vdash |\mu|}\chi^\lambda_{(2^{|\lambda|/2})}\chi^\lambda_\mu C_\lambda(N), \quad\text{$|\mu|$ is even},\\
0,\qquad\qquad\qquad\qquad\qquad\qquad\quad otherwise,
\end{cases}
\eeq
which is a polynomial in $N$.
\item
\beq
\mathbb{E}_N^{(L)}\big[\prod_{j=1}^l\Tr M^{\mu_j}\big]= \frac{1}{|\mu|!}\sum_{\lambda\vdash |\mu|}\frac{G_\lambda(N,\gamma)}{G_0(N,\gamma)}C_\lambda(N)\chi^\lambda_{(1^{|\lambda|})}\chi^\lambda_\mu.
\eeq
\item
\beq
\mathbb{E}_N^{(J)}\big[\prod_{j=1}^l\Tr M^{\mu_j}\big]=\sum_{\lambda\vdash |\mu|}\frac{G_\lambda(N,\gamma_1)}{G_0(N,\gamma_1)}C_\lambda(N)\chi^\lambda_\mu D_{\lambda 0}^{(J)},
\eeq
where 
\beq\label{eq:det_jac_mom_traces}
D_{\lambda 0}^{(J)} = \det\left[\mathbbm{1}_{\lambda_i-i+j\geq 0}\frac{1}{(\lambda_i-i+j)!}\frac{\Gamma(2N-2i+\gamma_1+\gamma_2+2)}{\Gamma(2N+\lambda_i-i-j+\gamma_1+\gamma_2+2)}\right]_{i,j=1,\dots, N}.
\eeq
\end{enumerate}
\end{theorem}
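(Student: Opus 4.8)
The plan is to reduce the joint moments of traces to averages of Schur polynomials and then to evaluate those averages using the bialternant description of the multivariate orthogonal polynomials. Writing $x_1,\dots,x_N$ for the eigenvalues of $M$, the product $\prod_{j=1}^l\Tr M^{\mu_j}$ is exactly the power-sum symmetric polynomial $p_\mu(x_1,\dots,x_N)$. Since $|\mu|\le N$, every $\lambda\vdash|\mu|$ has length at most $N$, so the Frobenius expansion $p_\mu=\sum_{\lambda\vdash|\mu|}\chi^\lambda_\mu\,s_\lambda$ holds without truncation; by linearity it therefore suffices to compute $\mathbb{E}_N[s_\lambda]$ for each weight and then resum against $\chi^\lambda_\mu$. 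This is why all three answers are indexed by $\lambda\vdash|\mu|$ and carry the factor $\chi^\lambda_\mu$.

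First I would record the elementary consequence of \eqref{eq:mul_poly} that, since $\varPhi_0\equiv1$ and $C_0=Z_N$, one has $\mathbb{E}_N[\varPhi_\mu]=\delta_{\mu 0}$ for the monic multivariate orthogonal polynomials. Thus $\mathbb{E}_N[s_\lambda]$ is just the coefficient of $\varPhi_0$ when $s_\lambda$ is expanded in the basis $\{\varPhi_\mu\}$. To extract it I would use $s_\lambda=\det[x_i^{\lambda_j+N-j}]/\det[x_i^{N-j}]$ together with the determinant representation $\varPhi_\mu=\det[\pi_{\mu_j+N-j}(x_i)]/\det[x_i^{N-j}]$, where $\pi_m$ is the univariate classical orthogonal polynomial of degree $m$. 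Writing each monomial in that basis, $x^n=\sum_r u_{n,r}\,\pi_r(x)$ with explicit connection coefficients $u_{n,r}$, and using multilinearity of the determinant in its columns, the change of basis from $\{s_\lambda\}$ to $\{\varPhi_\mu\}$ turns into a determinant of connection coefficients; selecting $\mu=0$ gives
\beq
\mathbb{E}_N[s_\lambda]=\det\big[u_{\lambda_j+N-j,\;N-k}\big]_{j,k=1}^N .
\eeq
(Equivalently this is the Andréief/Heine evaluation of $\tfrac{1}{Z_N}\int s_\lambda\,\Delta^2\prod_j w\,dx_j$.)

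It then remains to evaluate this determinant weight by weight. For Hermite, $u_{n,r}$ is nonzero only when $n-r\ge0$ is even, with $u_{n,r}=n!/\big(r!\,2^{(n-r)/2}((n-r)/2)!\big)$; pulling $\prod_j(\lambda_j+N-j)!$ out of the rows and $\prod_k 1/(N-k)!$ out of the columns produces precisely $C_\lambda(N)$, and the reduced determinant $\det[\,1/(2^{(\lambda_j-j+k)/2}((\lambda_j-j+k)/2)!)\,]$ is identified, via Jacobi--Trudi and the power-sum expansion of the complete homogeneous symmetric functions, with $\chi^\lambda_{(2^{|\lambda|/2})}/(2^{|\lambda|/2}(|\lambda|/2)!)$; the parity constraint forces vanishing for $|\mu|$ odd, and polynomiality in $N$ is manifest from $C_\lambda(N)$. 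For Laguerre the analogous Gamma prefactors depend on the row index only and factor out cleanly as $G_\lambda(N,\gamma)/G_0(N,\gamma)$ times $C_\lambda(N)$, leaving Aitken's determinant $\det[1/(\lambda_j-j+k)!]=\chi^\lambda_{(1^{|\lambda|})}/|\lambda|!$. Substituting $\mathbb{E}_N[s_\lambda]$ back into $\mathbb{E}_N[p_\mu]=\sum_\lambda\chi^\lambda_\mu\,\mathbb{E}_N[s_\lambda]$ then yields (a) and (b).

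For Jacobi the same scheme applies, but now the Gamma ratios coming from the connection coefficients involve $2N+\lambda_i-i-j$ and therefore couple the row and column indices; they cannot be pulled out of the determinant, and one is left exactly with $D^{(J)}_{\lambda0}$ of \eqref{eq:det_jac_mom_traces}, which explains the structural asymmetry between part (c) and parts (a)--(b). I expect the main obstacle to be precisely this determinant evaluation: recognising the reduced connection-coefficient determinants as the symmetric-group characters $\chi^\lambda_{(2^{|\lambda|/2})}$ and $\chi^\lambda_{(1^{|\lambda|})}$ through the classical Jacobi--Trudi/Aitken determinant identities, and tracking carefully which Gamma factors decouple from the determinant (Hermite, Laguerre) and which do not (Jacobi). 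Everything else is bookkeeping with the Frobenius resummation.
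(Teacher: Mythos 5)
Your proposal is correct and follows essentially the same route as the paper: expand $P_\mu$ in Schur functions via Frobenius, reduce $\mathbb{E}_N[S_\lambda]$ to the constant term $\psi_{\lambda 0}$ of the expansion of $S_\lambda$ in the multivariate orthogonal polynomial basis (which the paper computes by exactly the Andr\'eief/connection-coefficient manipulation you describe), and recognise the resulting connection-coefficient determinant as a symmetric-group character, with the row/column Gamma factors assembling into $C_\lambda(N)$ and $G_\lambda(N,\gamma)/G_0(N,\gamma)$ and refusing to decouple only in the Jacobi case. The one cosmetic difference is the character identification itself: the paper evaluates $D^{(H)}_{\lambda 0}$ directly from the Frobenius character formula by a Laplace-expansion argument, whereas you obtain it from the exponential specialization $h_m\mapsto 1/\bigl(2^{m/2}(m/2)!\bigr)$ (i.e.\ $p_2=1$, all other $p_k=0$) applied to Jacobi--Trudi --- both are standard and give the same determinant evaluation.
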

In the above equations $\chi^\lambda_\mu$ are the characters of the symmetric group $\mathcal{S}_m$, $m=|\lambda| = |\mu|$, associated to the $\lambda^{th}$ irreducible representation on the $\mu^{th}$ conjugacy class. 

Next, we now focus our attention on the GUE with rescaled matrices $M_R=M/\sqrt{2N}$. Define the random variables
\beq
X_k:= \Tr T_k(M_R) - \mathbb{E}^{(H)}_N[\Tr T_k(M_R)].
\eeq
Here $T_k$ is the Chebyshev polynomial of degree $k$. Johansson proved the following multi-dimensional central limit theorem for $X_k$ \cite{Johansson1998}: 
\beq\label{eq:szego_thm_corollary}
(X_1,\dots, X_{2m}) \xRightarrow[]{d} (\frac{1}{2}r_1,\dots,\frac{\sqrt{2m}}{2}r_{2m}),
\eeq 
where $r_j$ are independent standard normal random variables and $\xRightarrow[]{d}$ means convergence in distribution \cite{Johansson1998}. 

Define
\beq
\mathcal{E}_{n,k} := \mathbb{E}^{(H)}_N[X_k^n]-\left(\frac{\sqrt{k}}{2}\right)^n\mathbb{E}[r_k^n]
\eeq

The formalism that we developed to study moments of traces allow us to derive explicit estimates for the error $\mathcal{E}_{n,k}$ as a function of matrix size $N$. For rescaled Gaussian matrices, the correlators of traces are Laurant polynomials in $N$. This fact can be seen from \eqref{eq:mom_power_gue} when applied to rescaled matrices. Consequently, the moments of polynomial test functions are also Laurant polynomials in $N$. 

Let $f$ and $g$ be real or complex valued functions defined on some subset of $\mathbb{R}$.  In what follows we  will use the notations $ f(x) \lesssim g(x)$ and $f(x)= O(g(x))$ as $x \to \infty$ interchangeably.  More precisely, we write $f(x) \lesssim g(x)$ ($f(x)= O(g(x))$) if and only if there exist a $x_0$ and a positive constant $K$ such that
\beq
| f(x)| \le K |g(x)| \quad \text{for  all $x> x_0$.}
\eeq  
More in general, if $t$ belongs to the extended real line, then $f(x) \lesssim g(x)$ ($f(x) = O(g(x))$) as $x \to t$  if and only if
\beq
\limsup_{x \to t} \frac{|f(x)|}{|g(x)|} < \infty.
\eeq 

We have the following theorem for Chebyshev polynomials.


\begin{theorem}\label{thm:sub-leading}
Fix $k\in\mathbb{N}$ and  let $kn\leq N$.  With the notation introduced above the following statements hold as $N\rightarrow\infty$.
\begin{enumerate}
\item For $k$ odd and $k> 1$,
\beq\label{eq:mom_chebyshev_her_oddk}
\begin{split}
\mathcal{E}_{n,k} =
\begin{cases}
0,\hspace{9.5em}\text{if $n$ is odd},\\
 d_1(n,k)\frac{1}{N^2} + O\left(\frac{1}{N^4}\right),\quad \text{if $n$ is even},
\end{cases}
\end{split}
\eeq
where
\beq\label{eq:d1}
d_1(n,k)\lesssim \, A^{\frac{3n}{k}}\pi^{-\frac{n}{2}}2^{\frac{7nk}{8}-\frac{13n}{8}+\frac{n}{6k}}k^{\frac{3n}{8}(k+2)+\frac{n}{8}+\frac{n}{4k}}n^{\frac{3n}{8}(k+1)-\frac{k}{4}+\frac{7}{8}}e^{-\frac{n}{8}(k+1)+\frac{9n}{4}+\frac{5n}{8k}+\pi\sqrt{\frac{n}{3}(k+1)}},
\eeq
as $n \to \infty$ and $k$ fixed.

\item For $k$ even,
\beq\label{eq:mom_chebyshev_her_evenk}
\begin{split}
\mathcal{E}_{n,k} =
\begin{cases}
d_2(n,k)\frac{1}{N}+O\left(\frac{1}{N^3}\right),\hspace{1.4em} \text{if $n$ is odd},\\[5pt]
d_3(n,k)\frac{1}{N^2}+O\left(\frac{1}{N^4}\right),\quad \text{if $n$ is even},
\end{cases}
\end{split}
\eeq
where
\beq
\begin{split}
&d_2(n,k)\lesssim \, A^{\frac{3n}{k}}\pi^{-\frac{n}{2}}2^{\frac{3nk}{8}-3n+\frac{n}{6k}}k^{\frac{3nk}{8}+\frac{n}{2}+\frac{9n}{4k}}n^{\frac{3nk}{8}+\frac{2n}{k}-\frac{k}{2}-\frac{3}{8}}e^{-\frac{n}{8}(k-18)+\pi\sqrt{\frac{nk}{3}}-\frac{19n}{8k}},\\
&d_3(n,k)\lesssim \,A^{\frac{3n}{k}}\pi^{-\frac{n}{2}}2^{\frac{3nk}{8}-3n+\frac{n}{6k}}k^{\frac{3nk}{8}+\frac{n}{2}+\frac{9n}{4k}}n^{\frac{3nk}{8}+\frac{2n}{k}-\frac{k}{2}+\frac{5}{8}}e^{-\frac{n}{8}(k-18)+\pi\sqrt{\frac{nk}{3}}-\frac{19n}{8k}},
\end{split}
\eeq
as  $n \to \infty$ and $k$ fixed. Here $A=1.2824\dots$ is the Glaisher–Kinkelin constant \cite{Choi2007}.
\end{enumerate}
\end{theorem}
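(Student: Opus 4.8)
The plan is to reduce the entire computation to the exact trace-moment formula \eqref{eq:mom_power_gue}, and then to perform two asymptotic analyses in succession: an expansion in $N$ that isolates $\mathcal{E}_{n,k}$ together with its leading coefficient, followed by a separate large-$n$ analysis of that coefficient. First I would pass to the power-sum basis. Writing $T_k(x)=\sum_{j\le k}c_{k,j}x^{j}$ and using $M_R=M/\sqrt{2N}$ gives $\Tr T_k(M_R)=\sum_{j\le k}c_{k,j}(2N)^{-j/2}\Tr M^{j}$, so that $X_k$ is a centred linear combination of the $\Tr M^{j}$ and $X_k^{n}$ is a finite linear combination of products $\prod_{i=1}^{n}\Tr M^{j_i}$ with $j_i\le k$ and $\sum_i j_i\le kn\le N$. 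Applying \eqref{eq:mom_power_gue} term by term then writes $\mathbb{E}^{(H)}_N[X_k^{n}]$ as an explicit finite sum over partitions $\lambda$ of the character products $\chi^{\lambda}_{(2^{|\lambda|/2})}\chi^{\lambda}_{\mu}$ weighted by $C_\lambda(N)$. Since each $C_\lambda(N)$ is a polynomial in $N$ of degree $|\lambda|$ while the rescaling contributes the factor $(2N)^{-|\mu|/2}$, the outcome is a Laurent polynomial in $N$ — the structural fact stated just before the theorem.

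The parity statements then follow from symmetry and from a count of the surviving powers of $N$. Invariance of the GUE under $M\mapsto -M$ together with $T_k(-x)=(-1)^{k}T_k(x)$ shows that $X_k$ is an odd function of $M$ when $k$ is odd, whence $\mathbb{E}^{(H)}_N[X_k^{n}]=(-1)^{n}\mathbb{E}^{(H)}_N[X_k^{n}]$ vanishes identically for $k$ and $n$ both odd. In the remaining cases I would extract the two leading terms of the Laurent polynomial. The constant ($N^{0}$) term must equal the Gaussian moment $(\sqrt{k}/2)^{n}\mathbb{E}[r_k^{n}]$: this is forced by Johansson's theorem \eqref{eq:szego_thm_corollary}, but it is illuminating to see it arise as the unique survivor once all the positive powers $N^{|\mu|/2}$ produced by the top-degree part $N^{|\lambda|}$ of $C_\lambda(N)$ cancel against one another — a cancellation that the Chebyshev normalisation is designed to enforce. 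A parity count of the terms that can remain after this cancellation (equivalently, an order count of the joint cumulants of the $X_k$) yields the stated dichotomy: an $N^{-1}$ contribution survives only when $n$ is odd and $k$ is even, giving $d_2/N$, while in both even-moment cases the first surviving correction is at order $N^{-2}$, giving $d_1/N^{2}$ for $k$ odd and $d_3/N^{2}$ for $k$ even.

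Each of $d_1,d_2,d_3$ is thereby presented as an explicit finite sum over partitions, obtained by collecting the appropriate subleading coefficient of each $C_\lambda(N)$ — itself a product of ratios of factorials — against the characters $\chi^{\lambda}_{\mu}$ and the prefactor $1/(2^{|\mu|/2}(|\mu|/2)!)$. The final and hardest step is the asymptotics of these sums as $n\to\infty$ with $k$ fixed, and two distinct mechanisms are already legible in \eqref{eq:d1}: the factor $e^{\pi\sqrt{n(k+1)/3}}$ is a Hardy--Ramanujan--type contribution reflecting the proliferation of partitions that enter the sum, whereas the Glaisher--Kinkelin constant $A$ and the various half-integer exponents of $n$, $k$ and $2$ come from Stirling and Barnes-$G$/hyperfactorial asymptotics of the factorial products in $C_\lambda(N)$ and in the prefactor. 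Concretely I would isolate the dominant family of partition shapes by a saddle-point/maximisation argument, substitute the asymptotics of the character evaluations and factorial products there, and sum the Gaussian fluctuations about the saddle to produce the subexponential $\sqrt{n}$ factor. I expect the principal difficulties to be threefold: making the cancellation of the growing powers of $N$ explicit enough to replace the unwieldy alternating character sum by a closed form for $d_i(n,k)$; identifying and controlling the dominant partition shape uniformly as $n\to\infty$ so that the Hardy--Ramanujan and Glaisher--Kinkelin asymptotics may be applied in tandem; and bounding the remainders $O(1/N^{3})$ and $O(1/N^{4})$ uniformly over the admissible range $kn\le N$.
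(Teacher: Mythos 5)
Your reduction to the exact character formula \eqref{eq:mom_power_gue}, the observation that everything is a Laurent polynomial in $N$, and the parity arguments (the $M\mapsto -M$ symmetry for $k,n$ both odd, and the even/odd dichotomy that produces the $N^{-1}$ versus $N^{-2}$ alternatives) all match the paper's route; the latter dichotomy is exactly Corollary~\ref{cor:even odd sym} applied to the rescaled matrices. Where your plan diverges, and where the genuine gap lies, is in the final asymptotic step. You propose to extract a closed form for $d_i(n,k)$, locate a dominant family of partition shapes $\lambda$ in the character sum by a saddle-point argument, and sum Gaussian fluctuations about the saddle. The paper does none of this, and for good reason: the alternating character sum is never evaluated or localised. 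Instead, the sum over $\lambda$ is bounded termwise, using the Fomin--Lulov bound \eqref{eq:box char} for $\chi^\lambda_{(2^{|\mu|/2})}$, the Larsen--Shalev bound \eqref{eq:cycle char} for $\chi^\lambda_\mu$ (with $a_\mu=1$ since $\mu$ has many short cycles), McKay's bound \eqref{eq:max dim} on the maximal dimension, and the Hardy--Ramanujan count \eqref{eq:no of par} of the number of partitions of $|\mu|$ --- this last being the sole source of the $e^{\pi\sqrt{n(k+1)/3}}$ factor, as you correctly guessed. The coefficient of $N^{-q}$ in $C_\lambda(N)$ is controlled via generalised Bernoulli polynomials; this is Proposition~\ref{prop:p_mu coef 1/N bound}. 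The ``dominant term'' is then selected not among partition shapes $\lambda$ but among the compositions $(n_0,\dots,n_{\lfloor k/2\rfloor})$ arising from the multinomial expansion of $(\Tr T_k(M_R))^n$, by maximising the multinomial coefficient (all $n_j\approx 2n/(k+1)$, so $|\mu|=n(k+1)/2$); and the Glaisher--Kinkelin constant enters through the Barnes-$G$ asymptotics of the product of Chebyshev coefficients, not through $C_\lambda(N)$ as you suggest.

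The practical consequence is that \eqref{eq:d1} and its companions are upper-bound--type estimates obtained from this chain of inequalities (the paper itself notes that the true coefficients are much smaller because of cancellations it does not attempt to track), so your more ambitious programme of computing genuine asymptotics of $d_i(n,k)$ with cancellation control is both unnecessary for the statement as written and substantially harder than what is actually done. Without the specific auxiliary inputs above --- in particular the symmetric-group character bounds and the maximal-dimension bound, which your proposal never invokes --- there is no route from your plan to the particular exponents of $2$, $k$, $n$ and $e$ appearing in the theorem; this is the step you flag as a ``principal difficulty'' but leave unresolved, and it is precisely the content of the paper's proof.
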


Along with the moments we also give an estimate for the cumulants of random variables $X_k$. Computing cumulants from \eqref{eq:mom_power_gue} is not straightforward. Instead, we employ the well established connection between correlators of traces and the enumeration of ribbon graphs to estimate the cumulants. The results are elaborated in Section \ref{sec:cumulants}.

To summarise, for a fixed $n$ and $k$, we show that the $n^{th}$ moment of $X_k$ converges to the $n^{th}$ moment of independent scaled Gaussian variable as $N^{-1}$ or $N^{-2}$ depending on the parity of $n$; and the $n^{th}$ cumulant of $X_k$ converges to 0 as $N^{n-2}$ for $n>2$.  
Theorem~\ref{thm:sub-leading} provides explicit asymptotic estimates for the rate of convergence of the moments.  


\section{Background}\label{sec:background}
Symmetric polynomials arise naturally in random matrix theory because the joint eigenvalue probability density function remains invariant under the action of the symmetric group. There has been a considerable focus on symmetric functions to study moments in various ensembles \cite{Diaconis1994,Bump2006,Mezzadri2017,Dumitriu2007}. Here we define some symmetric functions that will play a central role in our calculations and state some of their properties. 
\subsection{Review of symmetric functions}
\label{revsymfun}
A partition $\lambda$ is a sequence of non-negative integers such that $\lambda_1\geq \lambda_2\geq\dots\geq\lambda_l>0$. We call the maximum $l$ such that $\lambda_l>0$ the length of the partition $l(\lambda)$ and $|\lambda|=\sum_{i=1}^l\lambda_i$ the weight. A partition can be represented with a Young diagram which is a left adjusted table of $|\lambda|$ boxes and $l(\lambda)$ rows such that the first row contains $\lambda_1$ boxes, the second row contains $\lambda_2$ boxes, and so on. The conjugate partition $\lambda^\prime$ is defined by transposing the Young diagram of $\lambda$. 
\beq
\begin{split}
\ydiagram[]{4,2,2,1} &\quad\qquad\qquad \ydiagram[]{4,3,1,1}\\
\text{Young diagram of $\lambda$}&\qquad\text{Young diagram of $\lambda^\prime$}
\end{split}
\eeq
In the above example $\lambda=(4,2,2,1)$, $|\lambda|=9$ and $l(\lambda)=4$. We denote a sub-partition $\mu$ of $\lambda$ by $\mu\subseteq\lambda$ if the Young diagram of $\mu$ is contained in the Young diagram of $\lambda$. 

Another way to represent a partition is as follows: if $\lambda$ has $b_1$ 1's, $b_2$ 2's and so on, then $\lambda = (1^{b_1}2^{b_2}\dots k^{b_k})$. In this representation, the weight $|\lambda| = \sum_{j=1}^k jb_j$ and length $l(\lambda) = \sum_{j=1}^kb_j$. In the rest of the paper, we use both notations interchangeably and we do
not distinguish partitions that differ only by a sequence of zeros; for example, (4,2,2,1) and (4,2,2,1,0,0)
are the same partitions. We denote the empty partition by $\lambda=0$ or $\lambda=()$.

The \textit{elementary symmetric functions} $e_r(x_1,\dots, x_N)$ are defined by
\beq
e_r(x_1,\dots,x_N) = \sum_{i_1<\dots <i_r}x_{i_1}\dots x_{i_r}
\eeq
and the \textit{complete symmetric functions} $h_r(x_1,\dots,x_N)$ by
\beq
h_r(x_1,\dots,x_N) = \sum_{i_1\leq\dots\leq i_r}x_{i_1}\dots x_{i_r}.
\eeq
Given a partition $\lambda$, we define
\beq
\begin{split}
e_\lambda(x_1,\dots,x_N) &= \prod_je_{\lambda_j}(x_1,\dots,x_N),\\
h_\lambda(x_1,\dots,x_N) &= \prod_jh_{\lambda_j}(x_1,\dots,x_N).
\end{split}
\eeq

The \textit{Schur polynomials} are symmetric polynomials indexed by partitions. Given a partition $\lambda$ such that $l(\lambda)\leq N$, we write
\beq\label{eq:schur_poly}
\begin{split}
S_\lambda(x_1,\dots,x_N) &= \frac{\det\left[x_i^{\lambda_j+N-j}\right]_{i,j=1,\dots , N}}{\det\left[x_i^{N-j}\right]_{i,j=1,\dots, N}}\\
&= \frac{1}{\Delta(\textbf{x})}
\begin{vmatrix}
x_1^{\lambda_1+N-1} & x_2^{\lambda_1+N-1} & \dots & x_N^{\lambda_1+N-1}\\
x_1^{\lambda_2+N-2} & x_2^{\lambda_2+N-2} & \dots & x_N^{\lambda_2+N-2}\\
\vdots & \vdots &  & \vdots\\
x_1^{\lambda_N} & x_2^{\lambda_N} & \dots & x_N^{\lambda_N}\\
\end{vmatrix},
\end{split}
\eeq  
where $\Delta(\textbf{x})$ is the Vandermonde determinant:
\beq
\Delta(\textbf{x}) = \det\left[x_i^{N-j}\right]_{i,j=1,\dots ,N} = \prod_{1\leq i<j\leq N}(x_i-x_j).
\eeq
If $l(\lambda)>N$, then $S_\lambda=0$. The Jacobi-Trudi identities express Schur polynomials in terms of elementary and complete symmetric functions:
\beq\label{eq:jacobi_trudy}
S_\lambda = \det\left[h_{\lambda_i+j-i}\right]_{i,j=1,\dots ,l(\lambda)}=\det\left[e_{\lambda^\prime_i +j-i}\right]_{i,j=1,\dots ,l(\lambda^\prime)}.
\eeq

Let $\mu = 1^{b_1}2^{b_2}\dots k^{b_k}$ and write
\beq
p_j(\textbf{x}) = \sum_{i=1}^N x^j_i, \quad j\in\mathbb{N}.
\eeq
The \emph{power sum} is defined by
\beq
P_\mu = \prod_{j=1}^k p_j^{b_j}.
\eeq
Power sum  and Schur functions  are bases in the space of homogeneous symmetric polynomials and they are related by
\begin{equation}\label{eq:power_to_schur}
\begin{split}
P_\mu = \sum_{\lambda}\chi^\lambda_\mu S_\lambda, &\quad S_\lambda = \sum_{\mu}\frac{\chi^\lambda_\mu}{z_\mu}P_\mu,\\
z_\mu &= \prod_{j}j^{b_j}b_j!,
\end{split}
\end{equation}
where $\chi^\lambda_\mu$ are the characters of the symmetric group $\mathcal{S}_m$, $m=|\lambda| = |\mu|$ and $z_\mu$ is the size of the centraliser of an element of conjugacy class $\mu$.

\begin{proposition}[Cauchy Identity \cite{Macdonald1998}]
Let $t_1,t_2,\dots$ and $x_1,x_2,\dots$ be two finite or infinite sequences of independent variables. Then,
\beq
\prod_{i,j}(1-t_ix_j)^{-1} = \sum_{\lambda}S_\lambda(\textbf{t}) S_{\lambda}(\textbf{x}).
\eeq
\end{proposition}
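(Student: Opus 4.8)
The plan is to expand the left-hand Cauchy kernel in the power-sum basis, convert it to Schur functions by means of the change-of-basis relations in \eqref{eq:power_to_schur}, and then collapse the resulting double sum using the orthogonality of the symmetric-group characters $\chi^\lambda_\mu$. Since both sides are homogeneous of matching total degree in each set of variables, it suffices to verify the identity degree by degree, and I would organise the entire argument by the common weight $m=|\lambda|$.

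First I would take logarithms and expand $-\log(1-t_ix_j)=\sum_{k\geq1}(t_ix_j)^k/k$, which after summing over $i,j$ gives
\beq
\prod_{i,j}(1-t_ix_j)^{-1}=\exp\Big(\sum_{k\geq1}\tfrac1k\,p_k(\textbf{t})\,p_k(\textbf{x})\Big).
\eeq
Expanding the exponential as a product over $k$ and collecting the contribution of a partition $\mu=(1^{b_1}2^{b_2}\cdots)$ produces exactly the factor $\prod_j (k^{b_j}b_j!)^{-1}=z_\mu^{-1}$, so that the kernel equals $\sum_{\mu}z_\mu^{-1}P_\mu(\textbf{t})P_\mu(\textbf{x})$, the sum running over all partitions.

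Next I would insert $P_\mu=\sum_\lambda\chi^\lambda_\mu S_\lambda$ from \eqref{eq:power_to_schur} into each of the two factors and interchange the order of summation, rewriting the kernel as
\beq
\sum_{m\geq0}\sum_{\lambda,\rho\,\vdash\,m}\Big(\sum_{\mu\,\vdash\,m}\frac{\chi^\lambda_\mu\chi^\rho_\mu}{z_\mu}\Big)S_\lambda(\textbf{t})\,S_\rho(\textbf{x}).
\eeq
The bracketed inner sum is precisely the orthogonality relation for the irreducible characters of $\mathcal{S}_m$, namely $\sum_{\mu\,\vdash\,m}z_\mu^{-1}\chi^\lambda_\mu\chi^\rho_\mu=\delta_{\lambda\rho}$, so the double sum over $\lambda,\rho$ collapses onto its diagonal and leaves $\sum_\lambda S_\lambda(\textbf{t})S_\lambda(\textbf{x})$, which is the claim.

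The main obstacle is not algebraic but a matter of justifying these rearrangements for infinite sequences of variables: one should read both sides as elements of the ring of formal power series, verify that each fixed total degree $m$ receives only finitely many contributions, and invoke the grading so that the character orthogonality (which is stated for partitions of a single fixed $m$) may be applied termwise. The finite-variable statement then follows by specialising to $N$ variables, where the terms with $l(\lambda)>N$ drop out because $S_\lambda=0$; alternatively, that finite case admits a self-contained proof straight from the bialternant formula \eqref{eq:schur_poly}, by recognising $\sum_\lambda S_\lambda(\textbf{t})S_\lambda(\textbf{x})$ as $\det[(1-t_ix_j)^{-1}]/(\Delta(\textbf{t})\Delta(\textbf{x}))$ via the Cauchy--Binet formula and evaluating the resulting Cauchy determinant.
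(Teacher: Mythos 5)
Your proof is correct. Note that the paper does not actually prove this proposition: it is stated as a classical result and attributed to Macdonald \cite{Macdonald1998}, so there is no in-paper argument to compare against. Your route --- writing the kernel as $\exp\bigl(\sum_{k\geq1}k^{-1}p_k(\textbf{t})p_k(\textbf{x})\bigr)$, collecting the $z_\mu^{-1}P_\mu(\textbf{t})P_\mu(\textbf{x})$ contributions, passing to the Schur basis via \eqref{eq:power_to_schur}, and collapsing the double sum with the character orthogonality $\sum_{\mu\vdash m}z_\mu^{-1}\chi^\lambda_\mu\chi^\rho_\mu=\delta_{\lambda\rho}$ --- is the standard proof, every step checks out, and your attention to the formal-power-series grading and to the specialisation to $N$ variables (where $S_\lambda=0$ for $l(\lambda)>N$) is exactly the right way to make the infinite-variable manipulations rigorous.
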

When the sequences $t_i$ and $x_j$ are finite,
\beq\label{eq:cauchy_identity}
\prod_{i=1}^p\prod_{j=1}^q(1-t_ix_j)^{-1} = \sum_{\lambda}S_\lambda(t_1,\dots,t_p) S_{\lambda}(x_1,\dots,x_q),
\eeq
where $\lambda$ runs over all partitions of length $l(\lambda)\leq$ min($p,q$). We also have the dual Cauchy identity \cite{Macdonald1998}:
\beq\label{eq:dual_cauchy_identity}
\prod_{i=1}^p\prod_{j=1}^q(1+t_ix_j) = \sum_{\lambda}S_{\lambda}(t_1,\dots,t_p)S_{\lambda^\prime}(x_1,\dots ,x_q).
\eeq
Since $S_\lambda =0$ or $S_{\lambda^\prime} =0$ unless $l(\lambda)\leq p$ or $l(\lambda^\prime)\leq q$, $\lambda$ runs over a finite number of partitions such that the Young diagram of $\lambda$ fits inside a $p\times q$ rectangle. 

Changing $x_j\rightarrow x_j^{-1}$ in the dual Cauchy identity and simplifying the fractions gives
\beq\label{eq:dual_cauchy_identity_a}
\prod_{i=1}^p\prod_{j=1}^q(t_i+x_j) = \sum_{\lambda\subseteq (q^p)} S_\lambda(t_1,\dots,t_p)S_{\tilde{\lambda}}(x_1,\dots ,x_q),
\eeq
where $(q^p)\equiv (\underbrace{q,\dots,q}_{p})$ and $\tilde{\lambda}=(p-\lambda^\prime_q,\dots,p-\lambda^\prime_1)$. Since the Schur polynomials are homogeneous, we have
\beq
S_\mu(-x_1,\dots, -x_q) = (-1)^{|\mu|}S_\mu(x_1,\dots,x_q).
\eeq
Thus \eqref{eq:dual_cauchy_identity_a} becomes
\beq\label{eq:char_poly_identity_schur}
\prod_{i=1}^p\prod_{j=1}^q(t_i-x_j) = \sum_{\lambda\subseteq (q^p)} (-1)^{|\tilde{\lambda}|}S_\lambda(t_1,\dots,t_p)S_{\tilde{\lambda}}(x_1,\dots ,x_q).
\eeq

The Cauchy and dual Cauchy identities, combined  with  the fact that the Schur polynomials are the characters of $U(N)$, were essentials tools in the proofs of the ratios of characteristic polynomials by Bump and Gamburd~\cite{Bump2006}. In order to prove Thm.~\ref{thm:correlations_charpoly}, we use a similar approach, in which~\eqref{eq:char_poly_identity_schur} is replaced by the generalized Cauchy dual identity~\eqref{eq:new_identity}. 

\subsection{Multivariate orthogonal polynomials}
Multivariate orthogonal polynomials can be defined by the determinant formula~\cite{Sergeev2014} 
\beq\label{eq:gen_poly}
\begin{split}
\varPhi_\mu(\textbf{x}) := \frac{1}{\Delta(\textbf{x})}
\begin{vmatrix}
\varphi_{\mu_1+N-1}(x_1) & \varphi_{\mu_1+N-1}(x_2) & \dots & \varphi_{\mu_1+N-1}(x_N)\\
\varphi_{\mu_2+N-2}(x_1) & \varphi_{\mu_2+N-2}(x_2) & \dots & \varphi_{\mu_2+N-2}(x_N)\\
\vdots & \vdots & & \vdots\\
\varphi_{\mu_N}(x_1) & \varphi_{\mu_N}(x_2) & \dots & \varphi_{\mu_N}(x_N)
\end{vmatrix},
\end{split}
\eeq
where  $l(\mu)\leq N$ and $\varphi_i, i=0,1,\dotsc$, are a sequence of polynomials orthogonal with respect to the weight $w(x)$.   One  can check by straightforward substitution that, up-to a constant, the  multivariate polynomials~\eqref{eq:gen_poly} coincide with those in~\eqref{eq:mul_poly}.  When $\varphi_j$ in \eqref{eq:gen_poly} are the Hermite, Laguerre and Jacobi polynomials we have the multivariate generalizations $\mathcal{H}_\mu$, $\mathcal{L}^{(\gamma)}_\mu$ and $\mathcal{J}^{(\gamma_1,\gamma_2)}_\mu$. These polynomials can be expressed as a linear combination of  Schur polynomials, i.e.
\beq
\varPhi_\mu(\textbf{x}) = \sum_{\nu \subseteq \mu} \kappa_{\mu \nu} S_\nu(\textbf{x}).
\eeq
For the Hermite, Laguerre and Jacobi multivariate polynomials we set the leading coefficient $\kappa_{\mu\mu}$ in consistency with the definitions~\eqref{eq:classical_hlj_ortho} and \eqref{eq:gen_poly}, 
\beq\label{eq:normalisations_mops}
\begin{split}
&\kappa_{\mu\mu}^{(H)} =1, \quad \kappa_{\mu\mu}^{(L)}= \frac{(-1)^{|\lambda|+\frac{1}{2}N(N-1)}}{G_\lambda(N,0)},\\
& \kappa_{\mu\mu}^{(J)} = \frac{(-1)^{|\lambda|+\frac{1}{2}N(N-1)}}{G_\lambda(N,\gamma_1+\gamma_2)G_\lambda(N,0)}\prod_{j=1}^N\Gamma(2N+2\lambda_j-2j+\gamma_1+\gamma_2+1).
\end{split}
\eeq

The  analogy between multivariate orthogonal polynomials  and Schur  functions becomes apparent by comparing definitions~\eqref{eq:schur_poly} with~\eqref{eq:gen_poly}.  In the literature the polynomials~\eqref{eq:gen_poly} are called generalised orthogonal polynomials or multivariate orthogonal polynomials~\cite{Baker1997} as well as generalised Schur polynomials~\cite{Sergeev2014}.  The multivariate orthogonal polynomials also satisfy a generalization of the Jacobi-Trudi identities \cite{Sergeev2014} similar to \eqref{eq:jacobi_trudy}.

The classical Hermite, Laguerre and Jacobi polynomials satisfy second order Sturm Liouville problems.  Similarly, their multivariate generalizations are eigenfunctions of second-order partial differential operators, known as Calogero–Sutherland Hamiltonians,
\beq\label{eq:calegero_de}
\begin{split}
H^{(H)} &= \sum_{j=1}^N\left(\frac{\partial^2 }{\partial x^2_j} - x_j\frac{\partial}{\partial x_j}\right) +2\sum_{\substack{j,k=1\\k\neq j}}^N\frac{1}{x_j-x_k}\frac{\partial}{\partial x_j}\\
H^{(L)}&= \sum_{j=1}^N\left(x_j\frac{\partial^2}{\partial x_j^2} + (\gamma-x_j+1)\frac{\partial}{\partial x_j}\right) + 2\sum_{\substack{j,k=1\\k\neq j}}^N\frac{x_j}{x_j-x_k}\frac{\partial}{\partial x_j}\\
H^{(J)} &= \sum_{j=1}^N\left(x_j(1-x_j)\frac{\partial^2}{\partial x_j^2} + \left(\gamma_1+1-x_j(\gamma_1+\gamma_2+2)\right)\frac{\partial}{\partial x_j}\right) + 2\sum_{\substack{j,k=1\\k\neq j}}^N\frac{x_j(1-x_j)}{x_j-x_k}\frac{\partial}{\partial x_j}
\end{split}
\eeq
These generalised orthogonal polynomials obey similar properties to their univariate counterparts~\cite{Baker1997}. The differential equations in \eqref{eq:calegero_de} are also related to the Dyson Brownian motion.

\section{Correlation functions of characteristic polynomials}\label{sec:correlation_functions}

The main tool to compute correlations of characteristic polynomials and spectral moments is Lemma.~\ref{lemma:dualcauchy_to_mulmonicpoly} which we prove here.

\begin{proposition}[Laplace Expansion]\label{prop:laplace_expansion}
Let $\Xi_{p,q}$ consist of all permutations $\sigma\in\mathcal{S}_{p+q}$ such that
\beq
\sigma(1)< \dots< \sigma(p), \quad \sigma(p+1)<\dots<\sigma(p+q).
\eeq
Let $A=a_{ij}$ be a $(p+q)\times (p+q)$ matrix, then Laplace expansion in the first $p$ rows can be written as 
\beq
\det [a_{ij}] = \sum_{\sigma\in\Xi_{p,q}}\sgn(\sigma)
\begin{vmatrix}
a_{1,\sigma(1)} &\dots &a_{1,\sigma(p)}\\
\vdots & &\vdots\\
a_{p,\sigma(1)} &\dots &a_{p,\sigma(p)}
\end{vmatrix}
\times
\begin{vmatrix}
a_{p+1,\sigma(p+1)} &\dots &a_{p+1,\sigma(p+q)}\\
\vdots & &\vdots\\
a_{p+q,\sigma(p+1)} &\dots &a_{p+q,\sigma(p+q)}
\end{vmatrix}.
\eeq
\end{proposition}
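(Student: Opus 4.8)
The plan is to obtain the identity directly from the Leibniz expansion
\[\det[a_{ij}] = \sum_{\tau\in\mathcal{S}_{p+q}}\sgn(\tau)\prod_{i=1}^{p+q}a_{i,\tau(i)},\]
by sorting the permutations $\tau$ according to the unordered set of columns $C=\{\tau(1),\dots,\tau(p)\}$ to which the first $p$ rows are matched. Each $p$-element subset $C\subseteq\{1,\dots,p+q\}$ corresponds to exactly one shuffle $\sigma\in\Xi_{p,q}$, namely the unique $\sigma$ with $\{\sigma(1),\dots,\sigma(p)\}=C$ obtained by listing $C$ and its complement in increasing order; since $|\Xi_{p,q}|=\binom{p+q}{p}$, these subsets are in bijection with $\Xi_{p,q}$.

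The key step is a bijective factorization of permutations. Given $\tau\in\mathcal{S}_{p+q}$, set $C=\{\tau(1),\dots,\tau(p)\}$ and let $\sigma\in\Xi_{p,q}$ be the associated shuffle. Then $\rho:=\sigma^{-1}\tau$ fixes each of the two blocks $\{1,\dots,p\}$ and $\{p+1,\dots,p+q\}$ setwise, hence is block-diagonal: it acts as some $\alpha\in\mathcal{S}_p$ on the first block and some $\beta\in\mathcal{S}_q$ on the second. Conversely each triple $(\sigma,\alpha,\beta)$ recovers a distinct $\tau=\sigma\circ\rho$, and the count $\binom{p+q}{p}\,p!\,q!=(p+q)!$ confirms that $(\sigma,\alpha,\beta)\mapsto\tau$ is a bijection $\Xi_{p,q}\times\mathcal{S}_p\times\mathcal{S}_q\to\mathcal{S}_{p+q}$. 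Because $\rho$ mixes no indices across the two blocks, its sign is $\sgn(\alpha)\sgn(\beta)$, so $\sgn(\tau)=\sgn(\sigma)\sgn(\alpha)\sgn(\beta)$; and since $\tau(i)=\sigma(\alpha(i))$ for $i\le p$ and $\tau(i)=\sigma(\beta(i))$ for $i>p$, the weight $\prod_i a_{i,\tau(i)}$ splits into a product over the two blocks.

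Substituting this factorization into the Leibniz sum and grouping the $\alpha$- and $\beta$-sums gives
\[\det[a_{ij}]=\sum_{\sigma\in\Xi_{p,q}}\sgn(\sigma)\Big(\sum_{\alpha\in\mathcal{S}_p}\sgn(\alpha)\prod_{i=1}^p a_{i,\sigma(\alpha(i))}\Big)\Big(\sum_{\beta\in\mathcal{S}_q}\sgn(\beta)\prod_{i=p+1}^{p+q} a_{i,\sigma(\beta(i))}\Big).\]
It then remains only to recognize each inner sum as a determinant. Writing $b_{ij}=a_{i,\sigma(j)}$, the first bracket is exactly the Leibniz expansion of the $p\times p$ minor on rows $1,\dots,p$ and columns $\sigma(1),\dots,\sigma(p)$, while the second bracket is the $q\times q$ minor on rows $p+1,\dots,p+q$ and the complementary columns $\sigma(p+1),\dots,\sigma(p+q)$. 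This is precisely the claimed expansion.

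I expect the only real obstacle to be careful bookkeeping rather than any genuine difficulty: one must check that $\sigma$ is uniquely determined by $C$, that $\rho=\sigma^{-1}\tau$ is genuinely block-diagonal, and that the sign is multiplicative across the factorization. Once the bijection $\Xi_{p,q}\times\mathcal{S}_p\times\mathcal{S}_q\cong\mathcal{S}_{p+q}$ is pinned down, the rest is a routine reindexing of the Leibniz formula, with no analytic content.
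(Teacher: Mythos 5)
Your proof is correct. Note, however, that the paper itself gives no proof of this proposition: immediately after stating Prop.~\ref{prop:combinatorial_fact}, the authors simply remark that Prop.~\ref{prop:laplace_expansion} ``is a fact from Linear algebra'' and use it without argument in the proof of Lemma~\ref{lemma:dualcauchy_to_mulmonicpoly}. So there is no proof in the paper to compare against; what you have done is supply the standard argument the authors take for granted. Your route --- the Leibniz expansion, grouping permutations $\tau$ by the column set $C=\{\tau(1),\dots,\tau(p)\}$, and the factorization $\tau=\sigma\rho$ with $\sigma\in\Xi_{p,q}$ the unique shuffle satisfying $\sigma(\{1,\dots,p\})=C$ and $\rho$ block-diagonal --- is the canonical textbook proof of the generalized Laplace expansion, and the key points are all in place: $\rho=\sigma^{-1}\tau$ genuinely preserves both blocks, the sign is multiplicative across the factorization since the two block permutations commute and act on disjoint sets, and the inner sums over $\alpha\in\mathcal{S}_p$ and $\beta\in\mathcal{S}_q$ (the latter read as permutations of $\{p+1,\dots,p+q\}$, which leaves the sign unchanged under the shift identification) are exactly the Leibniz expansions of the two complementary minors. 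The only cosmetic remark is that the cardinality check $\binom{p+q}{p}\,p!\,q!=(p+q)!$ is redundant: your explicit constructions of the forward map $(\sigma,\alpha,\beta)\mapsto\tau$ and of its inverse $\tau\mapsto(\sigma,\alpha,\beta)$ already establish the bijection without any counting.
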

\begin{proposition}\label{prop:combinatorial_fact}
Let $\lambda$ be a partition such that $\lambda_1\leq q$ and $\lambda_1^\prime\leq p$. Then the $p+q$ numbers 
\beq
\lambda_i+p-i\, (1\leq i\leq p) ,\quad p-1+j-\lambda_j^\prime\, (1\leq j\leq q)
\eeq
are a permutation of $\{0,\dots,p+q-1\}$.
\end{proposition}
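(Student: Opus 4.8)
The plan is to recognise these $p+q$ numbers as the two halves of a counting argument: I will show that every one of them lies in $\{0,\dots,p+q-1\}$ and that they are pairwise distinct. Since this target set has exactly $p+q$ elements, any $p+q$ distinct elements of it must exhaust it, so the listed numbers automatically form a permutation of $\{0,\dots,p+q-1\}$. Throughout I pad $\lambda$ with zeros so that it has exactly $p$ parts (legitimate since $\lambda_1'\le p$ means $\ell(\lambda)\le p$) and pad $\lambda'$ so that it has exactly $q$ parts (since $\lambda_1\le q$ means $\ell(\lambda')\le q$); thus $\lambda$ sits inside the $p\times q$ box, which is the geometric content of the hypotheses.

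First I would handle the two families separately. Write $a_i=\lambda_i+p-i$ for $1\le i\le p$ and $b_j=p-1+j-\lambda_j'$ for $1\le j\le q$. Because $\lambda$ is a partition, $a_i-a_{i+1}=(\lambda_i-\lambda_{i+1})+1\ge 1$, so the $a_i$ are strictly decreasing and hence distinct; the bounds $0\le\lambda_i\le q$ give $0\le a_p$ and $a_1\le p+q-1$, placing every $a_i$ in the target set. Symmetrically $b_{j+1}-b_j=(\lambda_j'-\lambda_{j+1}')+1\ge 1$, so the $b_j$ are strictly increasing and distinct, and $0\le\lambda_j'\le p$ gives $0\le b_1$ and $b_q\le p+q-1$. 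This part is routine bookkeeping.

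The only genuine step is to show the two families are disjoint. Suppose $a_i=b_j$ for some $i,j$; rearranging gives the clean equation $\lambda_i+\lambda_j'=i+j-1$. Here I invoke the defining property of the conjugate partition: the cell $(i,j)$ lies in the Young diagram of $\lambda$ precisely when $\lambda_i\ge j$, equivalently when $\lambda_j'\ge i$. If $(i,j)$ is a cell of $\lambda$, then $\lambda_i\ge j$ and $\lambda_j'\ge i$, forcing $\lambda_i+\lambda_j'\ge i+j$; if $(i,j)$ is not a cell, then $\lambda_i\le j-1$ and $\lambda_j'\le i-1$, forcing $\lambda_i+\lambda_j'\le i+j-2$. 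Neither case can equal $i+j-1$, a contradiction. Hence the $a_i$ and $b_j$ never coincide.

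Combining the three observations, the $p+q$ numbers are distinct members of a set of size $p+q$, so they are a permutation of $\{0,\dots,p+q-1\}$, as claimed. I expect the disjointness argument to be the crux; everything else is monotonicity and range estimates. Conceptually, the cleanest explanation for why it works is the boundary-path picture: tracing the rim of $\lambda$ inside the $p\times q$ box produces a lattice path of $p$ vertical and $q$ horizontal unit steps, and the families $\{a_i\}$ and $\{b_j\}$ are exactly the positions of the vertical and horizontal steps among the $p+q$ steps, which partition $\{0,\dots,p+q-1\}$ by construction. The algebraic proof above is simply this bijection made explicit, and it is the version I would write out in full.
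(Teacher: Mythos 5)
Your proof is correct and complete: the range bounds, the strict monotonicity of each family, and the disjointness step via the equivalence $(i,j)\in\lambda \iff \lambda_i\geq j \iff \lambda_j'\geq i$ all check out, and the counting conclusion follows. The paper does not prove this proposition itself but defers to Macdonald, where essentially this same argument (equivalently, your boundary-path picture) is the standard one, so your write-up matches the intended proof.
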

Prop.~\ref{prop:laplace_expansion} is a fact from Linear algebra and the proof of Prop.~\ref{prop:combinatorial_fact} can be found in \cite{Macdonald1998}.

\begin{proof}[Proof of Lemma.~\ref{lemma:dualcauchy_to_mulmonicpoly}]
Let the  $\varphi_j$s be monic.  Choosing a different normalization would affect Eq.~\eqref{eq:new_identity} by an overall constant.  Using the definition of generalised polynomials, Proposition~\ref{prop:laplace_expansion} and 
Proposition~\ref{prop:combinatorial_fact}, the right-hand side of \eqref{eq:new_identity} can be written as
\beq
\frac{1}{\Delta_p(\textbf{t})}\frac{1}{\Delta_q(\textbf{x})}
\begin{vmatrix}\label{eq:dummy_1}
\varphi_{p+q-1}(t_1) &\varphi_{p+q-2}(t_1) &\dots &1\\
\vdots &\vdots & &\vdots\\
\varphi_{p+q-1}(t_p) &\varphi_{p+q-2}(t_p) &\dots &1\\
\varphi_{p+q-1}(x_1) &\varphi_{p+q-2}(x_1) &\dots &1\\
\vdots &\vdots & &\vdots\\
\varphi_{p+q-1}(x_q) &\varphi_{p+q-2}(x_q) &\dots &1
\end{vmatrix}.
\eeq
Now using column operations we  arrive at
\beq\label{eq:dummy_2}
\frac{1}{\Delta_p(\textbf{t})}\frac{1}{\Delta_q(\textbf{x})}
\begin{vmatrix}
t^{p+q-1}_1 &t^{p+q-2}_1 &\dots &1\\
\vdots &\vdots & &\vdots\\
t^{p+q-1}_p &t^{p+q-2}_p &\dots &1\\
x^{p+q-1}_1 &x^{p+q-2}_1 &\dots &1\\
\vdots &\vdots & &\vdots\\
x^{p+q-1}_q &x^{p+q-2}_q &\dots &1
\end{vmatrix}.
\eeq
The determinant in \eqref{eq:dummy_2} can be evaluated using the formula for the Vandermonde determinant.  We have 
\beq\label{eq:dummy_3}
\prod_{1\leq i<j\leq p}(t_i-t_j)\prod_{1\leq i<j\leq q}(x_i-x_j)\prod_{i=1}^p\prod_{j=1}^q(t_i-x_j).
\eeq
Combining \cref{eq:dummy_1,eq:dummy_2,eq:dummy_3} proves the lemma.
\end{proof}
If $\varphi_j(-x) = (-1)^j\varphi_j(x)$, as for Hermite polynomials,  then 
\beq
\varPhi_\mu (-x_1,\dots, -x_N) = (-1)^{|\mu|}\varPhi_\mu(x_1,\dots,x_N).
\eeq
It follows that~\eqref{eq:new_identity} becomes
\beq
\prod_{i=1}^p\prod_{j=1}^q(t_i+x_j) = \sum_{\lambda\subseteq (q^p)} \varPhi_\lambda(t_1,\dots,t_p)\varPhi_{\tilde{\lambda}}(x_1,\dots ,x_q).
\eeq

\begin{proof}[Proof of Thm.~\ref{thm:correlations_charpoly}]
Unlike Hermite polynomials, the univariate Laguerre and Jacobi polynomials that obey \eqref{eq:classical_hlj_ortho} are not monic.  This fact is reflected in the normalisation in \eqref{eq:normalisations_mops} and also in the following formulae,
\beq
\begin{split}
\prod_{i=1}^p\prod_{j=1}^N(t_i-x_j) &= \sum_{\lambda\subseteq (N^p)} (-1)^{|\tilde{\lambda}|}\mathcal{H}_\lambda(t_1,\dots,t_p)\mathcal{H}_{\tilde{\lambda}}(x_1,\dots ,x_N)\\
\prod_{i=1}^p\prod_{j=1}^N(t_i-x_j) &= \left(\prod_{j=0}^{p+N-1}(-1)^jj!\right)\sum_{\lambda\subseteq (N^p)} (-1)^{|\tilde{\lambda}|}\mathcal{L}^{(\gamma)}_\lambda(t_1,\dots,t_p)\mathcal{L}^{(\gamma)}_{\tilde{\lambda}}(x_1,\dots ,x_N)\\
\prod_{i=1}^p\prod_{j=1}^N(t_i-x_j) &= \left(\prod_{j=0}^{p+N-1}(-1)^jj!\frac{\Gamma(j+\gamma_1+\gamma_2+1)}{\Gamma(2j+\gamma_1+\gamma_2+1)}\right)\\
&\qquad\times\sum_{\lambda\subseteq (N^p)} (-1)^{|\tilde{\lambda}|}\mathcal{J}^{(\gamma_1,\gamma_2)}_\lambda(t_1,\dots,t_p)\mathcal{J}^{(\gamma_1,\gamma_2)}_{\tilde{\lambda}}(x_1,\dots ,x_N)
\end{split}
\eeq
After taking the expectation value, the non-zero contribution comes from $\tilde{\lambda}=0$ because of \eqref{eq:mul_poly}. Thus $\lambda^\prime=(p^N)$ which implies $\lambda=(N^p)$.
Now using
\beq
\mathcal{H}_0=1,\quad\mathcal{L}^{(\gamma)}_0=\prod_{j=0}^{N-1}\frac{(-1)^j}{j!}, \quad \mathcal{J}^{(\gamma_1,\gamma_2)}_0=\prod_{j=0}^{N-1}\frac{(-1)^j}{j!}\frac{\Gamma(2j+\gamma_1+\gamma_2+1)}{\Gamma(j+\gamma_1+\gamma_2+1)},
\eeq
proves the result.
\end{proof}

\section{Correlations}\label{sec:moments}
In this section we calculate moments of traces and characteristic polynomials of $N\times N$ GUE, LUE and JUE matrices. We can write
\beq
\label{joint_traces}
P_\mu(\textbf{x}) = \prod_{j}(\Tr(M^j))^{b_j},
\eeq
where $\textbf{x}=(x_1,\dotsc,x_N)$ are the eigenvalues of $M$.  The power symmetric functions form a basis in the space of symmetric polynomials of degree $|\mu|$.   The main idea is to express them in the basis of the multivariate orthogonal polynomials.



\subsection{Change of basis between symmetric functions}
\label{sec:change_of_basis}
In this section we give expressions for change of basis between multivariate orthogonal polynomials and other symmetric functions. We mainly focus on the GUE but the same approach can be used for the LUE and the JUE.

\noindent\textit{Gaussian Ensemble}. Let $M$ be an $N\times N$ GUE matrix. The \textit{j.p.d.f.} of the eigenvalues is
\beq
\begin{split}
\rho^{(H)}(x_1,\dots,x_N) &= \frac{1}{Z_N^{(H)}}\Delta^2(\textbf{x})\prod_{i=1}^Ne^{-\frac{x^2_i}{2}},\\
Z_N^{(H)} &= (2\pi)^{\frac{N}{2}}\prod_{j=1}^Nj!.
\end{split}
\eeq
Denote by $H_n(x)$ the Hermite polynomials normalised according to \eqref{eq:classical_hlj_ortho1}.
Given a partition $\lambda$ with $l(\lambda)\le N$, the multivariate Hermite polynomials are given by
\beq\label{eq:mulher_def}
\begin{split}
\mathcal{H}_\lambda(\textbf{x}) = \frac{1}{\Delta(\textbf{x})}
\begin{vmatrix}
H_{\lambda_1+N-1}(x_1) & H_{\lambda_1+N-1}(x_2) & \dots & H_{\lambda_1+N-1}(x_N)\\
H_{\lambda_2+N-2}(x_1) & H_{\lambda_2+N-2}(x_2) & \dots & H_{\lambda_2+N-2}(x_N)\\
\vdots & \vdots & & \vdots\\
H_{\lambda_N}(x_1) & H_{\lambda_N}(x_2) & \dots & H_{\lambda_N}(x_N)
\end{vmatrix}
\end{split}
\eeq
and satisfy the orthogonality relation
\beq\label{eq:mulher_norm}
\begin{split}
\left\langle \mathcal{H}_\lambda,\mathcal{H}_\mu\right\rangle:=&\frac{1}{Z^{(H)}_N}\int_{\mathbb{R}^N}\mathcal{H}_\lambda(\textbf{x}) \mathcal{H}_\mu(\textbf{x})\Delta^2(\textbf{x})\prod_{i}e^{-\frac{x_i^2}{2}}\,dx_i = C_\lambda(N) \delta_{\mu\lambda},\\
&\qquad\qquad\quad C_\lambda(N) = \prod_{i=1}^N\frac{(\lambda_i+N-i)!}{(N-i)!}.
\end{split}
\eeq
Since $\lambda_i\geq 0$, the constant $C_\lambda(N)$ is a polynomial in $N$ of degree $|\lambda|$. It turns out that it has a nice interpretation in terms of characters of the symmetric group. Let $(i,j)\in\lambda$, $1 \le  j\le \lambda_i$ denote a node in the Young diagram of $\lambda$. The roots of $C_\lambda(N)$ are $i-j$,  where $i$ runs across the rows from top to bottom and $j$ across the columns from left to right of the Young diagram. For example, if $\lambda = (4,3,3,1)$, the roots of $C_\lambda(N)$ are 
\beq
\ytableausetup{centertableaux}
\begin{ytableau}
0 & -1 & -2 & -3\\
1 & 0 & -1\\
2 & 1 & 0\\
3
\end{ytableau}
\eeq
It is shown in \cite{Keating2010} that
\beq\label{eq:C_lambda}
\begin{split}
C_\lambda(N) &=  \prod_{j=1}^{l(\lambda)}\frac{(\lambda_j+N-j)!}{(N-j)!} = \prod_{(i,j)\in\lambda}(N-i+j)\\
&=\frac{|\lambda|!}{\dim V_\lambda}\sum_{\mu\vdash|\lambda|}\frac{\chi^\lambda_\mu}{z_\mu}N^{l(\mu)} = |\lambda|!\frac{S_\lambda(1^N)}{\dim\text{$V_\lambda$}}.
\end{split}
\eeq
The constant  $z_\lambda$ is defined in \eqref{eq:power_to_schur} and  $\dim V_\lambda$ is the dimension  of the irreducible representation labelled  by $\lambda$ of the symmetric group   $\mathcal{S}_{|\lambda|}$.
\beq\label{eq:dimV_lambda}
\text{dim $V_\lambda$} = |\lambda|!\frac{\prod_{1\leq j<k\leq l(\lambda)}(\lambda_j-\lambda_k-j+k)}{\prod_{j=1}^{l(\lambda)}(\lambda_j+l(\lambda)-j)!},
\eeq
and
\beq\label{eq:schur_at_ones}
S_\lambda(1^N) = \prod_{1\leq j<k \leq N}\frac{\lambda_j-\lambda_k-j+k}{k-j}.
\eeq

Schur polynomials can be expressed in terms of multivariate Hermite polynomials,
\beq\label{eq:schur_to_mulher}
S_\lambda  =\sum_{\nu\subseteq\lambda}\psi^{(H)}_{\lambda\nu}\mathcal{H}_\nu =\sum_{j=0}^{\lfloor\frac{|\lambda|}{2}\rfloor}\sum_{\nu\vdash g(j)}\psi^{(H)}_{\lambda\nu} \mathcal{H}_\nu, \quad g(j) = \begin{cases}
2j, \qquad\,\,\,\, |\lambda|\,\,\text{is even},\\
2j+1, \quad |\lambda|\,\,\text{is odd}.
\end{cases}
\eeq
The function $g(j)$ takes care of the fact that polynomials of odd and even degree do not mix similar to the one variable case. The first summation in \eqref{eq:schur_to_mulher} running over all lower order partitions takes care of the fact that $\mathcal{H}_\lambda$ are, unlike $S_\lambda$, not homogeneous polynomials. For example, when $|\lambda|$ is even,  the only partitions that appear in \eqref{eq:schur_to_mulher} are those with weight $|\nu| = |\lambda| - 2k$, $k=0, \dots, \frac{|\lambda|}{2}$, and $\nu\subseteq\lambda$.
The following proposition gives an explicit expression for the coefficients $\psi^{(H)}_{\lambda\nu}$.

\begin{proposition}\label{prop:schur_to mulher_coeff}
If $\lambda$ is a partition of length $L$ and $\nu$ is a sub-partition of $\lambda$ such that $|\lambda|-|\nu|=\text{0 mod 2}$ and $N\geq L$, then $\psi^{(H)}_{\lambda\nu}$ is the following  polynomial in $N$:
\beq\label{eq:schur_to_mulher_coef}
\begin{split}
\psi^{(H)}_{\lambda\nu} &= \frac{1}{2^{\frac{|\lambda|-|\nu|}{2}}}D_{\lambda\nu}^{(H)}\prod_{j=1}^L\frac{(\lambda_j+N-j)!}{(\nu_j+N-j)!},
\end{split}
\eeq
where
\beq
D_{\lambda\nu}^{(H)} = \det\left[\mathbbm{1}_{\lambda_j-\nu_k-j+k= \text{0 mod 2}}\left(\left(\frac{\lambda_j-\nu_k-j+k}{2}\right)!\right)^{-1}\right]_{j,k=1,\dots ,L}.
\eeq
\end{proposition}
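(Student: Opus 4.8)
The plan is to lift the classical univariate connection formula between monomials and Hermite polynomials to the level of the determinantal expressions \eqref{eq:schur_poly} and \eqref{eq:mulher_def} by means of the Cauchy--Binet formula. Writing $\alpha_j = \lambda_j + N - j$ for the strictly decreasing exponents in the numerator $\Delta(\textbf{x})\,S_\lambda = \det[x_k^{\alpha_j}]_{j,k}$, the first step is to record the inversion of the Hermite expansion: with $H_m$ normalised as in \eqref{eq:classical_hlj_ortho1} one has $x^n = \sum_m a_{n,m}H_m(x)$, where $a_{n,m} = \frac{n!}{2^{(n-m)/2}\,((n-m)/2)!\,m!}$ whenever $n-m$ is a non-negative even integer and $a_{n,m}=0$ otherwise. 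Adopting the convention $1/(\text{negative integer})! = 0$, the single parity indicator already encodes the vanishing for $n<m$. I would obtain this directly from the generating function of the $H_m$.

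Second, I would substitute this into each row of $\det[x_k^{\alpha_j}]$, which factorises the matrix as a product $AH$ with $A_{jm} = a_{\alpha_j,m}$ and $H_{mk} = H_m(x_k)$. Applying Cauchy--Binet selects $N$-element index sets $\{m_1 < \dots < m_N\}$; writing such a set as $\{\nu_k + N - k\}$ for a partition $\nu$ identifies the corresponding minor of $H$ with $\pm\,\Delta(\textbf{x})\,\mathcal{H}_\nu$ and the minor of $A$ with $\pm\det[a_{\lambda_j+N-j,\,\nu_k+N-k}]_{j,k}$. The two reversal signs, arising from reordering the increasing index sets natural to Cauchy--Binet into the decreasing-exponent convention of \eqref{eq:mulher_def}, are each $(-1)^{N(N-1)/2}$ and therefore cancel. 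Dividing by $\Delta(\textbf{x})$ yields $S_\lambda = \sum_\nu \det[a_{\lambda_j+N-j,\,\nu_k+N-k}]_{j,k=1}^N\,\mathcal{H}_\nu$, so that $\psi^{(H)}_{\lambda\nu} = \det[a_{\lambda_j+N-j,\,\nu_k+N-k}]_{j,k=1}^N$ and the claimed expansion basis $\nu\subseteq\lambda$ is recovered.

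Third, I would extract the stated closed form. Setting $\alpha_j = \lambda_j+N-j$, $\beta_k = \nu_k+N-k$ and using $2^{-(\alpha_j-\beta_k)/2} = 2^{-(\lambda_j-j)/2}\,2^{(\nu_k-k)/2}$, I would pull the row factor $2^{-(\lambda_j-j)/2}(\lambda_j+N-j)!$ and the column factor $2^{(\nu_k-k)/2}/(\nu_k+N-k)!$ out of the determinant. The factorials combine into $\prod_j (\lambda_j+N-j)!/(\nu_j+N-j)!$, while the powers of two telescope: since $\sum_j j = \sum_k k$ over $1,\dots,N$, the exponent collapses to $-(|\lambda|-|\nu|)/2$, producing the prefactor $2^{-(|\lambda|-|\nu|)/2}$. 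What remains inside the determinant is precisely the $N$-independent matrix $D^{(H)}_{\lambda\nu}$. For indices exceeding $L=l(\lambda)$ one has $\lambda_j=\nu_k=0$; the block with rows $>L$ and columns $\le L$ vanishes (its arguments $k-j-\nu_k$ are strictly negative there) while the block with both indices $>L$ is upper triangular with unit diagonal, so the $N\times N$ determinant reduces to the $L\times L$ determinant $D^{(H)}_{\lambda\nu}$ and the product truncates at $j=L$. Polynomiality in $N$ is then manifest from $\prod_{j=1}^L (\lambda_j+N-j)!/(\nu_j+N-j)!$, a product of $|\lambda|-|\nu|$ consecutive-integer factors.

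The main obstacle I anticipate is the sign and relabelling bookkeeping in the Cauchy--Binet step: one must consistently translate between the increasing index sets and the decreasing-exponent convention of \eqref{eq:schur_poly}--\eqref{eq:mulher_def}, and verify that the two reversal signs cancel rather than reinforce. The subsequent factorisation is routine linear algebra, but the telescoping of the powers of two and the truncation of the determinant from size $N$ to size $L$ both rely on $\nu\subseteq\lambda$ and on the convention $1/(\text{negative})! = 0$, which I would state explicitly in order to justify the vanishing of the off-diagonal block and the disappearance of the terms with $j>L$.
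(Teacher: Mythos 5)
Your argument is correct, and it reaches the paper's formula by a genuinely different mechanism. The paper computes $\psi^{(H)}_{\lambda\nu}$ as the inner product $\langle S_\lambda,\mathcal{H}_\nu\rangle/\langle\mathcal{H}_\nu,\mathcal{H}_\nu\rangle$: it expands one determinant over permutations, exploits the symmetry of the integrand, integrates out the last $N-L$ rows via Hermite orthogonality, and only then invokes the univariate connection formula \eqref{eq:mono_to_her} entrywise inside a residual $L\times L$ determinant of one-dimensional integrals. You avoid integration altogether: the same connection coefficients $a_{n,m}$ are used to factor the numerator matrix of $S_\lambda$ as a product, and Cauchy--Binet delivers the expansion $S_\lambda=\sum_{\nu}\det[a_{\alpha_j,\beta_k}]\,\mathcal{H}_\nu$ directly, with the restriction to $\nu\subseteq\lambda$ falling out of the vanishing pattern $a_{n,m}=0$ for $m>n$ (a block of zeros kills the minor whenever $\beta_k>\alpha_k$ for some $k$ --- worth a sentence to make explicit). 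The subsequent extraction of the prefactor $2^{-(|\lambda|-|\nu|)/2}\prod_j(\lambda_j+N-j)!/(\nu_j+N-j)!$ and the reduction of the determinant from size $N$ to size $L$ are identical in both treatments. What your route buys is independence from the orthogonality relation \eqref{eq:mulher_norm} (it is purely a change-of-basis computation in the polynomial ring, so it does not presuppose knowledge of $C_\nu(N)$) and cleaner sign bookkeeping, since the two reversal signs $(-1)^{N(N-1)/2}$ visibly cancel; what the paper's route buys is that the same template transfers verbatim to the Laguerre and Jacobi cases, where the authors reuse it for \eqref{eq:schur_to_mullag_coef} and \eqref{eq:schur_to_muljac_coef} --- though your Cauchy--Binet argument would transfer equally well given the corresponding univariate connection formulae.
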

\begin{proof}
Let $\lambda = (\lambda_1,\dots,\lambda_L,0,\dots,0)$ and $\nu=(\nu_1,\dots,\nu_l,0,\dots,0)$. Here $l$ is the length of $\nu$ and $N-l$ is the length of the sequence of zeros added to $\nu$. From \eqref{eq:mulher_norm} and the fact that $\nu\subseteq\lambda$, $l\leq L$, it follows that 
\beq
\begin{split}
\psi^{(H)}_{\lambda\nu} = & \frac{\left\langle S_\lambda,\mathcal{H}_\nu\right\rangle}{\left\langle\mathcal{H}_\nu,\mathcal{H_\nu} \right\rangle} = \frac{1}{Z^{(H)}_N\left\langle \mathcal{H}_\nu,\mathcal{H}_\nu\right\rangle}\int_{\mathbb{R}^N}S_\lambda(\textbf{x})\mathcal{H}_\nu(\textbf{x})\Delta_N^2(\textbf{x})\prod_{i=1}^Ne^{-\frac{x_i^2}{2}}\,dx_i,\\
=&\frac{1}{Z^{(H)}_N\left\langle \mathcal{H}_\nu,\mathcal{H}_\nu\right\rangle}\int_{\mathbb{R}^N}\prod_{i=1}^Ne^{-\frac{x_i^2}{2}}\,dx_i\\
&\quad\times
\begin{vmatrix}
x_1^{\lambda_1+N-1} &\dots &x_1^{\lambda_L+N-L} &H_{N-L-1}(x_1) &\dots &1\\
x_2^{\lambda_1+N-1} &\dots &x_2^{\lambda_L+N-L} &H_{N-L-1}(x_2) &\dots &1\\
\vdots & &\vdots &\vdots & &\vdots\\
x_N^{\lambda_1+N-1} &\dots &x_N^{\lambda_L+N-L} &H_{N-L-1}(x_N) &\dots &1
\end{vmatrix}\\
&\quad\times
\begin{vmatrix}
H_{\nu_1+N-1}(x_1) &\dots &H_{\nu_l+N-l}(x_1) &H_{N-l-1}(x_1) &\dots &1\\
H_{\nu_1+N-1}(x_2) &\dots &H_{\nu_l+N-l}(x_2) &H_{N-l-1}(x_2)&\dots &1\\
\vdots & &\vdots &\vdots & &\vdots\\
H_{\nu_1+N-1}(x_N) &\dots &H_{\nu_l+N-l}(x_N) &H_{N-l-1}(x_N)&\dots &1
\end{vmatrix}.
\end{split}
\eeq
The last $N-L$ and $N-l$ columns in $S_\lambda$ and in $\mathcal{H}_\nu$, respectively, are written in terms of the Hermite polynomials using column operations. In addition,  $\psi^{(H)}_{\lambda\nu}$ can be expanded as a sum over the permutations of $N$:
\beq
\begin{split}
&\psi^{(H)}_{\lambda\nu}\\
&=\frac{1}{Z^{(H)}_N\left\langle \mathcal{H}_\nu,\mathcal{H}_\nu\right\rangle}\sum_{\sigma\in\mathcal{S}_N}\text{sgn}(\sigma)\int_{\mathbb{R}^N}\prod_{i=1}^Ne^{-\frac{x_i^2}{2}}\,dx_i\,\left(x^{\lambda_1+N-1}_{\sigma(1)}\dots x^{\lambda_L+N-L}_{\sigma(L)}H_{N-L-1}(x_{\sigma(N-L-1)})\dots H_0(x_{\sigma(0)})\right)\\
&\qquad\qquad\qquad\qquad\qquad\qquad\qquad\times
\begin{vmatrix}
H_{\nu_1+N-1}(x_1) &\dots &H_{\nu_l+N-l}(x_1) &H_{N-l-1}(x_1) &\dots &1\\
H_{\nu_1+N-1}(x_2) &\dots &H_{\nu_l+N-l}(x_2) &H_{N-l-1}(x_2)&\dots &1\\
\vdots & &\vdots &\vdots & &\vdots\\
H_{\nu_1+N-1}(x_N) &\dots &H_{\nu_l+N-l}(x_N) &H_{N-l-1}(x_N)&\dots &1
\end{vmatrix}.
\end{split}
\eeq
Since the integrand is symmetric in $x_i$, every term in the above sum gives the same contribution and it is sufficient to consider only the identity permutation. All the factors can be absorbed into the determinant by multiplying the $j^{th}$ row with $x_j^{\lambda_j+N-j}$ if $j\leq L$ and with $H_{N-j}(x_{N-j})$ if $N\geq j>L$.
Then using orthogonality of Hermite polynomials \eqref{eq:classical_hlj_ortho1} for the last $N-L$ rows gives
\beq
\begin{split}
\psi^{(H)}_{\lambda\nu} &= \frac{N!}{Z^{(H)}_N\left\langle \mathcal{H}_\nu,\mathcal{H}_\nu\right\rangle}(2\pi)^{\frac{N-L}{2}}\prod_{i=L+1}^N(N-i)!\,\text{det}\left[\int_{\mathbb{R}}x_j^{\lambda_j+N-j}H_{\nu_k+N-k}(x_j)e^{-\frac{x^2_j}{2}\,dx_j}\right]_{j,k=1,\dots , L}.
\end{split}
\eeq
Expanding monomials in terms of Hermite polynomials with the formula
\beq\label{eq:mono_to_her}
x^n = n!\sum_{m=0}^{\lfloor\frac{n}{2}\rfloor}\frac{1}{2^mm!(n-2m)!}H_{n-2m}(x)
\eeq
and using orthogonality leads to \eqref{eq:schur_to_mulher_coef}. The determinant $D_{\lambda\nu}^{(H)}$ is independent of $N$ and $\psi^{(H)}_{\lambda\nu}$ is a polynomial in $N$, since $\nu\subseteq\lambda$.
\end{proof}

\begin{corollary}\label{corollary:roots_of_coeff_schur_to_mulher}
The roots of coefficients $\psi^{(H)}_{\lambda\nu}$ are integers given by the content of the skew diagram $\lambda/\nu$.
\end{corollary}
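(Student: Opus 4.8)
The plan is to read the roots directly off the closed form for $\psi^{(H)}_{\lambda\nu}$ obtained in Proposition~\ref{prop:schur_to mulher_coeff}. There the only factor that depends on $N$ is the product
\beq
\prod_{j=1}^L\frac{(\lambda_j+N-j)!}{(\nu_j+N-j)!},
\eeq
since the prefactor $2^{-(|\lambda|-|\nu|)/2}D_{\lambda\nu}^{(H)}$ is, by construction, a constant independent of $N$ (the determinant $D_{\lambda\nu}^{(H)}$ involves only $\lambda$ and $\nu$). Hence every root of $\psi^{(H)}_{\lambda\nu}$, viewed as a polynomial in $N$, must come from this product, and the whole problem reduces to factorising it.

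First I would expand each ratio of factorials into linear factors. Because $\nu\subseteq\lambda$ forces $\nu_j\le\lambda_j$, we have
\beq
\frac{(\lambda_j+N-j)!}{(\nu_j+N-j)!}=\prod_{m=\nu_j+1}^{\lambda_j}(N+m-j),
\eeq
a product of $\lambda_j-\nu_j$ consecutive linear factors in $N$. Taking the product over $j=1,\dots,L$ then gives a polynomial of degree $\sum_j(\lambda_j-\nu_j)=|\lambda|-|\nu|$, which is exactly the number of cells of the skew diagram $\lambda/\nu$; this already matches the expected number of roots counted with multiplicity.

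Next I would match the factors to the cells of $\lambda/\nu$. The factor $(N+m-j)$ vanishes precisely at $N=j-m$, and the index pairs $(j,m)$ with $1\le j\le L$ and $\nu_j<m\le\lambda_j$ are exactly the cells of $\lambda/\nu$, with $j$ the row and $m$ the column. In the convention fixed above for the roots of $C_\lambda(N)$, the content of the cell $(j,m)$ is $j-m$, so the multiset of roots $\{\,j-m:(j,m)\in\lambda/\nu\,\}$ coincides with the multiset of contents of $\lambda/\nu$. This proves the corollary.

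I do not expect a substantive obstacle here: once the formula of Proposition~\ref{prop:schur_to mulher_coeff} is available, the statement is essentially a rewriting of the $N$-dependent factorials as linear factors together with a matching of the row/column bookkeeping to the content convention. The only points needing care are confirming that $D_{\lambda\nu}^{(H)}$ and the power of $2$ are genuinely $N$-independent---so that they contribute no roots and $\psi^{(H)}_{\lambda\nu}$ is a nonzero polynomial whenever $D_{\lambda\nu}^{(H)}\ne 0$---and keeping the sign convention for content consistent with that used for $C_\lambda(N)$.
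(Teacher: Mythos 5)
Your proof is correct and follows essentially the same route as the paper: the paper writes $\psi^{(H)}_{\lambda\nu}=2^{-(|\lambda|-|\nu|)/2}\,\frac{C_\lambda(N)}{C_\nu(N)}\,D^{(H)}_{\lambda\nu}$ and reads the roots off the content-product factorisation $C_\lambda(N)=\prod_{(i,j)\in\lambda}(N-i+j)$, which is exactly your expansion of $\prod_j (\lambda_j+N-j)!/(\nu_j+N-j)!$ into the linear factors indexed by the cells of $\lambda/\nu$. Your explicit caveats (the $N$-independence of $D^{(H)}_{\lambda\nu}$ and the proviso $D^{(H)}_{\lambda\nu}\neq 0$) match the paper's own qualifications.
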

\begin{proof}
The skew diagram $\lambda/\nu$ is the set-theoretic difference of the Young diagrams of $\lambda$ and $\nu$: the set of squares that belong to the diagram of $\lambda$ but not to that of $\nu$. Using \eqref{eq:C_lambda},
\beq
\psi^{(H)}_{\lambda\nu} = \frac{1}{2^{\frac{|\lambda|-|\nu|}{2}}}\frac{C_\lambda(N)}{C_\nu(N)}D_{\lambda\nu}^{(H)}.
\eeq
Since $\nu\subseteq\lambda$, the roots of $\psi^{(H)}_{\lambda\nu}$ are integers and can be read from the skew diagram $\lambda/\nu$ whenever $D_{\lambda\nu}^{(H)}\neq 0$. For example, if $\lambda=(4,1,1)$ and $\nu = (2)$, then the roots of $\psi^{(H)}_{\lambda\nu}$ are $\{-3,-2,1,2\}$:
\beq
\ytableausetup{nosmalltableaux}
\begin{ytableau}
*(gray) 0 & *(gray) -1 & *(white) -2 &*(white) -3 \\
*(white) 1 \\
*(white) 2
\end{ytableau}
\eeq
\end{proof}

\begin{corollary}
The coefficient $\psi^{(H)}_{\lambda\lambda}=1$.
\end{corollary}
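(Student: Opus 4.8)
The plan is to specialize the formula~\eqref{eq:schur_to_mulher_coef} for $\psi^{(H)}_{\lambda\nu}$ to the diagonal case $\nu=\lambda$ and to check that each of its three factors equals $1$. Setting $\nu=\lambda$ gives $|\lambda|-|\nu|=0$, so the prefactor $2^{-(|\lambda|-|\nu|)/2}$ collapses to $1$, and the product $\prod_{j=1}^{L}(\lambda_j+N-j)!/(\lambda_j+N-j)!$ is trivially $1$. The claim therefore reduces to showing that the combinatorial determinant satisfies $D_{\lambda\lambda}^{(H)}=1$.

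To evaluate $D_{\lambda\lambda}^{(H)}$, I would show that the underlying $L\times L$ matrix, whose $(j,k)$ entry is $((\tfrac{\lambda_j-\lambda_k-j+k}{2})!)^{-1}$ when $\lambda_j-\lambda_k-j+k$ is a non-negative even integer and $0$ otherwise, is upper-triangular with unit diagonal. On the diagonal $j=k$ the relevant quantity is $\lambda_j-\lambda_j-j+j=0$, which is even and non-negative, so the entry equals $(0!)^{-1}=1$. For the sub-diagonal entries $j>k$, the key observation is that $\lambda$ is a partition, so $\lambda_j\le\lambda_k$, while $k-j<0$; hence $\lambda_j-\lambda_k-j+k<0$ strictly, and the entry vanishes. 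Once the matrix is recognized as upper-triangular with all diagonal entries equal to $1$, its determinant is the product of these diagonal entries, so $D_{\lambda\lambda}^{(H)}=1$ and therefore $\psi^{(H)}_{\lambda\lambda}=1$.

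The main subtlety I anticipate lies in justifying the vanishing of the sub-diagonal entries: one must be sure that the convention assigning $0$ to an entry with negative argument is exactly what the monomial-to-Hermite expansion~\eqref{eq:mono_to_her} dictates, since that formula produces only Hermite polynomials of degrees $n,n-2,n-4,\dots$, i.e. with summation index $m=\tfrac{\lambda_j-\lambda_k-j+k}{2}\ge 0$. The strict inequality $\lambda_j-\lambda_k-j+k<0$ for $j>k$ then guarantees $m<0$, so no such term occurs and the entry is genuinely $0$ rather than merely small.

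Alternatively, I could bypass the determinant evaluation entirely using a leading-term argument. By~\eqref{eq:normalisations_mops} the multivariate Hermite polynomial has $\kappa^{(H)}_{\lambda\lambda}=1$, so the top-degree homogeneous part of $\mathcal{H}_\lambda$ is exactly the monic Schur polynomial $S_\lambda$. In the expansion $S_\lambda=\sum_{\nu\subseteq\lambda}\psi^{(H)}_{\lambda\nu}\mathcal{H}_\nu$ of~\eqref{eq:schur_to_mulher}, only the term $\nu=\lambda$ contributes at top degree $|\lambda|$, and its leading part is $\psi^{(H)}_{\lambda\lambda}S_\lambda$. Comparing with the left-hand side, whose leading part is $S_\lambda$, forces $\psi^{(H)}_{\lambda\lambda}=1$. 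This route is shorter but relies on the earlier normalization, whereas the triangularity argument has the advantage of being self-contained from the explicit formula~\eqref{eq:schur_to_mulher_coef}.
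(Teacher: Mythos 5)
Your main argument is correct and is essentially the paper's proof: the paper evaluates the same determinant via the integral representation and the monomial-to-Hermite expansion \eqref{eq:mono_to_her}, observing that "only the diagonal terms survive," which is exactly your triangularity observation that the $(j,k)$ entry vanishes for $j>k$ because $\lambda_j-\lambda_k-j+k<0$, while the diagonal entries equal $(0!)^{-1}=1$. Your alternative leading-term argument via $\kappa^{(H)}_{\lambda\lambda}=1$ is also sound, but it is not the route the paper takes.
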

\begin{proof}
If $\nu=\lambda$,
	\beq
	\begin{split}
	\psi^{(H)}_{\lambda\lambda} &= \frac{N!}{Z^{(H)}_N\left\langle \mathcal{H}_\lambda,\mathcal{H}_\lambda\right\rangle}\det\left[\int_{\substack{\mathbb{R}}}x_j^{\lambda_j+N-j}H_{\lambda_k+N-k}(x_j)e^{-\frac{x_j^2}{2}}\,dx_j\right]_{j=1,\dots, N}.
	\end{split}
	\eeq
	By expanding monomials in terms of Hermite polynomials only the diagonal terms survive.
\end{proof}

\begin{proposition}\label{prop:psi_lambda0}
The coefficient 
\beq\label{eq:psi_lambda0}
\psi^{(H)}_{\lambda 0} = 
\begin{cases}
\frac{C_\lambda(N)}{2^{\frac{|\lambda|}{2}}\frac{|\lambda|}{2}!}\chi^\lambda_{(2^{|\lambda|/2})}, \quad \text{$|\lambda|$ is even,}\\
0, \qquad\qquad\qquad \text{$|\lambda|$ is odd,}
\end{cases}
\eeq
where $\chi^\lambda_{(2^{|\lambda|/2})}$ is the character of $\lambda$th irreducible representation evaluated on the elements of cycle-type $(2^{|\lambda|/2})$.
\end{proposition}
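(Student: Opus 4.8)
The plan is to specialise the coefficient formula of Proposition~\ref{prop:schur_to mulher_coeff} to $\nu=0$ and then recognise the surviving $N$-independent determinant $D^{(H)}_{\lambda 0}$ as a character of the symmetric group. First I would dispose of the odd case. In the expansion \eqref{eq:schur_to_mulher} only sub-partitions $\nu$ with $|\lambda|-|\nu|$ even occur, because polynomials of odd and even degree do not mix; hence the term $\nu=0$ (which has $|\nu|=0$) is absent whenever $|\lambda|$ is odd, giving $\psi^{(H)}_{\lambda 0}=0$.

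For $|\lambda|$ even, put $\nu=0$ in Proposition~\ref{prop:schur_to mulher_coeff}. Because $\nu_j=0$, the product $\prod_{j=1}^{L}(\lambda_j+N-j)!/(N-j)!$ is precisely $C_\lambda(N)$ (the factors with $j>L=l(\lambda)$ being trivial), so that $\psi^{(H)}_{\lambda 0}=2^{-|\lambda|/2}\,C_\lambda(N)\,D^{(H)}_{\lambda 0}$, where $D^{(H)}_{\lambda 0}$ is the $N$-independent determinant of Proposition~\ref{prop:schur_to mulher_coeff}. It therefore remains to prove the purely combinatorial identity $D^{(H)}_{\lambda 0}=\chi^\lambda_{(2^{|\lambda|/2})}/(|\lambda|/2)!$; substituting this back then reproduces the stated formula.

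I would prove this identity by a specialisation of the ring of symmetric functions. Let $\phi$ be the ring homomorphism determined on the (algebraically independent) power sums by $\phi(p_2)=2$ and $\phi(p_r)=0$ for $r\neq 2$. Applying $\phi$ to the power-sum expansion $S_\lambda=\sum_\mu z_\mu^{-1}\chi^\lambda_\mu P_\mu$ of \eqref{eq:power_to_schur}, every term with a part $\neq 2$ is killed, so only $\mu=(2^{m})$ with $m=|\lambda|/2$ survives; since $z_{(2^m)}=2^m m!$ and $\phi(P_{(2^m)})=2^m$, this gives $\phi(S_\lambda)=\chi^\lambda_{(2^m)}/m!$. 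On the other hand, the classical generating identity $\sum_n h_n z^n=\exp(\sum_r p_r z^r/r)$ \cite{Macdonald1998} yields $\sum_n \phi(h_n)z^n=e^{z^2}$, so $\phi(h_{2k})=1/k!$, while $\phi(h_n)=0$ for $n$ odd and, consistently with the convention $1/(\text{negative integer})!=0$, for $n<0$. Feeding these values into the Jacobi--Trudi determinant $S_\lambda=\det[h_{\lambda_i-i+j}]_{i,j=1}^{L}$ of \eqref{eq:jacobi_trudy} reproduces entry-by-entry the determinant $D^{(H)}_{\lambda 0}$, whence $D^{(H)}_{\lambda 0}=\phi(S_\lambda)=\chi^\lambda_{(2^m)}/m!$, as required.

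The step I expect to require the most care is the final matching: one must check that applying $\phi$ to the Jacobi--Trudi matrix reproduces the entries of $D^{(H)}_{\lambda 0}$ exactly, aligning the parity indicator $\mathbbm{1}_{\lambda_i-i+j\ \mathrm{even}}$ and the vanishing of $\phi(h_n)$ for $n<0$ with the factorial convention built into $D^{(H)}_{\lambda 0}$, and that both determinants genuinely have the same size $L=l(\lambda)$. The rest is routine, and the well-definedness of $\phi$ is immediate since the $p_r$ freely generate the algebra of symmetric functions over $\mathbb{Q}$.
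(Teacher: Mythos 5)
Your proof is correct, and for the key combinatorial identity it takes a genuinely different route from the paper. The setup is the same: both you and the paper dispose of the odd case by the parity (non-mixing) argument and reduce the even case, via Proposition~\ref{prop:schur_to mulher_coeff} at $\nu=0$, to the identity $D^{(H)}_{\lambda 0}=\chi^\lambda_{(2^{m})}/m!$ with $m=|\lambda|/2$. The paper proves this identity by invoking the Frobenius character formula, writing $\chi^\lambda_{(2^m)}$ as the coefficient of $x_1^{k_1}\cdots x_L^{k_L}$ in $\Delta(\mathbf{x})(x_1^2+\cdots+x_L^2)^m$, expanding multinomially, and matching the surviving signed terms with the Leibniz expansion of the determinant $D^{(H)}_{\lambda 0}$. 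You instead apply the specialisation homomorphism $\phi$ with $\phi(p_2)=2$, $\phi(p_r)=0$ for $r\neq 2$, to two expansions of $S_\lambda$: the power-sum expansion \eqref{eq:power_to_schur} yields $\phi(S_\lambda)=\chi^\lambda_{(2^m)}/m!$, while the Jacobi--Trudi identity \eqref{eq:jacobi_trudy} together with $\sum_n\phi(h_n)z^n=e^{z^2}$ yields $\phi(S_\lambda)=D^{(H)}_{\lambda 0}$ entry by entry. Your argument is cleaner and more conceptual: it replaces the paper's somewhat delicate coefficient-matching (tracking which $L$-tuples $(n_1,\dots,n_L)$ contribute and reassembling the permutation sum) with a one-line ring-homomorphism computation, and it handles the degenerate entries transparently since $h_n=0$ for $n<0$ aligns with the convention $1/(\text{negative integer})!=0$ built into $D^{(H)}_{\lambda 0}$. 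The only mild cost is that you must know (or cite) the generating identity $\sum_n h_nz^n=\exp\bigl(\sum_r p_rz^r/r\bigr)$, whereas the paper's route is self-contained given the Frobenius formula; both inputs are standard and available in \cite{Macdonald1998}.
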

\begin{proof}
Since Hermite polynomials of odd and even degree do not mix, $\psi^{(H)}_{\lambda 0}=0$ when $|\lambda|$ is odd. When $|\lambda|$ is even,
\beq
D_{\lambda 0}^{(H)} =\det\left[\mathbbm{1}_{\lambda_j-j+k =\text{0 mod 2}}\,\frac{1}{\left(\frac{\lambda_j-j+k}{2}\right)!}\right].
\eeq
Let $n=|\lambda|/2$ and $L=l(\lambda)$. Let $g(x_1,\dots ,x_L)$ be a formal power series in variables $x_i$, and  $(k_1,\dots,k_L)$ be a partition constructed from $\lambda$ such that $k_j=\lambda_j+L-j$, $j=1,\dots,L$. Let
\beq
[g(x_1,\dots ,x_L)]_{(k_1,\dots,k_L)} = \text{coefficient of $x_1^{k_1}\dots x_L^{k_L}$}.
\eeq
Using Frobenius formula for characters of the symmetric group
\beq
\begin{split}
\chi^\lambda_{(2^n)} &= \left[\Delta(x_1,\dots,x_L)(x_1^2+\dots +x_L^2)^n\right]_{(k_1,\dots,k_L)}\\
&=\sum_{n_1+\dots +n_L=n}\frac{n!}{n_1!\dots n_L!}\left[\det\left[x_i^{L-j}\right]x_1^{2n_1}x_2^{2n_2}\dots x_L^{2n_L}\right]_{(\lambda_1+L-1,\lambda_2+L-2,\dots,\lambda_L)}.
\end{split}
\eeq
After absorbing $x_i^{2n_i}$ into the $i^{th}$ row of the determinant, for each $n_i$ at most one term in the $i^{th}$ row has the exponent $\lambda_i+L-i$, say the $(i,j)^{th}$ element $x_i^{2n_i+L-j}$, which implies $2n_i = \lambda_i-i+j$. For $L$-tuples $\{n_1,\dots,n_L\}$ such that there is exactly one term in each row that has the required exponent, the non-zero summands are given by $n!\,\text{sgn}(\sigma)\prod_i((\lambda_i-i+\sigma(i)/2)!)^{-1}$ where $\sigma\in\mathcal{S}_L$. Considering all such $L$-tuples and using Laplace expansion for the determinant
 proves the proposition. 
\end{proof}
Therefore the expansion of Schur polynomials in terms of multivariate Hermite polynomials can be written as 
\beq
\begin{split}
S_\lambda(x_1,\dots,x_N) &=C_\lambda(N)\sum_{\nu\subseteq\lambda}\frac{1}{2^{\frac{|\lambda|-|\nu|}{2}}}\frac{1}{C_\nu(\lambda)}D_{\lambda\nu}^{(H)}\mathcal{H}_\nu(x_1,\dots,x_N).
\end{split}
\eeq
In a similar way, by expanding Hermite polynomials in terms of monomials in the definition of $\mathcal{H}_\lambda$, multivariate Hermite polynomials can be written in the Schur basis as follows:
\beq
\mathcal{H}_\lambda = \sum_{\nu\subseteq\lambda}\kappa^{(H)}_{\lambda\nu}S_\nu = \sum_{j=0}^{\lfloor\frac{|\lambda|}{2}\rfloor}\sum_{\nu\vdash g(j)}\kappa_{\lambda\nu}^{(H)} S_\nu, \quad g(j) = \begin{cases}
2j, \qquad\,\,\,\, |\lambda|\,\,\text{is even},\\
2j+1, \quad |\lambda|\,\,\text{is odd},
\end{cases}
\eeq
where 
\beq\label{eq:mulher_to_schur_coeff}
\kappa_{\lambda\nu}^{(H)}=\left(\frac{-1}{2}\right)^{\frac{|\lambda|-|\nu|}{2}}D_{\lambda\nu}^{(H)}\prod_{j=1}^L\frac{(\lambda_j+N-j)!}{(\nu_j+N-j)!}.
\eeq
Alternatively,
\beq\label{eq:mulher_to_schur}
\begin{split}
\mathcal{H}_\lambda(x_1,\dots,x_N) &= C_\lambda(N)\sum_{\nu\subseteq\lambda} \left(\frac{-1}{2}\right)^{\frac{|\lambda|-|\nu|}{2}}\frac{1}{C_\nu(N)}D_{\lambda\nu}^{(H)}S_\nu(x_1,\dots,x_N),
\end{split}
\eeq
where $|\lambda|-|\nu|= \text{0 mod 2}$.

\noindent\textit{Laguerre ensemble.} Let $M$ be an $N\times N$ LUE matrix with eigenvalues $x_1,\dots,x_N$. For $\gamma>-1$, the \textit{j.p.d.f.} of eigenvalues is 
\beq
\begin{split}
\rho^{(L)}(x_1,\dots,x_N) &= \frac{1}{Z^{(L)}_N}\Delta^2(\textbf{x})\prod_{i=1}^Nx_i^\gamma e^{-x_i},\\
Z^{(L)}_N &= N!G_0(N,\gamma)G_0(N,0),
\end{split}
\eeq
where $G_\lambda(N,\gamma)$ is given in \eqref{eq:C_lambda_and_G_lambda}. 

The multivariate Laguerre polynomials defined by
\beq\label{eq:mullag_def}
\begin{split}
\mathcal{L}^{(\gamma)}_\lambda(\textbf{x}) = \frac{1}{\Delta_N}
\begin{vmatrix}
L^{(\gamma)}_{\lambda_1+N-1}(x_1) & L^{(\gamma)}_{\lambda_1+N-1}(x_2) & \dots & L^{(\gamma)}_{\lambda_1+N-1}(x_N)\\
L^{(\gamma)}_{\lambda_2+N-2}(x_1) & L^{(\gamma)}_{\lambda_2+N-2}(x_2) & \dots & L^{(\gamma)}_{\lambda_2+N-2}(x_N)\\
\vdots & \vdots & & \vdots\\
L^{(\gamma)}_{\lambda_N}(x_1) & L^{(\gamma)}_{\lambda_N}(x_2) & \dots & L^{(\gamma)}_{\lambda_N}(x_N)
\end{vmatrix},
\end{split}
\eeq
$l(\lambda)\leq N$, satisfy the orthogonality relation
\beq\label{eq:mullag_norm}
\begin{split}
\left\langle \mathcal{L}^{(\gamma)}_\lambda,\mathcal{L}^{(\gamma)}_\mu\right\rangle:=&\frac{1}{Z^{(L)}_N}\int_{\mathbb{R}_+^N}\mathcal{L}^{(\gamma)}_\lambda(\textbf{x}) \mathcal{L}^{(\gamma)}_\mu(\textbf{x})\Delta^2(\textbf{x})\prod_{i=1}^Nx_i^\gamma e^{-x}\,dx_i \\
=& \frac{G_\lambda(N,\gamma)}{G_0(N,\gamma)}\frac{1}{G_\lambda(N,0)}\frac{1}{G_0(N,0)}\delta_{\lambda\mu}.
\end{split}
\eeq
The polynomials in the determinant~\eqref{eq:mullag_def} are normalized according to~\eqref{eq:classical_hlj_ortho2}. The  Schur polynomials can be expanded in terms of multivariate Laguerre polynomials as
\beq\label{eq:schur_to_mullag}
S_\lambda = \sum_{\nu\subseteq\lambda}\psi^{(L)}_{\lambda\nu}\mathcal{L}^{(\gamma)}_\nu,
\eeq
where 
\beq\label{eq:schur_to_mullag_coef}
\begin{split}
\psi_{\lambda\nu}^{(L)} &= (-1)^{|\nu|+\frac{1}{2}N(N-1)}\frac{G_\lambda(N,\gamma)}{G_\nu(N,\gamma)}G_\lambda(N,0)D^{(L)}_{\lambda\nu}\\
D^{(L)}_{\lambda\nu}&= \det\left[\mathbbm{1}_{\lambda_i-\nu_j-i+j\geq 0\frac{1}{(\lambda_i-\nu_j-i+j)!}}\right]_{i,j=1,\dots ,l(\lambda)}.
\end{split}
\eeq
The coefficients $\psi^{(L)}_{\lambda\nu}$ in \eqref{eq:schur_to_mullag_coef} can be computed in a similar way as in Prop.~\ref{prop:schur_to mulher_coeff}. It is interesting to note that the quantity $|\lambda/\nu|!D^{(L)}_{\lambda\nu}$ gives the number of standard Young tableaux (SYT) of shape $\lambda/\nu$ \cite[p.344]{Stanley1999}. 

Multivariate Laguerre polynomials can also be expanded in the Schur basis:
\beq\label{eq:mullag_to_schur}
\begin{split}
\mathcal{L}^{(\gamma)}_\lambda &= \sum_{\nu\subseteq\lambda}\kappa^{(L)}_{\lambda\nu}S_\nu,\\
\kappa^{(L)}_{\lambda\nu} &= (-1)^{|\nu|+\frac{1}{2}N(N-1)}\frac{G_\lambda(N,\gamma)}{G_\nu(N,\gamma)}\frac{1}{G_\nu(N,0)}D_{\lambda\nu}^{(L)}.
\end{split}
\eeq
Similar to the Hermite case, $D^{(L)}_{\lambda 0}$ turns out to be a character of the symmetric group.
\begin{proposition}
We have
\beq\label{eq:D_lambda0_lag}
D^{(L)}_{\lambda 0} = \frac{\chi^\lambda_{(1^{|\lambda|})}}{|\lambda|!}=\frac{\dim\, V_\lambda}{|\lambda|! }.
\eeq
\begin{proof}
Same as Prop.~\ref{prop:psi_lambda0}. Note that $|\lambda|! D^{(L)}_{{\lambda 0}}$ gives the number of standard Young tableaux of shape $\lambda$.
\end{proof}
\end{proposition}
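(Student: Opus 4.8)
The plan is to follow the strategy of Prop.~\ref{prop:psi_lambda0} almost verbatim, replacing the cycle type $(2^{|\lambda|/2})$ there by the identity class $(1^{|\lambda|})$ here. The second equality, $\chi^\lambda_{(1^{|\lambda|})}=\dim V_\lambda$, is immediate: the character of the identity permutation is the trace of the identity operator on $V_\lambda$, which equals its dimension. So the real content is the first equality, $D^{(L)}_{\lambda 0}=\chi^\lambda_{(1^{|\lambda|})}/|\lambda|!$, which I would establish through the Frobenius character formula exactly as in the Hermite case.

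Write $L=l(\lambda)$, $m=|\lambda|$, and $k_j=\lambda_j+L-j$. For the identity class $\mu=(1^m)$ the only power sum entering the Frobenius formula is $p_1(x)=x_1+\dots+x_L$ raised to the power $m$, so
\beq
\chi^\lambda_{(1^m)}=\left[\Delta(x_1,\dots,x_L)\,(x_1+\dots+x_L)^m\right]_{(k_1,\dots,k_L)},
\eeq
the coefficient of $x_1^{k_1}\cdots x_L^{k_L}$. First I would expand the power by the multinomial theorem, $\sum_{n_1+\dots+n_L=m}\frac{m!}{n_1!\cdots n_L!}x_1^{n_1}\cdots x_L^{n_L}$, and write $\Delta=\det[x_i^{L-j}]$ in Leibniz form. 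Absorbing the monomial $x_i^{n_i}$ into the $i$-th row turns a generic term into $\mathrm{sgn}(\sigma)\prod_i x_i^{n_i+L-\sigma(i)}$, so the coefficient of $x_1^{k_1}\cdots x_L^{k_L}$ is nonzero only when $n_i+L-\sigma(i)=\lambda_i+L-i$ for every $i$, i.e. $n_i=\lambda_i-i+\sigma(i)$.

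This pins down each $n_i$ in terms of $\sigma$; the constraint $\sum_i n_i=m$ is automatic since $\sum_i\sigma(i)=\sum_i i$, and the admissibility condition $n_i\ge 0$ is precisely the indicator $\mathbbm{1}_{\lambda_i-i+\sigma(i)\ge 0}$. Substituting into $m!/\prod_i n_i!$ gives
\beq
\chi^\lambda_{(1^m)}=m!\sum_{\sigma\in\mathcal{S}_L}\mathrm{sgn}(\sigma)\prod_{i=1}^L\mathbbm{1}_{\lambda_i-i+\sigma(i)\ge 0}\,\frac{1}{(\lambda_i-i+\sigma(i))!},
\eeq
and recognising the permutation sum as the Leibniz expansion of the determinant with $(i,j)$-entry $\mathbbm{1}_{\lambda_i-i+j\ge 0}/(\lambda_i-i+j)!$ identifies the right-hand side with $m!\,D^{(L)}_{\lambda 0}$, which is the claim.

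The one delicate point — the analogue of the closing Laplace-expansion step in Prop.~\ref{prop:psi_lambda0} — is the bookkeeping that shows the assignment $\sigma\mapsto(n_1,\dots,n_L)$ is a sign-preserving bijection onto the admissible compositions that actually contribute, so that each nonzero multinomial term is counted exactly once and the terms with some $n_i<0$ drop out consistently with the indicator; once this is in place, the reassembly into a determinant is purely formal. The remark that $|\lambda|!\,D^{(L)}_{\lambda 0}=\dim V_\lambda$ is then the number of standard Young tableaux of shape $\lambda$ is simply the Frobenius–Young determinantal form of that count.
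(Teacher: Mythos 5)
Your proposal is correct and follows exactly the route the paper intends: the paper's proof is just the reference ``Same as Prop.~\ref{prop:psi_lambda0}'', i.e.\ the Frobenius character formula applied now to the identity class $(1^{|\lambda|})$, with the multinomial expansion of $(x_1+\dots+x_L)^{|\lambda|}$ absorbed into the Vandermonde rows and the surviving terms reassembled into the determinant $D^{(L)}_{\lambda 0}$ — precisely your argument, including the sign-preserving bijection $\sigma\mapsto(n_1,\dots,n_L)$. Your closing observation that $|\lambda|!\,D^{(L)}_{\lambda 0}$ is the Frobenius--Young determinantal count of standard Young tableaux is also the remark the paper itself appends.
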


\noindent\textit{Jacobi ensemble.} Let $M$ be an $N\times N$ JUE matrix with eigenvalues $x_1,\dots,x_N$. For $\gamma_1,\gamma_2 >-1$, the \textit{j.p.d.f.} of eigenvalues is 
\beq
\begin{split}
\rho^{(J)}(x_1,\dots,x_N) &= \frac{1}{Z^{(J)}_N}\Delta^2(\textbf{x})\prod_{i=1}^Nx_i^{\gamma_1}(1-x_i)^{\gamma_2},\\
Z^{(J)}_N &= N!\prod_{j=0}^{N-1}\frac{j!\,\Gamma(j+\gamma_1+1)\Gamma(j+\gamma_2+1)\Gamma(j+\gamma_1+\gamma_2+1)}{\Gamma(2j+\gamma_1+\gamma_2+2)\Gamma(2j+\gamma_1+\gamma_2+1)}.
\end{split}
\eeq
Classical Jacobi polynomials are given by
\beq
J^{(\gamma_1,\gamma_2)}_n(x) = \frac{\Gamma(n+\gamma_1+1)}{\Gamma(n+\gamma_1+\gamma_2+1)}\sum_{j=0}^n\frac{(-1)^j}{j!(n-j)!}\frac{\Gamma(n+j+\gamma_1+\gamma_2+1)}{\Gamma(j+\gamma_1+1)}x^j
\eeq
and satisfy the orthogonality relation~\eqref{eq:classical_hlj_ortho3}. 
The multivariate Jacobi polynomials are 
\beq\label{eq:muljac_def}
\begin{split}
\mathcal{J}^{(\gamma_1,\gamma_2)}_\lambda(\textbf{x}) = \frac{1}{\Delta_N}
\begin{vmatrix}
J^{(\gamma_1,\gamma_2)}_{\lambda_1+N-1}(x_1) & J^{(\gamma_1,\gamma_2)}_{\lambda_1+N-1}(x_2) & \dots & J^{(\gamma_1,\gamma_2)}_{\lambda_1+N-1}(x_N)\\
J^{(\gamma_1,\gamma_2)}_{\lambda_2+N-2}(x_1) & J^{(\gamma_1,\gamma_2)}_{\lambda_2+N-2}(x_2) & \dots & J^{(\gamma_1,\gamma_2)}_{\lambda_2+N-2}(x_N)\\
\vdots & \vdots & & \vdots\\
J^{(\gamma_1,\gamma_2)}_{\lambda_N}(x_1) & J^{(\gamma_1,\gamma_2)}_{\lambda_N}(x_2) & \dots & J^{(\gamma_1,\gamma_2)}_{\lambda_N}(x_N)
\end{vmatrix},
\end{split}
\eeq
$l(\lambda)\leq N$, and obey the orthogonality relation
\beq\label{eq:muljac_norm}
\begin{split}
\left\langle \mathcal{J}^{(\gamma_1,\gamma_2)}_\lambda,\mathcal{J}^{(\gamma_1,\gamma_2)}_\mu\right\rangle:=&\frac{1}{Z^{(J)}_N}\int_{[0,1]^N}\mathcal{J}^{(\gamma_1,\gamma_2)}_\lambda(\textbf{x}) \mathcal{J}^{(\gamma_1,\gamma_2)}_\mu(\textbf{x})\Delta^2(\textbf{x})\prod_{i=1}^Nx_i^{\gamma_1}(1-x_i)^{\gamma_2}\,dx_i \\
=&\frac{N!}{Z^{(J)}_N}\frac{G_\lambda(N,\gamma_1)G_\lambda(N,\gamma_2)}{G_\lambda(N,\gamma_1+\gamma_2)G_\lambda(N,0)}\prod_{j=1}^N(2\lambda_j +2N -2j +\gamma_1 +\gamma_2 +1)^{-1}\delta_{\lambda\mu}.
\end{split}
\eeq
The expansion of the  Schur polynomials in terms of multivariate Jacobi polynomials is
\beq\label{eq:schur_to_muljac}
S_\lambda = \sum_{\nu\subseteq\lambda}\psi^{(J)}_{\lambda\nu}\mathcal{J}^{(\gamma_1,\gamma_2)}_\nu,
\eeq
where
\beq\label{eq:schur_to_muljac_coef}
\begin{split}
\psi^{(J)}_{\lambda\nu} &= (-1)^{|\nu|+\frac{1}{2}N(N-1)}\frac{G_\lambda(N,\gamma_1)}{G_\nu(N,\gamma_1)}G_\nu(N,\gamma_1+\gamma_2)G_\lambda(N,0)\\
&\qquad\times \mathcal{D}^{(J)}_{\lambda\nu}\prod_{j=1}^N(2\nu_j+2N-2j+\gamma_1+\gamma_2+1),\\
\mathcal{D}^{(J)}_{\lambda\nu} &= \det\left[\mathbbm{1}_{\lambda_j-\nu_k-j+k\geq 0}((\lambda_j-\nu_k-j+k)!\,\Gamma(2N+\lambda_j+\nu_k-j-k+\gamma_1+\gamma_2+2))^{-1}\right]_{j,k=1}^N.
\end{split}
\eeq
When $N=1$ \eqref{eq:schur_to_muljac} coincides with the one variable analogue
\beq
x^n = n!\,\Gamma(n+\gamma_1+1)\sum_{j=0}^n\frac{(-1)^j}{(n-j)!}\frac{(2j+\gamma_1+\gamma_2+1)\Gamma(j+\gamma_1+\gamma_2+1)}{\Gamma(j+\gamma_1+1)\Gamma(n+j+\gamma_1+\gamma_2 +2)}J^{(\gamma_1,\gamma_2)}_j(x).
\eeq
Multivariate Jacobi polynomials can be expanded in Schur polynomials via
\beq
\mathcal{J}^{(\gamma_1,\gamma_2)}_\lambda = \sum_{\nu\subseteq\lambda}\kappa_{\lambda\nu}^{(J)}S_\nu,
\eeq
where 
\beq\label{eq:muljac_to_schur_coef}
\begin{split}
\kappa^{(J)}_{\lambda\nu} &= (-1)^{|\nu|+\frac{1}{2}N(N-1)}\frac{G_\lambda(N,\gamma_1)}{G_\nu(N,\gamma_1)}\frac{1}{G_\lambda(N,\gamma_1+\gamma_2)G_\nu(N,0)}\tilde{\mathcal{D}}^{(J)}_{\lambda\nu},\\
\tilde{\mathcal{D}}^{(J)}_{\lambda\nu} &= \det\left[\mathbbm{1}_{\lambda_j-\nu_k-j+k\geq 0}\frac{\Gamma(2N+\lambda_j+\nu_k-j-k+\gamma_1+\gamma_2+1)}{(\lambda_j-\nu_k-j+k)!}\right]_{j,k=1}^N.
\end{split}
\eeq

\subsection{Moments of Schur polynomials}
\textit{Gaussian case}. Similar to the moments of monomials with respect to the Gaussian weight,
\beq
\begin{split}
&\frac{1}{\sqrt{2\pi}}\int_{\mathbb{R}}x^{2n}e^{-\frac{x^2}{2}}\,dx = (-1)^nH_{2n}(0) = \frac{2n!}{2^nn!},\\
&\frac{1}{\sqrt{2\pi}}\int_{\mathbb{R}}x^{2n+1}e^{-\frac{x^2}{2}}\,dx =0,
\end{split}
\eeq
the moments of Schur polynomials associated to a partition $\lambda$ are given by
\beq\label{eq:mom_schur_gue}
\mathbb{E}_N^{(H)}[S_\lambda] =
\begin{cases}
(-1)^{\frac{|\lambda|}{2}}\mathcal{H}_\lambda(0^N), \quad \text{$|\lambda|$ is even,}\\
0, \qquad\qquad\quad\qquad \text{$|\lambda|$ is odd,}
\end{cases}
\eeq
where 
\beq
H_\lambda(0^N) = \frac{(-1)^{\frac{|\lambda|}{2}}}{2^{\frac{|\lambda|}{2}}\frac{|\lambda|}{2}!}C_\lambda(N)\chi^\lambda_{(2^{|\lambda|/2})}.
\eeq
This can be easily seen from \eqref{eq:schur_to_mulher}, \eqref{eq:psi_lambda0}, \eqref{eq:mulher_to_schur}, and the fact that $S_\lambda=1$ for $\lambda=()$ and $S_\lambda(0^N)=0$ for any non-empty partition $\lambda$. Using \eqref{eq:mulher_norm}, $\mathbb{E}^{(H)}_N[S_\lambda]$ is a polynomial in $N$ with integer roots given by the content of $\lambda$ whenever $\chi^\lambda_{2^{|\lambda|/2}}$ is non-zero.

A few examples of moments of Schur polynomials corresponding to partitions of 4 are given below. 
\beq
\begin{matrix*}[l]
\mathbb{E}^{(H)}_N\left[S_4\right] = \frac{1}{8}N(N+1)(N+2)(N+3) \quad &\mathbb{E}^{(H)}_N\left[S_{3,1}\right] = -\frac{1}{8}(N-1)N(N+1)(N+2)\\
\mathbb{E}^{(H)}_N\left[S_{2,2}\right] = \frac{1}{4}(N-1)N^2(N+1) \quad &\mathbb{E}^{(H)}_N\left[S_{2,1,1}\right] = -\frac{1}{8}(N-2)(N-1)N(N+1)\\
\mathbb{E}^{(H)}_N\left[S_{1^4}\right] = \frac{1}{8}(N-3)(N-2)(N-1)N
\end{matrix*}
\eeq
\textit{Laguerre case}. The univariate moments are 
\beq
\frac{1}{\Gamma(\gamma +1)}\int^{\infty}_0 x^{n+\gamma}e^{-x}\, dx = \frac{\Gamma(n+\gamma +1)}{\Gamma(\gamma +1)} = n!L^{(\gamma)}_n(0).
\eeq
The moments of the Schur polynomials with respect to the Laguerre weight can be computed using \eqref{eq:schur_to_mullag},
\beq
\begin{split}
\mathbb{E}^{(L)}_N[S_\lambda] &= \frac{C_\lambda(N)}{|\lambda|!}\frac{G_\lambda(N,\gamma)}{G_0(N,\gamma)}\chi^{\lambda}_{(1^{|\lambda|})}\\
 &= (-1)^{\frac{N(N-1)}{2}}G_\lambda(N,0)\mathcal{L}^{(\gamma)}_\lambda(0^N).
 \end{split}
\eeq
Like in the the Hermite case, $\mathbb{E}^{(L)}_N(S_\lambda)$ are polynomials in $N$ with roots $i-j$ and $i-j-\gamma$, where $(i,j)\in\lambda$ as discussed in
Sec.~\ref{sec:change_of_basis}.

\noindent\textit{Jacobi case.} We have
\beq
\begin{split}
\int_0^1 x^{n+\gamma_1}(1-x)^{\gamma_2}\, dx &=n!\frac{\Gamma(\gamma_1+1)\Gamma(\gamma_2+1)}{\Gamma(n+\gamma_1+\gamma_2+2)}J^{(\gamma_1,\gamma_2)}_n(0)\\
&= \frac{\Gamma(n+\gamma_1+1)\Gamma(\gamma_2+1)}{\Gamma(n+\gamma_1+\gamma_2+2)}.
\end{split}
\eeq
Similarly, 
\beq
\begin{split}
\mathbb{E}_N^{(J)}[S_\lambda] &= \frac{G_\lambda(N,\gamma_1)}{G_0(N,\gamma_1)}C_\lambda(N) D_{\lambda 0}^{(J)}\\
&= (-1)^{\frac{N(N-1)}{2}}\frac{D^{(J)}_{\lambda 0}}{\tilde{\mathcal{D}}^{(J)}_{\lambda 0}}G_\lambda(N,\gamma_1+\gamma_2)G_\lambda(N,0)\mathcal{J}^{(\gamma_1,\gamma_2)}_{\lambda}(0^N),
\end{split}
\eeq
where $D_{\lambda 0}^{(J)}$ and $\tilde{\mathcal{D}}^{(J)}_{\lambda 0}$ are given in \eqref{eq:det_jac_mom_traces} and \eqref{eq:muljac_to_schur_coef}, respectively, and 
\beq
\mathcal{J}^{(\gamma_1,\gamma_2)}_{\lambda}(0^N) = (-1)^{\frac{N(N-1)}{2}}\frac{G_\lambda(N,\gamma_1)}{G_\lambda(N,\gamma_1+\gamma_2)G_0(N,\gamma_1)G_0(N,0)}\tilde{\mathcal{D}}^{(J)}_{\lambda 0}.
\eeq

\subsection{Moments of characteristic polynomials}

The Cauchy identity can be written as 
\beq
\prod_{i=1}^q\prod_{j=1}^N\frac{1}{(T_i-x_j)} = \frac{1}{\prod_{j=1}^qT_j^N}\sum_\lambda\sum_{\mu\subseteq\lambda}\psi_{\lambda\mu}S_\lambda(T_1^{-1},\dots ,T^{-1}_q)\varPhi_\mu(x_1,\dots,x_N),
\eeq
where $\varPhi_\mu$ is one of the generalised polynomials $\mathcal{H}_\mu$, $\mathcal{L}^{(\gamma)}_\mu$ or $\mathcal{J}^{(\gamma_1,\gamma_2)}_\mu$. By using orthogonality of multivariate polynomials \eqref{eq:mulher_norm}, \eqref{eq:mullag_norm} and \eqref{eq:muljac_norm} we have the following proposition. 
\begin{proposition}
Let $t_1,\dots,t_p$ and $T_1,\dots,T_q$ be two sets of variables. Then
\beq
\begin{split}
\prod_{j=1}^p\prod_{k=1}^q\mathbb{E}^{(H)}_N\left[\frac{\det(t_j-M)}{\det(T_k-M)}\right] &= \prod_{j=1}^q\frac{1}{T_j^N}\sum_{\substack{\lambda\subseteq (N^p)\\s.t.\, \tilde{\lambda}=\nu}}\sum_{\mu}\sum_{\nu\subseteq\mu}\frac{(-1)^{|\nu|}}{2^{\frac{|\mu|-|\nu|}{2}}}C_\mu(N)D^{(H)}_{\mu\nu}\mathcal{H}_\lambda(\textbf{t})S_\mu(\textbf{T}^{-1})\\
\prod_{j=1}^p\prod_{k=1}^q\mathbb{E}^{(L)}_N\left[\frac{\det(t_j-M)}{\det(T_k-M)}\right] &=\prod_{j=N}^{p+N-1}(-1)^jj!\prod_{k=1}^q\frac{1}{T_k^N}\\
&\quad \times\sum_{\substack{\lambda\subseteq (N^p)\\s.t.\, \tilde{\lambda}=\nu}}\sum_{\mu}\sum_{\nu\subseteq\mu}\frac{G_\mu(N,\gamma)}{G_0(N,\gamma)}\frac{C_\mu(N)}{C_\nu(N)} D^{(L)}_{\mu\nu}\mathcal{L}^{(\gamma)}_\lambda(\textbf{t})S_\mu(\textbf{T}^{-1})\\
\prod_{j=1}^p\prod_{k=1}^q\mathbb{E}^{(J)}_N\left[\frac{\det(t_j-M)}{\det(T_k-M)}\right] &=\prod_{j=N}^{p+N-1}(-1)^jj!\frac{\Gamma(j+\gamma_1+\gamma_2+1)}{\Gamma(2j+\gamma_1+\gamma_2+1)}\prod_{k=0}^{N-1}\Gamma(2k+\gamma_1+\gamma_2+2)\prod_{l=1}^q\frac{1}{T_l^N}\\
&\quad\times\sum_{\substack{\lambda\subseteq (N^p)\\s.t.\, \tilde{\lambda}=\nu}} \sum_{\mu}\sum_{\nu\subseteq\mu}\frac{G_\mu(N,\gamma_1)}{G_0(N,\gamma_1)}\frac{G_\nu(N,\gamma_2)}{G_0(N,\gamma_2)}\frac{C_\mu(N)}{C_\nu(N)} \mathcal{D}^{(J)}_{\mu\nu}\mathcal{J}^{(\gamma_1,\gamma_2)}_\lambda(\textbf{t})S_\mu(\textbf{T}^{-1})
\end{split}
\eeq
\end{proposition}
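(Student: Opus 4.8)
The plan is to combine the two expansion identities already at our disposal---the generalized dual Cauchy identity of Lemma~\ref{lemma:dualcauchy_to_mulmonicpoly} for the numerator, and the ordinary Cauchy identity displayed immediately above for the denominator---and then to collapse the resulting double expansion using the orthogonality of the multivariate polynomials. First I would write each average as an integral of a quotient of products over the eigenvalues,
\[
\frac{\prod_{j=1}^p\det(t_j-M)}{\prod_{k=1}^q\det(T_k-M)}
=\frac{\prod_{j=1}^p\prod_{i=1}^N(t_j-x_i)}{\prod_{k=1}^q\prod_{i=1}^N(T_k-x_i)},
\]
and expand the numerator with the identity from the proof of Theorem~\ref{thm:correlations_charpoly}, which in the Gaussian case reads $\prod_{j,i}(t_j-x_i)=\sum_{\lambda\subseteq(N^p)}(-1)^{|\tilde\lambda|}\mathcal{H}_\lambda(\textbf{t})\mathcal{H}_{\tilde\lambda}(\textbf{x})$, together with the Laguerre and Jacobi variants carrying their respective prefactors. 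Here $\mathcal{H}_\lambda(\textbf{t})$ lives in the $p$ variables $\textbf{t}$ while $\mathcal{H}_{\tilde\lambda}(\textbf{x})$ lives in the $N$ eigenvariables.

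Next I would substitute the denominator expansion stated just before the proposition, $\prod_{k,i}(T_k-x_i)^{-1}=\prod_k T_k^{-N}\sum_\mu S_\mu(\textbf{T}^{-1})\sum_{\nu\subseteq\mu}\psi_{\mu\nu}\varPhi_\nu(\textbf{x})$, where $\psi_{\mu\nu}$ is the change-of-basis coefficient of Section~\ref{sec:change_of_basis}. Multiplying the two expansions yields a triple sum over $\lambda,\mu,\nu$ in which all of the eigenvalue dependence is concentrated in the single product $\varPhi_{\tilde\lambda}(\textbf{x})\varPhi_\nu(\textbf{x})$, both factors being multivariate polynomials in the same $N$ variables.

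The decisive step is to take the expectation, that is, to integrate this product against the joint eigenvalue density. By the orthogonality relations \eqref{eq:mulher_norm}, \eqref{eq:mullag_norm} and \eqref{eq:muljac_norm}, only the diagonal term $\nu=\tilde\lambda$ survives, contributing the norm $C_{\tilde\lambda}(N)$ (resp.\ the Laguerre and Jacobi norms). This simultaneously removes the $\textbf{x}$-integration and forces the coupling $\tilde\lambda=\nu$ recorded in the summation range on the right-hand side. For the GUE the remainder is pure bookkeeping: substituting $\psi^{(H)}_{\mu\nu}=2^{-(|\mu|-|\nu|)/2}\,C_\mu(N)\,C_\nu(N)^{-1}D^{(H)}_{\mu\nu}$ from Proposition~\ref{prop:schur_to mulher_coeff} and Corollary~\ref{corollary:roots_of_coeff_schur_to_mulher}, the factor $C_\nu(N)$ cancels against the norm $C_{\tilde\lambda}(N)=C_\nu(N)$, while $(-1)^{|\tilde\lambda|}$ becomes $(-1)^{|\nu|}$, leaving exactly the stated coefficient $(-1)^{|\nu|}2^{-(|\mu|-|\nu|)/2}C_\mu(N)D^{(H)}_{\mu\nu}$.

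I expect the main obstacle to lie in the Laguerre and Jacobi cases, where the numerator expansion carries the prefactor $\prod_{j=N}^{p+N-1}(-1)^jj!$ (with additional Gamma-ratios for the Jacobi weight), the empty-partition polynomials $\mathcal{L}^{(\gamma)}_0$ and $\mathcal{J}^{(\gamma_1,\gamma_2)}_0$ are nontrivial, and the orthogonality norms in \eqref{eq:mullag_norm} and \eqref{eq:muljac_norm} involve several factors $G_\bullet(N,\cdot)$. The delicate point is to verify that all of these constants recombine into the compact coefficients $\frac{G_\mu(N,\gamma)}{G_0(N,\gamma)}\frac{C_\mu(N)}{C_\nu(N)}D^{(L)}_{\mu\nu}$ and its Jacobi analogue; once this is checked against the change-of-basis formulas \eqref{eq:schur_to_mullag_coef} and \eqref{eq:schur_to_muljac_coef}, the three cases follow by the same mechanism.
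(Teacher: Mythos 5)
Your proposal is correct and follows essentially the same route the paper takes: expand the numerator via the generalized dual Cauchy identity of Lemma~\ref{lemma:dualcauchy_to_mulmonicpoly}, expand the denominator via the Cauchy identity rewritten in the multivariate orthogonal basis through the coefficients $\psi_{\mu\nu}$, and collapse the double expansion with the orthogonality relations so that only $\nu=\tilde\lambda$ survives. Your GUE bookkeeping, in which $\psi^{(H)}_{\mu\nu}C_\nu(N)=2^{-(|\mu|-|\nu|)/2}C_\mu(N)D^{(H)}_{\mu\nu}$ reproduces the stated coefficient, is exactly the computation the paper leaves implicit.
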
hhNote that the RHS is a formal power series in the variables $\mathbf{T}$. 

\begin{corollary}
Let $\lambda=(N^p)$. If $t_i=t$ in Thm.~\ref{thm:correlations_charpoly}, then
\beq
\begin{split}
\mathbb{E}^{(H)}_N\left[\left(\det(t-M)\right)^p\right] &= C_\lambda(p)\sum_{\nu\subseteq \lambda}\left(\frac{-1}{2}\right)^{\frac{|\lambda|-|\nu|}{2}}\frac{\dim V_\nu}{|\nu|!} D^{(H)}_{\lambda\nu}t^{|\nu|},\\
\mathbb{E}^{(L)}_N\left[\left(\det(t-M)\right)^p\right] &= (-1)^{p(p+N-1)}G_\lambda(p,\gamma)\frac{G_\lambda(p,0)}{G_0(p,0)}\sum_{\nu\subseteq \lambda}\frac{(-1)^{|\nu|}}{|\nu|!\,G_\nu(p,\gamma)}\dim V_\nu D^{(L)}_{\lambda\nu}t^{|\nu|},\\
\mathbb{E}^{(J)}_N\left[\left(\det(t-M)\right)^p\right] &= \left(\prod_{j=N}^{p+N-1}\frac{1}{\Gamma(2j+\gamma_1+\gamma_2+1)}\right)(-1)^{p(p+N-1)}\frac{G_\lambda(p,\gamma_1)G_\lambda(p,0)}{G_0(p,0)}\\
&\quad\times \sum_{\nu\subseteq\lambda}\frac{(-1)^{|\nu|}}{|\nu|!\,G_\nu(p,\gamma_1)}\dim V_\nu \tilde{\mathcal{D}}^{(J)}_{\lambda\nu}t^{|\nu|},
\end{split}
\eeq
where $\dim V_\nu$ is given in \eqref{eq:dimV_lambda}.
\end{corollary}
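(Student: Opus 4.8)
The plan is to specialize Theorem~\ref{thm:correlations_charpoly} at $t_1=\dots=t_p=t$ and then re-expand the resulting multivariate orthogonal polynomial — now viewed as a symmetric function of the $p$ variables $t_1,\dots,t_p$ — back into the Schur basis, where the diagonal evaluation is explicit. The key observation is that, after setting $t_j=t$, the right-hand side of Theorem~\ref{thm:correlations_charpoly}(a) becomes $\mathcal{H}_{(N^p)}(t,\dots,t)$, a polynomial in the single variable $t$ obtained by diagonally specializing a symmetric function of $p$ variables. I would therefore invoke the change of basis \eqref{eq:mulher_to_schur} — together with its Laguerre and Jacobi counterparts \eqref{eq:mullag_to_schur} and \eqref{eq:muljac_to_schur_coef} — but with the ambient number of variables taken to be $p$ rather than $N$. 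This is the point requiring the most care, because $N$ plays a double role here (it is the matrix dimension and simultaneously the common part size of $\lambda=(N^p)$), whereas the number of variables in these identities is $p$.

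Concretely, for the Hermite case I would write
\beq
\mathcal{H}_{(N^p)}(t,\dots,t) = C_{(N^p)}(p)\sum_{\nu\subseteq (N^p)}\left(\frac{-1}{2}\right)^{\frac{|(N^p)|-|\nu|}{2}}\frac{1}{C_\nu(p)}D^{(H)}_{(N^p)\nu}\,S_\nu(t,\dots,t),
\eeq
and then combine the homogeneity of Schur polynomials with the evaluation \eqref{eq:C_lambda} in $p$ variables, namely $S_\nu(t,\dots,t)=t^{|\nu|}S_\nu(1^p)=t^{|\nu|}\,C_\nu(p)\,\dim V_\nu/|\nu|!$. The factor $C_\nu(p)$ then cancels against the $1/C_\nu(p)$ in the expansion, leaving exactly the claimed Hermite formula with $\lambda=(N^p)$ and $\dim V_\nu$ as in \eqref{eq:dimV_lambda}.

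For the Laguerre and Jacobi cases the same three ingredients apply — Theorem~\ref{thm:correlations_charpoly}, the Schur expansion in $p$ variables, and the diagonal evaluation $S_\nu(1^p)=C_\nu(p)\dim V_\nu/|\nu|!$ — but the bookkeeping of the Gamma-factor prefactors is heavier. Here I would use the elementary identity $C_\nu(p)=G_\nu(p,0)/G_0(p,0)$, immediate from the definitions \eqref{eq:C_lambda_and_G_lambda}, so that the diagonal evaluation cancels the $1/G_\nu(p,0)$ carried by the coefficients $\kappa^{(L)}_{\lambda\nu}$ and $\kappa^{(J)}_{\lambda\nu}$. One must also recognize the prefactor $\prod_{j=N}^{p+N-1}(-1)^j j!$ of Theorem~\ref{thm:correlations_charpoly}(b) as $(-1)^{pN+p(p-1)/2}\,G_{(N^p)}(p,0)$, and likewise identify $\prod_{j=N}^{p+N-1}\Gamma(j+\gamma_1+\gamma_2+1)=G_{(N^p)}(p,\gamma_1+\gamma_2)$ inside the Jacobi prefactor, so that the $G_{(N^p)}(p,\gamma_1+\gamma_2)$ generated by the expansion cancels cleanly.

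The main obstacle — in fact the only place an error is likely to creep in — is the sign and prefactor accounting. The overall sign $(-1)^{p(p+N-1)}$ in the stated formulas arises by combining the $(-1)^{pN+p(p-1)/2}$ coming from $\prod_{j=N}^{p+N-1}(-1)^j$ with the $(-1)^{\frac12 p(p-1)}$ carried by $\kappa^{(L)}_{\lambda\nu}$ and $\kappa^{(J)}_{\lambda\nu}$, using that $p(p-1)$ is even so that $(-1)^{pN+p(p-1)}=(-1)^{pN}=(-1)^{p(p+N-1)}$. Once these Gamma-factor and sign cancellations are performed, the three stated expressions follow directly, with no analytic input beyond the change-of-basis identities already established in Section~\ref{sec:change_of_basis}.
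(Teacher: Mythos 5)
Your proposal is correct and follows essentially the same route as the paper: specialize Theorem~\ref{thm:correlations_charpoly} at $t_1=\dots=t_p=t$, apply the change of basis~\eqref{eq:mulher_to_schur} (and its Laguerre/Jacobi analogues) with $p$ ambient variables, and use $S_\nu(1^p)=C_\nu(p)\dim V_\nu/|\nu|!$ together with $C_\nu(p)=G_\nu(p,0)/G_0(p,0)$ to cancel the normalizing factors. In fact you spell out the sign and Gamma-factor bookkeeping for the Laguerre and Jacobi cases, which the paper leaves implicit, and your identifications of the prefactors and of $(-1)^{p(p+N-1)}$ all check out.
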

\begin{proof}
Let us consider the GUE case. We have
\beq
\mathbb{E}^{(H)}_N\left[\left(\det(t-M)\right)^p\right] = \mathcal{H}_{(N^p)}(t^p).
\eeq
Using \eqref{eq:mulher_to_schur} and calculating $C_\lambda$ in \eqref{eq:mulher_norm} for $\lambda = (N^p)$,
\beq
C_{(N^p)}(p) = \prod_{j=1}^p\frac{(N+p-j)!}{(p-j)!},
\eeq
proves the statement. Similarly, the Laguerre and Jacobi cases can be computed in an identical way.
\end{proof}

\subsection{Joint moments of traces}
Recently the study of moments and joint moments of Hermitian ensembles have attracted considerable interest~\cite{Cunden2021,Cunden2019,Dubrovin2017,Gisonni2020,Gisonni2020jacobi}. Here we give new and self contained formulae for the joint moments of unitary ensembles in terms of characters of the symmetric group. We focus on the GUE but exactly the same method applies to the LUE and JUE.

Using \eqref{eq:power_to_schur} and \eqref{eq:schur_to_mulher}, power sum symmetric polynomials can be written in terms of multivariate Hermite polynomials
\beq\label{eq:power_to_mulher}
P_\mu = \sum_{\lambda}\sum_{\nu\subseteq\lambda}\chi_\mu^\lambda\psi^{(H)}_{\lambda\nu}\mathcal{H}_\nu.
\eeq

\begin{proof}[Proof of Thm.~\ref{thm:joint_mom_traces}]
When $|\mu|$ is odd $P_\mu$ is a sum of product of monomials in $x_i$ with the degree of at least one $x_i$ being odd. Since the generalised weight $\Delta_N^2(\textbf{x})\prod_{i=1}^Ne^{-\frac{x^2_i}{2}}$ is an even function and $P_\mu(\textbf{x})$ is  odd, $\mathbb{E}_N^{(H)}\left[P_\mu\right]$ vanishes.

When $|\mu|$ is even, writing $P_\mu$ in terms of multivariate Hermite polynomials \eqref{eq:power_to_mulher} and using orthogonality of the $\mathcal{H}_\nu$ along with \eqref{eq:psi_lambda0} proves the first line of \eqref{eq:mom_power_gue}. 
\end{proof}

\begin{remark}
When $|\mu|$ is even, the orthogonality of characters indicate that $\mathbb{E}_N^{(H)}[(\Tr M^2)^{\frac{|\mu|}{2}}]$ is a polynomial in $N$ of degree $|\mu|$. The polynomial degree of all other joint moments corresponding to the partitions of $|\mu|$ is strictly less than $|\mu|$.
\end{remark}

%
\begin{corollary}\label{cor:even odd sym}
Correlators of traces in the L.H.S. of \eqref{eq:mom_power_gue} are either even or odd polynomials in $N$. More precisely, we have
\begin{center}
\begin{tabular}{c c c} 
\hline\hline
$\mathbb{E}^{(H)}_N[P_\mu]$ & $l(\mu)$ & $|\mu|/2$ \\
\hline
\multirow{2}{10em}{Even polynomial} & even & even \\ 
& odd & odd \\ 
[10pt]
\multirow{2}{10em}{Odd polynomial} & even & odd \\ 
& odd & even \\
\hline
\end{tabular}
\end{center}
\end{corollary}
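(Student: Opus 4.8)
The plan is to determine the parity of the polynomial $f_\mu(N):=\mathbb{E}^{(H)}_N[P_\mu]$ in \eqref{eq:mom_power_gue} by computing $f_\mu(-N)$ and comparing it with $f_\mu(N)$; the polynomial is even precisely when $f_\mu(-N)=f_\mu(N)$ and odd when $f_\mu(-N)=-f_\mu(N)$. Since the only $N$-dependence in \eqref{eq:mom_power_gue} sits in the factors $C_\lambda(N)$, the whole problem reduces to understanding how $C_\lambda(N)$ behaves under $N\mapsto-N$, together with how the two character factors respond when we relabel the summation.

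The first key step is the identity $C_\lambda(-N)=(-1)^{|\lambda|}C_{\lambda'}(N)$. To obtain it I would use the product-over-boxes form from \eqref{eq:C_lambda}, namely $C_\lambda(N)=\prod_{(i,j)\in\lambda}(N+j-i)$, a product of $(N+c)$ over the contents $c=j-i$ of $\lambda$. Replacing $N$ by $-N$ and factoring out one sign per box gives $C_\lambda(-N)=(-1)^{|\lambda|}\prod_{(i,j)\in\lambda}(N-(j-i))$; since transposing the Young diagram sends the content $j-i$ to $i-j$, the remaining product is exactly $C_{\lambda'}(N)$, which proves the identity.

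The second ingredient is the standard conjugation rule for symmetric-group characters, $\chi^{\lambda'}_\rho=\epsilon_\rho\,\chi^\lambda_\rho$ with $\epsilon_\rho=(-1)^{|\rho|-l(\rho)}$ the sign of a permutation of cycle type $\rho$ \cite{Macdonald1998}. Substituting the identity for $C_\lambda(-N)$ into \eqref{eq:mom_power_gue}, reindexing the sum by $\lambda\mapsto\lambda'$, and applying the conjugation rule to both $\chi^\lambda_{(2^{|\mu|/2})}$ and $\chi^\lambda_\mu$, the prefactor $2^{-|\mu|/2}/(|\mu|/2)!$ is untouched and the three sign contributions combine:
\[
f_\mu(-N)=(-1)^{|\mu|}\,(-1)^{|\mu|/2}\,(-1)^{|\mu|-l(\mu)}\,f_\mu(N)=(-1)^{|\mu|/2+l(\mu)}\,f_\mu(N),
\]
where $(-1)^{|\mu|}$ comes from $C_\lambda(-N)$, the factor $\epsilon_{(2^{|\mu|/2})}=(-1)^{|\mu|/2}$ from the first character (cycle type $(2^{|\mu|/2})$ has $|\mu|/2$ parts), and $\epsilon_\mu=(-1)^{|\mu|-l(\mu)}$ from the second. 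Hence $f_\mu$ is an even polynomial in $N$ when $|\mu|/2+l(\mu)$ is even and an odd polynomial when it is odd, which is exactly the four-line table of the corollary.

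I do not expect a genuine obstacle here; every step is a short algebraic manipulation, and the only place requiring care is the bookkeeping of the three signs and a correct recollection of the character conjugation identity, in particular the evaluation $\epsilon_{(2^{|\mu|/2})}=(-1)^{|\mu|-|\mu|/2}=(-1)^{|\mu|/2}$. As a sanity check one can compute $\mu=(2)$ directly from \eqref{eq:mom_power_gue}, obtaining $\mathbb{E}^{(H)}_N[\Tr M^2]=N^2$, an even polynomial, consistent with $l(\mu)=1$ and $|\mu|/2=1$ giving $|\mu|/2+l(\mu)=2$ even; the companion $\mu=(1,1)$ gives $\mathbb{E}^{(H)}_N[(\Tr M)^2]=N$, an odd polynomial, consistent with $|\mu|/2+l(\mu)=3$.
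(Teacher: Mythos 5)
Your proof is correct and rests on the same two identities as the paper's own argument --- the conjugation rule $\chi^{\lambda'}_{\rho}=(-1)^{|\rho|-l(\rho)}\chi^{\lambda}_{\rho}$ and the relation between $C_{\lambda'}(N)$ and $C_{\lambda}(-N)$ --- combined with the reindexing $\lambda\mapsto\lambda'$; the paper merely packages the identical computation by symmetrizing the sum into terms $C_\lambda(N)\pm C_\lambda(-N)$ instead of evaluating $f_\mu(-N)$ directly. Your sign bookkeeping, yielding $f_\mu(-N)=(-1)^{|\mu|/2+l(\mu)}f_\mu(N)$, matches the paper's parity criterion exactly.
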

\begin{proof}
Let $|\mu|$ be even. Since $\mathbb{E}^{(H)}_N[S_\mu]$ is a polynomial in $N$ of degree $|\mu|$ and the characters $\chi^\mu_\lambda$ are integers, $\mathbb{E}^{(H)}_N[P_\lambda]$ is also a  polynomial in $N$. 
Now for any partitions $\lambda$ and $\mu$,
\beq
\begin{split}
&\chi^{\lambda^\prime}_\mu = (-1)^{|\mu|-l(\mu)}\chi_\mu^\lambda,\\
&C_{\mu^\prime}(N) = C_\mu(-N).
\end{split}
\eeq
Thus,
\beq\label{eq:mom_power_sum_symrel_1}
\begin{split}
\mathbb{E}^{(H)}_N[P_\mu] &= \frac{1}{2}\sum_{\lambda}\left(\chi_\mu^\lambda\mathbb{E}^{(H)}_N[S_\lambda] + \chi^{\lambda^\prime}_\mu\mathbb{E}^{(H)}_N[S_{\lambda^\prime}]\right)\\
&= \frac{1}{2^{\frac{|\mu|+2}{2}}\frac{|\mu|}{2}!}\sum_\lambda \chi^\lambda_{(2^{|\lambda|/2})}\chi^\lambda_\mu\left(C_\lambda(N) + (-1)^{\frac{|\mu|}{2}-l(\mu)}C_\lambda(-N)\right).
\end{split}
\eeq
The corollary is proved by noticing that the symmetric and anti-symmetric combination of $C_\lambda(N)$ and $C_\lambda(-N)$ is an even and odd polynomial in $N$, respectively.
\end{proof}

Since $\mathbb{E}^{(H)}_N[P_\mu]$ are polynomials in $N$, the domain of $N$ can be analytically continued from integers to the whole complex plane. In \cite{Cunden2019}, it is shown that $\mathbb{E}^{(H)}_N[\Tr M^{2j}]$, $j\in \mathbb{N}$, are Meixner-Pollaczek polynomials which are a family of orthogonal polynomials,
\beq
\begin{split}
\mathbb{E}^{(H)}_N\left[\Tr M^{2j}\right] &= N(2j-1)!!i^{-j}\frac{1}{j+1}P^{(1)}_j\left(iN,\frac{\pi}{2}\right)
\\
& = N(2j-1)!!\pFq{2}{1}{-j,1-N}{2}{2}
\end{split}
\eeq
where $P^{(1)}_k\left(iN,\pi/2\right)$ is a Meixner-Pollaczek polynomial and ${}_{2}F_{1}(\dots)$ is a terminating hypergeometric series. These polynomials $P^{(\lambda)}_n(x,\phi)$ satisfy
\beq
\int_{-\infty}^\infty P^{(\lambda)}_m\left(x,\phi\right)P^{(\lambda)}_n\left(x,\phi\right)|\Gamma(\lambda + ix)|^2 e^{(2\phi-\pi)x}\,dx = \frac{2\pi \Gamma(n+2\lambda)}{(2\sin\phi)^{2\lambda}n!}\delta_{nm}.
\eeq
Clearly, the zeros of $\mathbb{E}^{(H)}_N\left[\Tr M^{2j}\right]$ lie on the line $\text{Re}(N)=0$
 
Correlators of traces are combinatorial objects as they are connected to enumeration of ribbon graphs \cite{Bessis1980,tHooft1974planar,tHooft1974two}. This connection is briefly discussed in Appendix~\ref{app:ribbon graphs}. By counting ribbon graphs, it can be easily shown that
\beq
\begin{split}
\mathbb{E}^{(H)}_N\left[\Tr M^{2k-1} \Tr M\right] &=(2k-1)\mathbb{E}^{(H)}_N[\Tr M^{2k-2}]\\
&= N(2k-1)!!i^{-k+1}\frac{1}{k}P_{k-1}^{(1)}\left(iN,\frac{\pi}{2}\right).
\end{split}
\eeq
Thus $\mathbb{E}^{(H)}_N[P_\mu]$, $\mu=(2k-1,1)$, is also a polynomial in $N$. A few examples of joint moments of traces corresponding to partitions of 6 are given below. Here $p_j=\Tr M^j$.
\beq\label{eq:expectedvalue_par6_gue} 
\begin{matrix*}[l]
\mathbb{E}^{(H)}_N\left[p_6\right] = 5N^2(N^2+2) &\mathbb{E}^{(H)}_N\left[p_5p_1\right] = 5N(2N^2+1)\\
\mathbb{E}^{(H)}_N\left[p_4p_2\right] = N(2N^2+1)(N^2+4) &\mathbb{E}^{(H)}_N\left[p_4p_1^2\right] = N^2(2N^2+13)\\
\mathbb{E}^{(H)}_N\left[p_3^2\right] = 3N(4N^2+1) &\mathbb{E}^{(H)}_N\left[p_3p_2p_1\right] = 3N^2(N^2+4)\\
\mathbb{E}^{(H)}_N\left[p_3p_1^3\right] = 3N(3N^2+2) &\mathbb{E}^{(H)}_N\left[p_2^3\right] = N^2(N^2+2)(N^2+4)\\
\mathbb{E}^{(H)}_N\left[p_2^2p_1^2\right] = N(N^2+2)(N^2+4) &\mathbb{E}^{(H)}_N\left[p_2p_1^4\right] = 3N^2(N^2+4)\\
\mathbb{E}^{(H)}_N\left[p_1^6\right] =15N^3
\end{matrix*}
\eeq

\section{Eigenvalue fluctuations}\label{sec:fluctuations}
\subsection{Moments}
 Here we focus on the GUE but the Laguerre and Jacobi ensembles can be studied in a similar way. Consider the rescaled GUE matrices $M_R=M/\sqrt{4N}$ of size $N$ with \textit{j.p.d.f.}
\beq\label{eq:rescaled_jpdf_her}
\frac{(4N)^{\frac{N^2}{2}}}{(2\pi)^{\frac{N}{2}}\prod_{j=1}^Nj!}\prod_{1\leq i<j\leq N}(x_i-x_j)^2\prod_{j=1}^Ne^{-2Nx_j^2}
\eeq
The limiting eigenvalue density is
\beq\label{eq:semicircle}
\rho_{sc}(x) = \frac{2}{\pi}\sqrt{1-x^2}.
\eeq

\begin{proposition}\label{prop:k=1}
 We have
\beq
\mathbb{E}^{(H)}_N\left[(\Tr M_R)^{2n}\right]=\frac{2n!}{2^{3n}n!}.
\eeq
\end{proposition}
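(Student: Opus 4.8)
The plan is to reduce the computation to the joint--moment formula \eqref{eq:mom_power_gue} and then to collapse the resulting character sum to a single monomial in $N$. First I would record the effect of the rescaling: reading the Gaussian weight $e^{-2Nx_j^2}$ of \eqref{eq:rescaled_jpdf_her} against the unscaled weight $e^{-x^2/2}$, the eigenvalues of $M$ and of $M_R$ are related by $x^{(M)}=\sqrt{4N}\,x^{(M_R)}$, so that $\Tr M_R=(4N)^{-1/2}\Tr M$ and hence $(\Tr M_R)^{2n}=(4N)^{-n}(\Tr M)^{2n}$. Since $(\Tr M)^{2n}=P_{(1^{2n})}(\textbf{x})$ is the power sum attached to the partition $\mu=(1^{2n})$ of even weight $2n$, it therefore suffices to prove
\beq
\mathbb{E}^{(H)}_N\big[(\Tr M)^{2n}\big] = \frac{(2n)!}{2^n n!}\,N^n .
\eeq

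Next I would apply Theorem~\ref{thm:joint_mom_traces}(a), i.e.\ \eqref{eq:mom_power_gue}, with $\mu=(1^{2n})$. Because the character on the identity class is the dimension, $\chi^\lambda_{(1^{2n})}=\dim V_\lambda$, this gives
\beq
\mathbb{E}^{(H)}_N\big[(\Tr M)^{2n}\big] = \frac{1}{2^n n!}\sum_{\lambda\vdash 2n}\chi^\lambda_{(2^n)}\,\dim V_\lambda\,C_\lambda(N).
\eeq
A priori each $C_\lambda(N)$ is a polynomial of degree $2n$ in $N$, so the right-hand side looks like a genuine degree-$2n$ polynomial; the content of the proposition is that everything collapses to the monomial $N^n$.

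The key step is this collapse. Using the identity $C_\lambda(N)=|\lambda|!\,S_\lambda(1^N)/\dim V_\lambda$ from \eqref{eq:C_lambda}, the factors of $\dim V_\lambda$ cancel and $\dim V_\lambda\,C_\lambda(N)=(2n)!\,S_\lambda(1^N)$. Feeding this in and invoking the power-sum--to-Schur expansion $P_\nu=\sum_\lambda\chi^\lambda_\nu S_\lambda$ from \eqref{eq:power_to_schur} with $\nu=(2^n)$, the sum over $\lambda$ becomes
\beq
\sum_{\lambda\vdash 2n}\chi^\lambda_{(2^n)}\,S_\lambda(1^N) = P_{(2^n)}(1^N) = \big(p_2(1^N)\big)^n = N^n,
\eeq
since $p_2(1^N)=\sum_{i=1}^N 1=N$. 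Hence the character sum equals $(2n)!\,N^n$, which yields the displayed formula for $\mathbb{E}^{(H)}_N[(\Tr M)^{2n}]$, and multiplying by $(4N)^{-n}$ gives $\mathbb{E}^{(H)}_N[(\Tr M_R)^{2n}]=(2n)!/(2^{3n}n!)$, as claimed.

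I do not expect a serious obstacle; the one point needing care is the collapse of the character sum, where it is essential to recognise that the degree-$2n$ polynomials $S_\lambda(1^N)$ recombine, via $P_{(2^n)}$, into the degree-$n$ monomial $N^n$. This reflects the elementary fact that $\Tr M=\sum_i M_{ii}$ is a sum of $N$ independent standard Gaussians, hence exactly distributed as $N(0,N)$, which both provides an independent check of the value $\tfrac{(2n)!}{2^n n!}N^n$ and shows that the identity persists for all $N$, beyond the range $2n\le N$ in which \eqref{eq:mom_power_gue} was established.
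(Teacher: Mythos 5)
Your argument is correct and follows essentially the same route as the paper: apply \eqref{eq:mom_power_gue} with $\mu=(1^{2n})$, use $\chi^\lambda_{(1^{2n})}=\dim V_\lambda$ together with $C_\lambda(N)=|\lambda|!\,S_\lambda(1^N)/\dim V_\lambda$ from \eqref{eq:C_lambda}, and collapse the character sum via $\sum_\lambda\chi^\lambda_{(2^n)}S_\lambda(1^N)=P_{(2^n)}(1^N)=N^n$. The only differences are cosmetic --- you rescale at the end rather than working with \eqref{eq:rescaled p_mu} throughout --- and your closing observation that $\Tr M\sim\mathcal{N}(0,N)$ gives the identity for all $N$, beyond the range $2n\le N$ of Theorem~\ref{thm:joint_mom_traces}, is a worthwhile extra check.
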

\begin{proof}
When $\mu=(1^{2n})$ in \eqref{eq:mom_power_gue}, using \eqref{eq:C_lambda} and the fact that $\chi^\lambda_{(1^{2n})} = \dim\, V_\lambda$,
\beq
\mathbb{E}^{(H)}_N\left[(\Tr M_R)^{2n}\right] = \frac{2n!}{2^{3n}n!}\frac{1}{N^n}\sum_{\lambda\vdash 2n}\chi^{\lambda}_{(2^n)}S_\lambda(1^N).
\eeq
Using \eqref{eq:power_to_schur} and $P_\nu(1^N)=N^{l(\nu)}$, 
\beq
\begin{split}
\mathbb{E}^{(H)}_N\left[(\Tr M_R)^{2n}\right]&=
\frac{2n!}{2^{3n}n!}\frac{1}{N^n}P_{(2^n)}(1^N)=\frac{2n!}{2^{3n}n!}.
\end{split}
\eeq
\end{proof}

The R.H.S. is the $2n^{th}$ moment of $r_1/2$ where $r_{1}\sim\mathcal{N}(0,1)$. This exact equality of moments with the moments of Gaussian normals is special to $\mathbb{E}^{(H)}_N\left[(\Tr M_R)^{2n}\right]$. In general, one can consider moments of the form $\mathbb{E}_N^{(H)}[(\Tr\,g(M))^{n}]$ for a well-defined function $g$.

Johansson \cite{Johansson1998} showed that when $g$ is the Chebyshev polynomial of the first kind of degree $k$, the random variable
\beq
X_k = \Tr T_k(M_R) - \mathbb{E}^{(H)}_N[\Tr T_k(M_R)],\quad k=0,1,\dots,
\eeq
converges in distribution to the Gaussian variable $\mathcal{N}(0,k/4)$. In this section we prove Theorem \ref{thm:sub-leading}, which implies that
\beq\label{eq:moments of xk}
\mathbb{E}^{(H)}_N[X_k^n] = \left(\frac{\sqrt{k}}{2}\right)^n\frac{n!}{2^{\frac{n}{2}}\left(\frac{n}{2}\right)!}\eta_n + d(n,k)\frac{1}{N^{1+m_{k, n}}} + O(N^{-2}),
\eeq
where $\eta_n = 1$ if $n$ is even and 0 otherwise, and where $m_{k, n}$ is either 0 or 1, with asymptotic estimates for $d(n,k)$. Results for $k=1$ are already discussed in Prop.~\ref{prop:k=1}. We first consider $X_2$ and discuss results for general values of $k$ in Sec.~\ref{sec:gen degree}.

\subsubsection{Second degree}
We have that 
\beq
\mathbb{E}^{(H)}_N[(\Tr M^2_R)^n] = \frac{1}{(4N)^n}\prod_{j=0}^{n-1}(N^2+2j).
\eeq
For a fixed $n$, this can be obtained by substituting in the character values of $\mathcal{S}_{2n}$ in \eqref{eq:mom_power_gue}. Alternatively, a proof by counting topologically invariant ribbon graphs is sketched in App.~\ref{app:ribbon graphs}. Clearly,
\beq
\begin{split}
\mathbb{E}^{(H)}_N[X^n_2] &= \mathbb{E}^{(H)}_N\left[\left(2\Tr M^2_R-\frac{N}{2}\right)^n\right]\\
&=\sum_{j=0}^n\binom{n}{j}\left(-\frac{N}{2}\right)^{n-j}\mathbb{E}^{(H)}_N[(2\Tr M_R^2)^j]\\
&=\frac{N^n}{2^{n+1}}\sum_{j=0}^n(-1)^{n-j}2^jN^{2-2j}\binom{n}{j}\frac{\Gamma\left(\frac{N^2}{2}+j\right)}{\Gamma\left(\frac{N^2}{2}+1\right)}
\end{split}
\eeq
The asymptotic expansion for the ratios of Gamma functions is \cite{Frenzen1987}
\beq\label{eq:asymp ratio gamma}
\frac{\Gamma(z+a)}{\Gamma(z+b)}\sim z^{a-b}\sum_{l=0}^\infty \frac{1}{z^l}\binom{a-b}{l}B_l^{(a-b+1)}(a),\quad a,b\in\mathbb{C},\quad z\rightarrow\infty,
\eeq
where $B_j^{(l)}$ are generalised Bernoulli polynomials. Hence
\beq\label{eq:x_2}
\mathbb{E}^{(H)}_N[X_2^n] = \frac{N^n}{2^n}\sum_{j=1}^n\sum_{l=0}^{j-1}(-1)^{n-j+l}\frac{2^l}{N^{2l}}\binom{n}{j}\binom{j-1}{l}B^{(j)}_l(0).
\eeq
In arriving at \eqref{eq:x_2} we used  
\beq
B^{(j)}_l(j) = (-1)^lB^{(j)}_l(0).
\eeq
Here $B^{(j)}_l(0)$ are generalised Bernoulli numbers and the first few numbers are given below.
\beq\label{eq:gen ber num}
\begin{split}
B^{(j)}_0(0) &= 1\\
B^{(j)}_1(0) &= -\frac{j}{2}\\
B^{(j)}_2(0) &= \frac{j^2}{4}-\frac{j}{12}\\
B^{(j)}_3(0) &= -\frac{j^3}{8}+\frac{j^2}{8}.
\end{split}
\eeq
By inserting \eqref{eq:gen ber num} into \eqref{eq:x_2},
\beq
\begin{split}
\text{Coef. of $N^n$}:&\,\, \frac{1}{2^n}\sum_{j=1}^n(-1)^{n-j}\binom{n}{j}=0\\
\text{Coef. of $N^{n-2}$}:&\,\, \frac{1}{2^{n}}\sum_{j=2}^{n}(-1)^{n-j}\binom{n}{j}j(j-1)\\
&\,\, = \frac{n}{2^n}\sum_{j=2}^n(-1)^{n-j}(j-1)\binom{n-1}{j-1} = 0.
\end{split}
\eeq

Calculating the coefficient of $N^{n-2l}$ for arbitrary $n$ and $l$ is not straightforward because there are no simple expressions for generalised Bernoulli numbers. Though these numbers can be written in terms of Stirling's numbers of first kind, the coefficients can be explicitly computed only for small values of $l$. It can be shown for a given $n$ that 
\beq
\begin{split}
\text{Coef. of $N^{n-2k}$} &= 0,\quad \text{for $0\leq k< \lfloor n/2\rfloor$},\\
\text{Coef. of $N^0$} &= \frac{n!}{2^{n}\left(\frac{n}{2}\right)!}\eta_n,
\end{split}
\eeq
where $\eta_n=1$ if $n$ is even and 0 otherwise.   Our goal is not to compute these coefficients more generally, but rather to give an estimate for the sub-leading term in \eqref{eq:moments of xk}. Since the Chebyshev polynomials of even and odd degree do not mix, the moments of $X_k$ show a similar behaviour as in Corollary.~\ref{cor:even odd sym}. 
\beq
\mathbb{E}^{(H)}_N[X_2^n]=
\begin{cases}
d_2(n,2)\frac{1}{N}+O(N^{-3}),\,\qquad\quad\qquad\text{if $n$ is odd},\\
\frac{n!}{2^{n}\left(\frac{n}{2}\right)!} + d_3(n,2)\frac{1}{N^{2}} +O(N^{-4}),\quad \text{if $n$ is even}.
\end{cases}
\eeq
Coefficients $d_2(n,2)$ and $d_3(n,2)$ can be estimated using \eqref{eq:x_2}. In the next section, we give an estimate of these coefficients for arbitrary values of $n$ and $k$.

\subsubsection{General degree}\label{sec:gen degree}
The explicit expression for the joint moments of eigenvalues in Thm.~\ref{thm:joint_mom_traces} allows us to obtain Thm.~\ref{thm:sub-leading}. Consequently, $X_k$ converges to a normal random variable. For a fixed $k$ and $n$,
\beq
X_k\rightarrow \frac{\sqrt{k}}{2}\mathcal{N}(0,1)\quad \text{as}\,\, N\rightarrow\infty.
\eeq
In reality, the correct bounds in Thm.~\ref{thm:sub-leading} are much more smaller than given. This is due to sequential cancellations in the sum 
\beq
\sum_\lambda\chi^\lambda_\mu\chi^\lambda_{2^{|\mu|/2}}C_\lambda(N)
\eeq
and in the Chebyshev expansion
\beq
\Tr\, T_k(M_R)=\frac{k}{2}\sum_{j=0}^{\lfloor\frac{k}{2}\rfloor}(-1)^j\frac{(k-j-1)!}{j!(k-2j)!}2^{k-2j}M_R^{k-2j}.
\eeq
The bounds in Thm.~\ref{thm:sub-leading} are better for smaller moments.

To prove Thm.~\ref{thm:sub-leading}, we first need to estimate the coefficient of the $1/N$ term in the Laurent series of $\mathbb{E}^{(H)}_N[P_\mu]$ of rescaled matrices which leads to estimating the characters of the symmetric group.

All the characters of the symmetric group are integers and satisfy
\beq\label{eq:char trivial bound}
\frac{|\chi^\lambda_\mu|}{\chi^\lambda_{(1^{|\mu|})}}<1.
\eeq
It turns out that under suitable assumptions, the ratios $|\chi^\lambda_\mu|/\chi^\lambda_{(1^{|\mu|})}$ are very small, sometimes exponentially and super-exponentially small \cite{Fomin1997,Roichman1996}. Particularly useful bounds are 
\beq\label{eq:char bound in terms of power}
|\chi^\lambda_\mu|\leq (\chi^\lambda_{(1^{|\mu|})})^{a_\mu},
\eeq
where $a_\mu$ depends on $\mu$.

The frequency representation of a partition $\mu=(1^{b_1},2^{b_2},\dots,k^{b_k})$ also represents a permutation cycle of an element in $\mathcal{S}_{|\mu|}$. The number $b_j$ gives the number of $j-$cycles in $\mu$, $1\leq j\leq k$. For example, if $b_1=0$ then are no 1-cycles. In other words, there are no fixed points when $b_1=0$.  

The only obstruction to the small character values of $|\chi^\lambda_\mu|$ is when $\mu$ has many short cycles \cite{Larsen2008}. With this information,\begin{proposition}
\leavevmode
\begin{enumerate}[label=(\alph*)]
\item Given any $\lambda\in\text{Irr}(\mathcal{S}_n)$, let $\mu=(m^{n/m})$, then \cite{Fomin1997}
\beq\label{eq:box char}
|\chi^\lambda_\mu|\leq c\,n^{\frac{1}{2}\left(1-\frac{1}{m}\right)}\left(\chi^\lambda_{(1^{|\mu|})}\right)^{\frac{1}{m}},
\eeq
where $c$ is an absolute constant.

\item If $\mu\in\mathcal{S}_n$ is fixed-point-free, or has $n^{o(1)}$ fixed points, then for any $\lambda\in\text{Irr}(\mathcal{S}_n)$ \cite{Larsen2008},
\beq\label{eq:no fix cyc char}
|\chi^\lambda_\mu|\leq \left(\chi^\lambda_{1^{|\mu|}}\right)^{\frac{1}{2}+o(1)}.
\eeq

\item Fix $a\leq 1$ and let $\mu\in\mathcal{S}_n$ with at most $n^a$ cycles. Then for any $\lambda\in\text{Irr}(\mathcal{S}_n)$ \cite{Larsen2008},
\beq\label{eq:cycle char}
|\chi^\lambda_\mu|\leq \left(\chi^\lambda_{(1^{|\mu|})}\right)^{a+o(1)}.
\eeq
\end{enumerate}
\end{proposition}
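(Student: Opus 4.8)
The three estimates are quoted from the character-theory literature, so my plan is not to reprove them \emph{ab initio} but to indicate which machinery delivers each one. Part (a) is the Fomin--Lulov bound \cite{Fomin1997}, which I would obtain from the explicit evaluation of $\chi^\lambda_\mu$ at a rectangular cycle type $\mu=(m^{n/m})$; parts (b) and (c) are instances of the Larsen--Shalev theory \cite{Larsen2008}, which controls character ratios through the number of cycles of $\mu$.

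For part (a), the starting point is the Murnaghan--Nakayama rule, applied $k=n/m$ times so as to strip border strips (ribbons) of length $m$ from $\lambda$. The character $\chi^\lambda_{(m^k)}$ is nonzero precisely when the $m$-core of $\lambda$ is empty, in which case the core--quotient correspondence yields the closed evaluation
\[
\chi^\lambda_{(m^k)} = \pm\,\frac{k!}{k_0!\cdots k_{m-1}!}\prod_{i=0}^{m-1}\dim V_{\lambda^{(i)}},
\]
where $(\lambda^{(0)},\dots,\lambda^{(m-1)})$ is the $m$-quotient of $\lambda$ and $k_i=|\lambda^{(i)}|$ with $\sum_i k_i=k$. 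The remaining task is to compare this against $\chi^\lambda_{(1^n)}=\dim V_\lambda$. Here I would use the hook-length formula together with the fact that the hook lengths of $\lambda$ divisible by $m$ correspond, after division by $m$, to the full collection of hook lengths of the quotient parts; rewriting both quantities as products of hook lengths and estimating the leftover factors and the multinomial coefficient by Stirling's formula produces the exponent $1/m$ on $\chi^\lambda_{(1^n)}$ and the polynomial prefactor $n^{\frac12(1-1/m)}$.

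For parts (b) and (c) I would invoke the Larsen--Shalev framework directly. To each $\lambda$ they attach a \emph{virtual degree} and prove a bound of the form $|\chi^\lambda_\mu|\le\bigl(\chi^\lambda_{(1^{n})}\bigr)^{E(\mu)+o(1)}$, in which the exponent $E(\mu)$ is governed by the cycle structure of $\mu$, essentially the number of cycles of $\mu$ measured against $n$ on a logarithmic scale. Part (c) is exactly this estimate when $\mu$ has at most $n^a$ cycles, giving exponent $a+o(1)$, and part (b) is the special case in which the fixed points (the short cycles that could otherwise inflate the exponent) number only $n^{o(1)}$, pinning the exponent at $\tfrac12+o(1)$.

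The genuinely hard input is the Larsen--Shalev estimate underlying (b) and (c): it does not follow from the Murnaghan--Nakayama rule alone, but requires their full entropy-type analysis of the virtual degree, together with the separate treatment of partitions with many short cycles flagged in the text. By comparison, (a) is elementary once the core--quotient factorization is available, the only delicate point being the Stirling bookkeeping that fixes the precise power of $n$ in the prefactor.
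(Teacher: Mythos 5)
The paper offers no proof of this proposition at all: it is stated as a collection of known character bounds quoted directly from \cite{Fomin1997} and \cite{Larsen2008}, so your decision to defer to those references rather than reprove them is exactly what the authors do. Your accompanying sketches --- the Murnaghan--Nakayama / $m$-quotient evaluation of $\chi^\lambda_{(m^k)}$ with hook-length and Stirling bookkeeping for part (a), and the Larsen--Shalev virtual-degree analysis with the exponent governed by the cycle structure of $\mu$ for parts (b) and (c) --- accurately describe the arguments in the cited sources and correctly identify where the hard input lies.
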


\begin{proposition}\label{prop:p_mu coef 1/N bound}
For a given $\mu$, $\mathbb{E}^{(H)}_N[P_\mu]$ is a Laurent polynomial in $N$ with  
\beq\label{eq:p_mu coef 1/N bound}
\text{Coefficient of $1/N^q$ in $\mathbb{E}^{(H)}_N[P_\mu]$}\lesssim 2^{-\frac{|\mu|}{2}-q-\frac{3}{2}}|\mu|^{\frac{3|\mu|}{4}-\frac{11}{8}+q}e^{-\frac{|\mu|}{4}+\pi\sqrt{\frac{2}{3}|\mu|}},\quad q\in\mathbb{N},
\eeq
as $|\mu| \to \infty$.
\end{proposition}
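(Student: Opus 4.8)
The plan is to convert the extraction of the $1/N^q$ coefficient into a finite sum of symmetric–group characters and then to estimate that sum asymptotically as $m:=|\mu|\to\infty$; here $P_\mu$ is understood as the power sum of the eigenvalues of the rescaled matrix $M_R=M/\sqrt{2N}$. First I would pass to the rescaled matrices: since $P_\mu$ is homogeneous of degree $m$, one has $\mathbb{E}^{(H)}_N[P_\mu(M_R)]=(2N)^{-m/2}\,\mathbb{E}^{(H)}_N[P_\mu(M)]$, so Theorem~\ref{thm:joint_mom_traces}(a) gives, for $m$ even,
\beq
\mathbb{E}^{(H)}_N[P_\mu(M_R)]=\frac{1}{2^{m}\left(\tfrac{m}{2}\right)!\,N^{m/2}}\sum_{\lambda\vdash m}\chi^\lambda_\mu\,\chi^\lambda_{(2^{m/2})}\,C_\lambda(N),
\eeq
and $0$ for $m$ odd; since each $C_\lambda(N)$ is a polynomial in $N$ this is manifestly a Laurent polynomial. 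To read off the coefficient of $N^{-q}$ I would use the product form $C_\lambda(N)=\prod_{(i,j)\in\lambda}(N-i+j)$ from \eqref{eq:C_lambda}: it is monic of degree $m$ with roots at minus the contents $c(i,j)=j-i$, so the coefficient of $N^{m/2-q}$ in $C_\lambda$ is the elementary symmetric function $e_{m/2+q}$ of the multiset of contents $\{c(i,j):(i,j)\in\lambda\}$. Hence
\beq\label{eq:coeff_as_charsum}
\text{Coeff. of }1/N^q=\frac{1}{2^{m}\left(\tfrac{m}{2}\right)!}\sum_{\lambda\vdash m}\chi^\lambda_\mu\,\chi^\lambda_{(2^{m/2})}\,e_{m/2+q}\big(c(\lambda)\big).
\eeq

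Next I would make the estimate independent of the shape of $\mu$, which is what produces a bound depending only on $m$. Applying the triangle inequality to \eqref{eq:coeff_as_charsum} together with the trivial character bound \eqref{eq:char trivial bound} in the form $|\chi^\lambda_\mu|\le\chi^\lambda_{(1^m)}=\dim V_\lambda$ reduces the problem to the $\mu$-free sum $\frac{1}{2^{m}(\tfrac m2)!}\sum_{\lambda\vdash m}\dim V_\lambda\,|\chi^\lambda_{(2^{m/2})}|\,|e_{m/2+q}(c(\lambda))|$. Keeping the sign of $e_{m/2+q}$ would be useless here: for $\mu=(1^m)$ one has $\chi^\lambda_\mu=\dim V_\lambda$, yet Proposition~\ref{prop:k=1} shows $\mathbb{E}^{(H)}_N[(\Tr M_R)^{m}]$ is a constant, so the signed sum vanishes for every $q\ge1$. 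This exact cancellation is precisely why the true coefficient is much smaller than the bound, and why \eqref{eq:p_mu coef 1/N bound} is to be read as the asymptotics of this bounding quantity.

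I would then estimate the four factors on the exponential scale and assemble them. Stirling applied to $1/\big(2^m(\tfrac m2)!\big)$ contributes $2^{-m/2}e^{m/2}m^{-m/2}$; the $m'=2$ case of \eqref{eq:box char} gives $\dim V_\lambda\,|\chi^\lambda_{(2^{m/2})}|\le c\,m^{1/4}(\dim V_\lambda)^{3/2}$, whose maximum over $\lambda\vdash m$ is governed by $(\max_\lambda\dim V_\lambda)^{3/2}\sim (m!)^{3/4}\sim m^{3m/4}e^{-3m/4}$ (attained near balanced shapes); the content factor $|e_{m/2+q}(c(\lambda))|$ supplies the remaining $m^{m/2}$ together with the $q$-dependence $2^{-q}m^{q}$; and the number of partitions, via Hardy–Ramanujan $p(m)\sim\frac{1}{4\sqrt3\,m}e^{\pi\sqrt{2m/3}}$, gives the factor $e^{\pi\sqrt{2m/3}}$. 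Multiplying the exponential parts, $2^{-m/2}e^{m/2}m^{-m/2}\cdot m^{3m/4}e^{-3m/4}\cdot m^{m/2}\cdot e^{\pi\sqrt{2m/3}}=2^{-m/2}m^{3m/4}e^{-m/4+\pi\sqrt{2m/3}}$, reproduces the leading form of \eqref{eq:p_mu coef 1/N bound}; the polynomial corrections $m^{-11/8}$ and the constant $2^{-q-3/2}$ then emerge from careful bookkeeping of the subexponential prefactors (the width of the saddle, the subexponential correction to the maximal dimension, and the prefactor in $p(m)$).

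The hard part will be exactly this last asymptotic evaluation: one must control $\dim V_\lambda$, $\chi^\lambda_{(2^{m/2})}$ and $e_{m/2+q}(c(\lambda))$ simultaneously, identify the diagrams $\lambda$ that dominate the sum, and track how the $p(m)$ partitions contribute — whether the sum concentrates on balanced shapes or spreads over many shapes fixes the polynomial prefactor. A cleaner alternative I would also pursue is to rewrite $e_{m/2+q}(c(\lambda))$ through the power sums of contents (equivalently, the central characters, or Jucys–Murphy eigenvalues) and to evaluate the resulting class-algebra sum by a generating-function saddle point, in which the factor $e^{\pi\sqrt{2m/3}}$ arises directly from a $\prod_k(1-x^k)^{-1}$-type integrand.
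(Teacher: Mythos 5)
Your overall strategy is the same as the paper's: reduce to the character sum of Theorem~\ref{thm:joint_mom_traces}(a) for rescaled matrices, discard signs and bound $|\chi^\lambda_\mu|$ by $\chi^\lambda_{(1^{|\mu|})}$, control $|\chi^\lambda_{(2^{|\mu|/2})}|$ via \eqref{eq:box char} with $m=2$, insert the maximal dimension bound \eqref{eq:max dim} and the Hardy--Ramanujan count \eqref{eq:no of par}, and finish with Stirling. You also correctly identify that \eqref{eq:p_mu coef 1/N bound} is really the asymptotics of an upper bound (the signed sum can vanish, as your $\mu=(1^m)$ example shows), which is exactly how the paper's proof proceeds (it passes from $\leq$ to $\sim$). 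The one genuinely different ingredient is how you extract the $1/N^q$ coefficient: you use the exact identity that the coefficient of $N^{|\mu|/2-q}$ in $C_\lambda(N)=\prod_{(i,j)\in\lambda}(N-i+j)$ is $e_{|\mu|/2+q}$ of the contents, whereas the paper first bounds $C_\lambda(N)\leq\Gamma(N+|\lambda|)/\Gamma(N)$ and then expands this $\lambda$-independent ratio via \eqref{eq:asymp ratio gamma}, so that the coefficient is controlled by $\binom{|\mu|}{|\mu|/2+q}B^{(|\mu|+1)}_{|\mu|/2+q}(|\mu|)$ and estimated through the recurrence \eqref{eq:gen ber estimate}. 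Your route is cleaner in principle (it is exact and avoids the tacit step of transferring a pointwise polynomial inequality to an inequality between coefficients), and it buys the possibility of exploiting cancellation later.

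The genuine gap is precisely at that point: you assert that $|e_{|\mu|/2+q}(c(\lambda))|$ ``supplies the remaining $m^{m/2}$ together with the $q$-dependence $2^{-q}m^q$'' without any argument, and this is where all the quantitative work lies. The estimate must be uniform over $\lambda\vdash m$, and the trade-off you yourself flag is real: for spread-out shapes (e.g.\ $\lambda=(m)$, contents $0,1,\dots,m-1$) one has $e_{m/2+q}=\bigl[{m\atop m/2-q}\bigr]$, which carries an extra exponential factor of order $C^m$ beyond $m^{m/2+q}$ coming from the binomial-type count of subsets, while for balanced shapes the contents live in $[-O(\sqrt m),O(\sqrt m)]$ and the factor is far smaller but $\dim V_\lambda$ is maximal. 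The paper resolves this by the crude domination $C_\lambda(N)\leq\Gamma(N+|\lambda|)/\Gamma(N)$, which replaces every content multiset by $\{0,\dots,m-1\}$ and yields the explicit $\binom{m}{m/2+q}\cdot 2^{-(m/2+q+1)}m^{m/2+q}$; your bookkeeping silently drops the $\binom{m}{m/2+q}\sim 2^m/\sqrt{m}$ factor, and this is compensated only because your rescaling prefactor $(2N)^{-m/2}$ differs by $2^{m/2}$ from the paper's $(4N)^{-m/2}$ (the paper's j.p.d.f.\ \eqref{eq:rescaled_jpdf_her} with weight $e^{-2Nx^2}$ corresponds to $M/(2\sqrt N)$, whence the $(8N)^{|\mu|/2}$ in \eqref{eq:rescaled p_mu}). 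So to complete your argument you would need to (i) fix the normalization and (ii) actually prove a $\lambda$-uniform bound on $e_{m/2+q}(c(\lambda))$, e.g.\ by showing the sorted absolute contents of any $\lambda\vdash m$ are dominated termwise by $(m-1,m-2,\dots,0)$ and then estimating the resulting Stirling-number coefficient, which is in substance what the paper's Gamma-ratio/Bernoulli computation does.
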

\begin{proof}
For rescaled matrices, the expected value of $P_\mu$ is
\beq\label{eq:rescaled p_mu}
\mathbb{E}_N^{(H)}\big[\prod_{j=1}^l\Tr M_R^{\mu_j}\big]=
\begin{cases}
\frac{1}{(8N)^{\frac{|\mu|}{2}}\frac{|\mu|}{2}!}\sum_{\lambda\vdash |\mu|}\chi^\lambda_{(2^{|\lambda|/2})}\chi^\lambda_\mu C_\lambda(N), \quad\text{$|\mu|$ is even},\\
0,\hspace{14.5em} otherwise,
\end{cases}
\eeq
Using \eqref{eq:C_lambda} we obtain
\beq
\frac{\Gamma(N+1)}{\Gamma(N-|\lambda|+1)}\leq C_\lambda(N)\leq \frac{\Gamma(N+|\lambda|)}{\Gamma(N)}.
\eeq
Using the asymptotics of Gamma functions, as $N\rightarrow\infty$,
\beq
\frac{\Gamma(N+|\lambda|)}{\Gamma(N)}\sim N^{|\lambda|}\sum_{l=0}^\infty\frac{1}{N^l}\binom{|\lambda|}{l}B_l^{(|\lambda|+1)}(|\lambda|),
\eeq
where $B_l^{(j)}(x)$ are generalised Bernoulli polynomials of degree $l$. Thus, the coefficient of $1/N^q$ in \eqref{eq:mom_power_gue} is bounded by
\beq\label{eq:p_mu coef 1/N first bound}
\text{Coefficient of $1/N^q$ in $\mathbb{E}^{(H)}_N[P_\mu]$}\leq \frac{1}{8^{\frac{|\mu|}{2}}\frac{|\mu|}{2}!}\binom{|\mu|}{\frac{|\mu|}{2}+q}B^{(|\mu|+1)}_{\frac{|\mu|}{2}+q}(|\mu|)\sum_\lambda|\chi^\lambda_\mu||\chi^\lambda_{2^{|\mu|/2}}|.
\eeq
Using \eqref{eq:char bound in terms of power} and \eqref{eq:box char}, the R.H.S. of \eqref{eq:p_mu coef 1/N first bound} is bounded from above by
\beq\label{eq:p_mu coef 1/N second bound}
\frac{c}{8^{\frac{|\mu|}{2}}\frac{|\mu|}{2}!}\binom{|\mu|}{\frac{|\mu|}{2}+q}|\mu|^{\frac{1}{4}}(\chi^\lambda_{1^{|\mu|}})_{\max}^{a_\mu+\frac{1}{2}}\,\#\intpar\text{($|\mu|$)}\,B^{(|\mu|+1)}_{\frac{|\mu|}{2}+q}(|\mu|)
\eeq
The maximum of the dimension of the irreducible representation \cite{Mckay1976}
\beq\label{eq:max dim}
(\chi^\lambda_{1^{|\mu|}})_{\max}\leq (2\pi)^{\frac{1}{4}}|\mu|^{\frac{|\mu|}{2}+\frac{1}{4}}e^{-\frac{|\mu|}{2}}
\eeq
and number of partitions grow as \cite{Hardy1918,Uspensky1920}
\beq\label{eq:no of par}
\#\intpar\text{($|\mu|$)}\sim\frac{1}{4\sqrt{3}|\mu|}\exp\left(\pi\sqrt{\frac{2|\mu|}{3}}\right),\quad \text{as}\,\, |\mu|\rightarrow\infty.
\eeq
Polynomials $B^{(j)}_l(x)$ satisfy
\beq
B^{(j+1)}_l(x) = \left(1-\frac{l}{j}\right)B^{(j)}_l(x)+l\left(\frac{x}{j}-1\right)B^{(j)}_{l-1}(x).
\eeq
Hence,
\beq\label{eq:gen ber estimate}
B^{(|\mu|+1)}_{\frac{|\mu|}{2}+q}(|\mu|) = \left(\frac{1}{2}-\frac{q}{|\mu|}\right)B^{(|\mu|)}_{\frac{|\mu|}{2}+q}(|\mu|)\lesssim \frac{1}{2^{\frac{|\mu|}{2}+q+1}}|\mu|^{\frac{|\mu|}{2}+q}, \qquad \text{ as $|\mu|\to \infty$}.
\eeq
Inserting  $a_\mu=1$, \eqref{eq:max dim}, and \eqref{eq:gen ber estimate} in \eqref{eq:p_mu coef 1/N second bound} we have that
\beq
\text{coefficient of $1/N^q$ in $\mathbb{E}^{(H)}_N[P_\mu]$}\lesssim  \,\frac{1}{2^{2|\mu|+q+3}}\frac{1}{\frac{|\mu|}{2}!}\binom{|\mu|}{\frac{|\mu|}{2}+q}|\mu|^{\frac{5|\mu|}{4}-\frac{3}{8}+q}e^{-3\frac{|\mu|}{4}+\pi\sqrt{\frac{2}{3}|\mu|}},
\eeq
as $|\mu| \to \infty$. Now using Stirling's approximation proves \eqref{eq:p_mu coef 1/N bound}.
\end{proof}
We are now ready to prove Thm.~\ref{thm:sub-leading}.

\begin{proof}[Proof of Thm.~\ref{thm:sub-leading}] Using \eqref{eq:rescaled p_mu}, it can be seen that the joint moments of traces of rescaled matrices are Laurent polynomial in $N$ with rational coefficients. Thus the central moments of traces of Chebyshev polynomials are also Laurent polynomials. Since  $X_k(M_R)$ converges in distribution to a normal random variable as $N\rightarrow\infty$, $\mathbb{E}^{(H)}_N\left[X_k^n\right]$ is a polynomial in $1/N$ with constant term given in \eqref{eq:mom_chebyshev_her_oddk} and \eqref{eq:mom_chebyshev_her_evenk} depending on whether $k$ is odd and even, respectively. 

To estimate the sub-leading term in $\mathbb{E}^{(H)}_N\left[X_k^n\right]$, we consider $k$ even and odd cases separately.

\noindent(1) For $k$ odd, $\mathbb{E}^{(H)}_N\left[\Tr\,T_{k}(M_R)\right]=0$. Using the expansion of Chebyshev polynomials of the first kind,
\beq\label{eq:cheby_mu coef 1/N first bound}
\begin{split}
\mathbb{E}^{(H)}_N[X_k^n]&=\mathbb{E}^{(H)}_N\left[\left(\Tr\,T_{k}(M_R)\right)^n\right] \\
&= \mathbb{E}^{(H)}_N\left[\left(k\sum_{j=0}^{\frac{k-1}{2}}(-1)^{\frac{k-1}{2}-j}\frac{(\frac{k-1}{2}+j)!}{(\frac{k-1}{2}-j)!(2j+1)!}2^{2j}\Tr\,M^{2j+1}\right)^n\right]\\
&=k^n\sum_{n_0+\dots+n_{\frac{k-1}{2}}=n} \binom{n}{n_0,\dots,n_{\frac{k-1}{2}}}\prod_{j=0}^{\frac{k-1}{2}}(-1)^{\frac{k-1}{2}n_j-jn_j}\left(\frac{(\frac{k-1}{2}+j)!}{(\frac{k-1}{2}-j)!(2j+1)!}\right)^{n_j}2^{2jn_j}\\
&\qquad\times \mathbb{E}^{(H)}_N\left[P_\mu\right],
\end{split}
\eeq
where
\beq
\mathbb{E}^{(H)}_N[P_\mu]=\mathbb{E}^{(H)}_N\left[\prod_{l=0}^{\frac{k-1}{2}}(\Tr\,M_R^{2l+1})^{n_l}\right], \quad \mu=(1^{n_0},3^{n_1},\dots,k^{n_{\frac{k-1}{2}}}).
\eeq
The odd moments of $\Tr\,T_k(M_R)$ are identically zero because $\mathbb{E}^{(H)}_N[P_\mu]=0$ when $|\mu|$ is odd, see \eqref{eq:rescaled p_mu}. When $n$ is even, the leading term is given by the $n^{th}$ moment of $\sqrt{k}r_k/2$, $r_k\sim\mathcal{N}(0,1)$, according to Szeg\H{o}'s theorem. For $n$ even, $l(\mu)$ is always even. Hence the sub-leading term in \eqref{eq:cheby_mu coef 1/N first bound} is $O(N^{-2})$ (see Corollary.~\ref{cor:even odd sym}. Note that the matrix is now rescaled).

The maximum possible degree of $\mu$ is $|\mu|=nk$ when $n_{\frac{k-1}{2}}=n$, $n_j=0$ for $j=0,\dots,\frac{k-3}{2}$ and the minimum degree is $|\mu|=n$ when $n_0=n$, $n_j=0$ for $j=1,\dots,\frac{k-1}{2}$. The coefficient of $1/N^2$ in \eqref{eq:cheby_mu coef 1/N first bound} is estimated using \eqref{eq:p_mu coef 1/N bound} by choosing an appropriate $\mu$. Note that the multinomial coefficient is maximum when all $n_j$'s are approximately equal. In this case $\mu=(1^{\frac{2n}{k+1}},3^{\frac{2n}{k+1}},\dots,k^{\frac{2n}{k+1}})$ and $|\mu|=n(k+1)/2$. For a fixed $k$ as $n$ increases, the number of short cycles in $\mu$ increases. Using \eqref{eq:cycle char}, 
\beq
|\chi^\lambda_\mu| \leq \chi^\lambda_{1^{|\mu|}},
\eeq
which implies $a_\mu=1$ in \eqref{eq:p_mu coef 1/N bound}.

Let 
\beq
d_1(n,k) = \big[\mathbb{E}^{(H)}_N[\left(\Tr\,T_{k}(M_R)\right)^n]\big]_{1/N^2}
\eeq
denote the coefficient of $1/N^2$ in $\mathbb{E}^{(H)}_N\left[\left(\Tr\,T_{k}(M_R)\right)^n\right]$. Putting $q=2$ in \eqref{eq:p_mu coef 1/N bound},
\beq
\begin{split}
d_1(n,k)\sim &\, k^n\frac{n!}{\left(\frac{2n}{k+1}!\right)^{\frac{k+1}{2}}}\left(\prod_{j=0}^{\frac{k-1}{2}}\frac{(\frac{k-1}{2}+j)!}{(\frac{k-1}{2}-j)!(2j+1)!}\right)^{\frac{2n}{k+1}}2^{2|\mu|} \big[\mathbb{E}^{(H)}_N\left[P_\mu\right]\big]_{1/N^2}, \quad \text{as $n \to \infty$.}
\end{split}
\eeq
Now,
\beq
\begin{split}
\prod_{j=0}^{\frac{k-1}{2}}\frac{(\frac{k-1}{2}+j)!}{(\frac{k-1}{2}-j)!(2j+1)!} &= 2^{-\frac{5}{24}-\frac{1}{4}k(k+2)}e^{\frac{1}{8}}\pi^{\frac{1}{4}(k+2)}\frac{1}{A^{\frac{3}{2}}}\frac{G(k+1)}{G\left(\frac{k}{2}+2\right)G\left(\frac{k+1}{2}\right)\left(G\left(\frac{k+3}{2}\right)\right)^2}
\end{split}
\eeq
where $G(x)$ is Barnes-G function and $A=1.2824\dots$ is the Glaisher-Kinkelin constant. 

Using asymptotics of Barnes-G functions and Stirling's approximation
\beq
\begin{split}
\left(\prod_{j=0}^{\frac{k-1}{2}}\frac{(\frac{k-1}{2}+j)!}{(\frac{k-1}{2}-j)!(2j+1)!}\right)^{\frac{2n}{k+1}} &\sim\,A^{\frac{3n}{k}}\pi^{-\frac{n}{2}}2^{\frac{nk}{2}-n+\frac{n}{6k}}k^{-\frac{3n}{2}+\frac{n}{4k}}e^{\frac{9n}{4}+\frac{5n}{8k}},\\
\frac{n!}{\left(\frac{2n}{k+1}!\right)^{\frac{k+1}{2}}}&\sim \frac{1}{\pi^{\frac{k-1}{4}}}\frac{1}{2^{n+\frac{k}{2}}}n^{-\frac{k-1}{4}}(k+1)^{n+\frac{k+1}{4}}, \quad \text{as $n\to \infty$.}
\end{split}
\eeq

By combining the previous equations we arrive at
\beq\label{eq:cheby_mu coef 1/N second bound}
d_1(n,k)\lesssim A^{\frac{3n}{k}}\pi^{-\frac{1}{4}(2n+k)}2^{\frac{7nk}{8}-\frac{13n}{8}+\frac{n}{6k}-\frac{k}{2}}n^{\frac{3n}{8}(k+1)-\frac{k}{4}+\frac{7}{8}}k^{\frac{3n}{8}(k+2)+\frac{n}{8}+\frac{n}{4k}+\frac{k}{4}+\frac{7}{8}}e^{-\frac{n}{8}(k+1)+\frac{9n}{4}+\frac{5n}{8k}+\pi\sqrt{\frac{n}{3}(k+1)}},
\eeq
as  $n \to \infty$.
We are interested to find the order of the coefficient of $1/N$ as $n$ increases for a fixed $k$. To capture the right behaviour, it is sufficient to approximate \eqref{eq:cheby_mu coef 1/N second bound} to
\beq
\begin{split}
d_1(n,k)\lesssim &\, A^{\frac{3n}{k}}\pi^{-\frac{n}{2}}2^{\frac{7nk}{8}-\frac{13n}{8}+\frac{n}{6k}}k^{\frac{3n}{8}(k+2)+\frac{n}{8}+\frac{n}{4k}}n^{\frac{3n}{8}(k+1)-\frac{k}{4}+\frac{7}{8}}e^{-\frac{n}{8}(k+1)+\frac{9n}{4}+\frac{5n}{8k}+\pi\sqrt{\frac{n}{3}(k+1)}}
\end{split}
\eeq
(2) When $k$ is even, let 
\beq
c_k = \frac{1}{N}\mathbb{E}^{(H)}_N[\Tr\,T_{k}(M_R)].
\eeq
We have,
\beq\label{eq:cheby_mu_degree_even}
\begin{split}
\mathbb{E}^{(H)}_N[X_k^n]&=\mathbb{E}^{(H)}_N\left[\left(\Tr\,T_{k}(M_R)-Nc_k\right)^n\right] \\
&= \mathbb{E}^{(H)}_N\left[\left(N\left((-1)^{\frac{k}{2}}-c_k\right) + k\sum_{j=1}^{\frac{k}{2}}(-1)^{\frac{k}{2}-j}\frac{(\frac{k}{2}+j-1)!}{(\frac{k}{2}-j)!(2j)!}2^{2j-1}\Tr\,M_R^{2j}\right)^n\right]\\
&=\sum_{n_0+\dots+n_{\frac{k}{2}}=n} \binom{n}{n_0,\dots,n_{\frac{k}{2}}}N^{n_0}\left((-1)^{\frac{k}{2}}-c_k\right)^{n_0}\\
&\qquad\times \prod_{j=1}^{\frac{k}{2}}(-1)^{\frac{k}{2}n_j-jn_j}k^{n_j}\left(\frac{(\frac{k}{2}+j-1)!}{(\frac{k}{2}-j)!(2j)!}\right)^{n_j}2^{(2j-1)n_j}\mathbb{E}^{(H)}_N\left[P_\mu\right],
\end{split}
\eeq
where
\beq\label{eq:mu for k even}
\mathbb{E}^{(H)}_N[P_\mu]=\mathbb{E}_N\left[\prod_{l=0}^{\frac{k}{2}}(\Tr\,M_R^{2l})^{n_l}\right], \quad \mu=(2^{n_1},4^{n_2},\dots,k^{n_{\frac{k}{2}}}).
\eeq
According to Szeg\H{o}'s theorem, when $n$ is even the leading order term in the R.H.S. of \eqref{eq:cheby_mu_degree_even} is given by $\mathbb{E}[(\sqrt{k}r_k/2)^n]$, $r_k\sim\mathcal{N}(0,1)$. The sub-leading term is $d_3(n,k)N^{-2}$. When $n$ is odd, the leading term in the R.H.S. is given by $d_2(n,k)N^{-1}$. Next  we compute the coefficients $d_2(n,k)$ and $d_3(n,k)$.

\noindent\textit{Coefficient $d_2(n,k)$:} 
$c_k$ decays as $1/N^2$ for $k>2$, so we neglect it in \eqref{eq:cheby_mu_degree_even}. Note that $\mu$ in \eqref{eq:mu for k even} doesn't have any 1-cycles. So $\mu$ is fixed-point-free and \eqref{eq:no fix cyc char} can also be used to estimate characters $\chi^\lambda_\mu$ in Prop.~\ref{prop:p_mu coef 1/N bound}. Here we just use \eqref{eq:p_mu coef 1/N bound} for $q=2$ and follow the exact same calculation as $k$ odd. In the limit $n \to \infty$ this leads to
\beq
d_2(n,k)\lesssim \, A^{\frac{3n}{k}}\pi^{-\frac{n}{2}}2^{\frac{3nk}{8}-3n+\frac{n}{6k}}k^{\frac{3nk}{8}+\frac{n}{2}+\frac{9n}{4k}}n^{\frac{3nk}{8}+\frac{2n}{k}-\frac{k}{2}-\frac{3}{8}}e^{-\frac{n}{8}(k-18)+\pi\sqrt{\frac{nk}{3}}-\frac{19n}{8k}.}
\eeq
Similarly, $d_{3}(n,k)$ can be approximated as
\beq
d_3(n,k)\lesssim \,A^{\frac{3n}{k}}\pi^{-\frac{n}{2}}2^{\frac{3nk}{8}-3n+\frac{n}{6k}}k^{\frac{3nk}{8}+\frac{n}{2}+\frac{9n}{4k}}n^{\frac{3nk}{8}+\frac{2n}{k}-\frac{k}{2}+\frac{5}{8}}e^{-\frac{n}{8}(k-18)+\pi\sqrt{\frac{nk}{3}}-\frac{19n}{8k}.}
\eeq
\end{proof}

Since $X_k$ converges to independent Gaussian normals, the correlators of $X_k$ also converges to random Gaussian variables as $N\rightarrow\infty$. For instance,
\beq
\mathbb{E}^{(H)}_N[X_iX_j] = \frac{\sqrt{ij}}{4}\mathbb{E}[r_ir_j]+O(N^{-1}).
\eeq
Given below are the moments corresponding to partitions of 6.
\beq
\renewcommand{\arraystretch}{1.45}
\begin{matrix*}[l]
\mathbb{E}^{(H)}_N\left[X_6\right] = 0 
&\mathbb{E}^{(H)}_N\left[X_5X_1\right] =\frac{5}{4N^2}\\
\mathbb{E}^{(H)}_N\left[X_4X_2\right] =\frac{1}{N^2} 
&\mathbb{E}^{(H)}_N\left[X_4X_1^2\right] =\frac{1}{2N}\\
\mathbb{E}^{(H)}_N\left[X_3^2\right] =\frac{3}{4} +\frac{3}{4N^2} 
&\mathbb{E}^{(H)}_N\left[X_3X_2X_1\right] =\frac{3}{4N}\\
\mathbb{E}^{(H)}_N\left[X_3X_1^3\right] =\frac{3}{8N^2}
&\mathbb{E}^{(H)}_N\left[X_2^3\right] =\frac{1}{N}\\
\mathbb{E}^{(H)}_N\left[X_2^2X_1^2\right] =\frac{1}{8} + \frac{1}{2N^2}
&\mathbb{E}^{(H)}_N\left[X_2X_1^4\right] =\frac{3}{8N} \\ 
\mathbb{E}^{(H)}_N\left[X_1^6\right]  = \frac{15}{64}
\end{matrix*}
\eeq
In this paper, we do not pursue correlations of $X_k$ any further.

\subsection{Cumulants}\label{sec:cumulants}
In general, the moments and the cumulants are related by the recurrence relation
\beq
\kappa_n = m_n - \sum_{j=1}^{n-1}\binom{n-1}{j-1}\kappa_{j}m_{n-j}.
\eeq
Cumulants and moments can also be expressed in terms of each other through a more elegant formula. Let $\lambda=(\lambda_1,\dots,\lambda_l)\equiv (1^{b_1},2^{b_2},\dots r^{b_r})$ be a partition of $n$. Define
\beq
\kappa_\lambda := \prod_{j=1}^l\kappa_j =\prod_{j=1}^r\kappa_j^{b_j},\quad m_\lambda := \prod_{j=1}^lm_j=\prod_{j=1}^rm_j^{b_j}.
\eeq
We have
\beq
\begin{split}
m_n &= \sum_\lambda d_\lambda \kappa_\lambda\\
\kappa_n &=\sum_\lambda(-1)^{l(\lambda)-1}(l(\lambda)-1)! d_\lambda m_\lambda
\end{split}
\eeq
where
\beq
d_\lambda = \frac{n!}{(1!)^{b_1}b_1!\dots (r!)^{b_r}b_r!}
\eeq
is the number of decompositions of a set of $n$ elements into disjoint subsets containing $\lambda_1,\dots,\lambda_l$ elements. 

In this section, we give an estimate on the cumulants of random variables $X_k$ and to do so we rely on the well studied connection between GUE correlators and enumerating ribbon graphs which has been briefly discussed in App.~\ref{app:ribbon graphs}.

Consider the formal matrix integral over the space of $N\times N$ rescaled GUE matrices.
\beq\label{eq:formal int}
Z_N(\textbf{s},\xi) = e^{s_0N\xi}\int e^{-2N \Tr\,M^2} e^{\xi\Tr\,V(M).}\, dM
\eeq
Here the formal series $V(M)$ depending on the parameters $\textbf{s}=\{s_0,s_1,\dots,s_k\}$ has the form
\beq
V(M) = \sum_{j=1}^ks_jM^j.
\eeq
The integral in \eqref{eq:formal int} can be considered as a formal expansion in the set of parameters $s_j$ and $\xi$.  Now,
\beq\label{eq:formal series1}
\begin{split}
\frac{Z(\textbf{s},\xi)}{Z(0,\xi)} &= \sum_{n_0,n_1,\dots ,n_k}\xi^{\sum n_j}\frac{(s_0N)^{n_0}}{n_0!}\frac{s_1^{n_1}}{n_1!}\dots\frac{s_k^{n_k}}{n_k!}\mathbb{E}^{(H)}_N\left[\prod_{j=1}^k(\Tr\, M_R^j)^{n_j}\right]\\
&=\sum_{n\geq 0} \xi^n\sum_{n_0+\dots +n_k=n }\frac{(s_0N)^{n_0}}{n_0!}\frac{s_1^{n_1}}{n_1!}\dots\frac{s_k^{n_k}}{n_k!}\mathbb{E}^{(H)}_N\left[\prod_{j=1}^k(\Tr\, M_R^j)^{n_j}\right].
\end{split}
\eeq
By choosing $s_j$ to be the coefficients of Chebyshev polynomials in \eqref{eq:formal series1}, we recover the moments of $X_k$. Thus, \eqref{eq:formal series1} is the moment generating function of $X_k$. For a given $k$, by fixing $s_j$ to be the Chebyshev coefficients in $T_k$,
\beq\label{eq:mom_gen_func}
\mathbb{E}^{(H)}_N\left[e^{\xi X_k}\right]= \sum_{n\geq 0}\frac{\xi^n}{n!}\mathbb{E}^{(H)}_N [X^n_k] = \frac{Z(\textbf{s},\xi)}{Z(0,\xi)}.
\eeq
By matching the terms in the L.H.S. and R.H.S. of \eqref{eq:mom_gen_func} by powers in $\xi$ we recover the moments of $X_k$.
The correlators of $\Tr\, M_R^j$ are connected to the problem of enumerating ribbon graphs. For a brief introduction see App.~\ref{app:ribbon graphs} and for more details see \cite{Eynard2016,Itzykson1990} and references within. The trace correlators count ribbon graphs that are connected and also multiplicatively count ribbon graphs that are disconnected. When we have a generating function that counts disconnected objects multiplicatively, taking logarithm counts only the connected objects \cite{Harer1986}. Hence, the cumulant generating function,
\beq\label{eq:cumulant expansion}
\begin{split}
\log \mathbb{E}^{(H)}_N\left[e^{\xi X_k}\right] &= \log \frac{Z(\textbf{s},\xi)}{Z(0,\xi)} =\sum_{n\geq 1}\frac{\xi^n}{n!}\kappa_n\\
&= s_0N\xi + \sum_{n\geq 1} \xi^n\sum_{n_1\dots +n_k=n }\frac{s_1^{n_1}}{n_1!}\dots\frac{s_k^{n_k}}{n_k!}\mathbb{E}^{(H)}_N\left[\prod_{j=1}^k(\Tr\, M_R^j)^{n_j}\right]_c
\end{split}
\eeq
keeps only connected ribbon graphs indicated by subscript $c$. For $\mu=(1^{n_1},\dots, k^{n_k})\equiv (\mu_1,\dots,\mu_l)$, the connected correlators are given by
\beq\label{eq:connected correlators}
\mathbb{E}^{(H)}_N\left[\prod_{j=1}^l\Tr\, M_R^{\mu_j}\right]_c\equiv \mathbb{E}^{(H)}_N\left[\prod_{j=1}^k(\Tr\, M_R^j)^{n_j}\right]_c = \sum_{0\leq g\leq \frac{|\mu|}{4}-\frac{l}{2}+\frac{1}{2}}\frac{1}{2^{|\mu|}}a_g(\mu_1,\dots,\mu_l)N^{2-2g-l},\quad |\mu|\,\,\text{is even}.
\eeq
Here 
\beq
\begin{split}
a_g(\mu_1,\dots,\mu_l) &= \#\{\text{connected oriented labelled ribbon graphs}\\\
& \qquad \text{of genus $g$ with $l$ vertices of valencies $\mu_1,\dots,\mu_l$}\}\\
&=l!\sum_\Gamma\frac{1}{\#\text{Sym}(\Gamma)},
\end{split}
\eeq
where $\Gamma$ is a connected (unlabelled) ribbon graph  of genus $g$ with $l$ vertices of valencies $\mu_1,\dots ,\mu_l$, $\#\text{Sym}(\Gamma)$ is the order of the symmetry group of $\Gamma$, and the last summation is taken over all such $\Gamma$. 
For the explicit expressions of connected correlators of the GUE see \cite{Dubrovin2017}.

We are now ready to estimate the cumulants of $X_k$. We treat $k$ even and odd cases separately.

\noindent (1) \textit{$k$ odd}: In this case, the parameters $s_{2j}=0$ for $0\leq j \leq \frac{k-1}{2}$ and 
\beq
s_{2j+1} = (-1)^{\frac{k-1}{2}-j}k\frac{(\frac{k-1}{2}+j)!}{(\frac{k-1}{2}-j)!(2j+1)!}2^{2j},\quad 0\leq j\leq \frac{k-1}{2}.
\eeq
When $k$ is odd, all the odd moments are zero. Hence all the odd cumulants are also zero. By inserting \eqref{eq:connected correlators} in \eqref{eq:cumulant expansion}, the even cumulants are given by
\beq\label{eq:cumulants k odd}
\kappa_{2n} = \frac{2n!}{N^{2n-2}}\sum_g\sum_{n_1+n_3+\dots +n_k=2n}\frac{1}{2^{\sum_jjn_j}}\frac{a_g}{N^{2g}}\frac{s_{1}^{n_{1}}}{n_{1}!}\frac{s_{3}^{n_{3}}}{n_{3}!}\dots \frac{s_{k}^{n_{k}}}{n_{k}!}.
\eeq

\noindent (2) \textit{k even:} In this case, the parameters $s_{2j+1}=0$ for $0\leq j\leq \frac{k}{2}-1$ and
\beq
\begin{split}
s_0&=(-1)^{\frac{k}{2}}-c_k,\\
s_{2j}&=(-1)^{\frac{k}{2}-j}k\frac{(\frac{k}{2}+j-1)!}{(\frac{k}{2}-j)!(2j)!}2^{2j-1},\quad 1\leq j\leq \frac{k}{2}.
\end{split}
\eeq
The first cumulant is zero by definition of $X_k$. So the first term in \eqref{eq:cumulant expansion} is cancelled by $n=1$ contribution from the second term. Hence,
\beq
\log \mathbb{E}^{(H)}_N\left[e^{\xi X_k}\right] = \sum_{n\geq 2} \xi^n\sum_{n_2\dots +n_k=n }\frac{s_2^{n_2}}{n_2!}\frac{s_4^{n_4}}{n_4!}\dots\frac{s_k^{n_k}}{n_k!}\mathbb{E}^{(H)}_N\left[\prod_{j=1}^{k/2}(\Tr\, M_R^{2j})^{n_{2j}}\right]_c
\eeq
By inserting \eqref{eq:connected correlators} in \eqref{eq:cumulant expansion}, the cumulants are given by
\beq\label{eq:cumulants k even}
\kappa_n = \frac{n!}{N^{n-2}}\sum_g\sum_{n_2+\dots +n_k=n}\frac{1}{2^{\sum_jjn_j}}\frac{a_g}{N^{2g}}\frac{s_2^{n_2}}{n_2!}\frac{s_4^{n_4}}{n_4!}\dots \frac{s_k^{n_k}}{n_k!},\quad n\geq 2.
\eeq

Third and higher order cumulants of Gaussian random variable are identically zero. Since $X_k$ converges to $\mathcal{N}(0,k/4)$ as $N\rightarrow\infty$, cumulants of $X_k$, $\kappa_n\rightarrow 0$ as $N\rightarrow\infty$ for all $n\geq 3$. For a fixed $n$, we see from \eqref{eq:cumulants k odd} and \eqref{eq:cumulants k even} that $\kappa_{n}$ decay as $N^{-n+2}$. 

\noindent\textit{Example:} The simplest non-trivial example is to calculate the cumulants of $X_2$. By mapping the problem to counting ribbon graphs (see App.~\ref{app:ribbon graphs}),
\beq
\mathbb{E}^{(H)}_N[(\Tr M^2_R)^n]_c = \frac{1}{(4N)^n}2^{n-1}(n-1)!N^2 = (n-1)!\frac{1}{2^{n+1}}\frac{1}{N^{n-2}}.
\eeq
For $X_2$, $s_0=-\frac{1}{2}$, $s_2=2$, and $s_j=0$ for $j\neq 0,2$. Hence
\beq
\kappa_n = s_2^n\,\mathbb{E}^{(H)}_N[\Tr (M_R^2)^n]_c = \frac{1}{2}\frac{(n-1)!}{N^{n-2}}.
\eeq
%

\section*{Acknowledgements}
FM is grateful for support from the  University Research Fellowship of the University of Bristol. JPK is pleased to acknowledge support from a Royal Society Wolfson Research Merit Award and ERC Advanced Grant 740900 (LogCorRM). We also thank Tamara Grava and Sergey Berezin for helpful discussions.

\section*{Data availability}
Data sharing is not applicable to this article as no new data were created or analysed in this study.

\appendix
\section*{Appendices}
\section{Some properties of multivariate orthogonal polynomials}
\renewcommand{\theequation}{A.\arabic{equation}}
\textit{Gaussian case. }For $N=1$, the multivariate Hermite polynomials coincide with the classical polynomials,
\beq\label{eq:classical_her}
H_n(x) = n!\sum_{j=0}^n\mathbbm{1}_{n-j= 0\,\text{mod}\,2}\frac{1}{\left(\frac{n-j}{2}\right)!}\frac{(-1)^{\frac{n-j}{2}}}{2^{\frac{n-j}{2}}j!}x^j,
\eeq
which have the generating function
\beq\label{eq:classical_her_genfun}
\sum_j \frac{H_j(x)}{j!}t^j = e^{xt-\frac{t^2}{2}}.
\eeq
By comparing \eqref{eq:mulher_to_schur} and \eqref{eq:classical_her}, we can see the analogies between classical Hermite polynomials and their multivariate counterparts: the sum over $j$ is replaced by the sum over partitions; the role of monomials is played by Schur polynomials; the factorials are replaced with $C_\lambda(N)$. With this analogy the generating function of $\mathcal{H}_\lambda$ is \cite{Baker1997}
\beq\label{eq:mulher_genfun_old}
\sum_{\lambda}\frac{\mathcal{H}_\lambda(\textbf{x})}{C_\lambda(N)}S_\lambda(\textbf{t}) = \left(\sum_\mu\frac{S_\mu(\textbf{x})S_\mu(\textbf{t})}{C_\mu(N)}\right)\left(\sum_{n=0}^\infty\sum_{\nu\vdash 2n}\frac{(-1)^{\frac{|\nu|}{2}}}{2^{\frac{|\nu|}{2}}}S_\nu(\textbf{t})D_{\nu 0}^{(H)}\right).
\eeq
The validity of the above formula can be easily verified for lower order partitions (say $|\lambda|=2,4$) using Pieri's formula but the second factor in \eqref{eq:mulher_genfun_old} can be simplified further.

\begin{proposition} Let $t_1,t_2,\dots$  be a set of variables, then
\beq\label{eq:my_identity}
\prod_{j}\exp\left(-\frac{t_j^2}{2}\right) = \sum_{n=0}^\infty\sum_{\nu\vdash 2n}\frac{(-1)^{\frac{|\nu|}{2}}}{2^{\frac{|\nu|}{2}}}S_\nu(\textbf{t})D_{\nu 0}^{(H)}.
\eeq
\end{proposition}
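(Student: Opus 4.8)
The plan is to recognise that $D_{\nu 0}^{(H)}$ is, up to a factorial, a symmetric-group character, so that the right-hand side collapses to a single power-sum expansion which then resums to the claimed exponential. No genuinely analytic input is needed; everything is formal and rests on identities already established in the excerpt.

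First I would pin down the closed form of $D^{(H)}_{\nu 0}$. Setting $\nu=0$ in the coefficient formula \eqref{eq:schur_to_mulher_coef} and using that $\prod_{j}(\lambda_j+N-j)!/(N-j)!=C_\lambda(N)$ (the definition of $C_\lambda(N)$ in \eqref{eq:mulher_norm}) gives
\[
\psi^{(H)}_{\lambda 0} = \frac{1}{2^{|\lambda|/2}}\,C_\lambda(N)\,D^{(H)}_{\lambda 0}.
\]
Comparing this with the character evaluation \eqref{eq:psi_lambda0} yields, for any partition $\nu$ with $|\nu|=2n$,
\[
D^{(H)}_{\nu 0} = \frac{\chi^\nu_{(2^{n})}}{n!},
\]
while for $|\nu|$ odd both expressions vanish. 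This is exactly consistent with the right-hand side of \eqref{eq:my_identity}, where the inner sum ranges only over $\nu\vdash 2n$.

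Next I would substitute this into the right-hand side of \eqref{eq:my_identity} and interchange the (finite, for fixed $n$) inner sum with the scalar prefactor:
\[
\sum_{n=0}^{\infty}\sum_{\nu\vdash 2n}\frac{(-1)^{n}}{2^{n}}\,S_\nu(\mathbf{t})\,\frac{\chi^\nu_{(2^{n})}}{n!}
= \sum_{n=0}^{\infty}\frac{(-1)^{n}}{2^{n}\,n!}\sum_{\nu\vdash 2n}\chi^\nu_{(2^{n})}S_\nu(\mathbf{t}).
\]
Now the change-of-basis formula \eqref{eq:power_to_schur}, $P_\mu=\sum_\lambda\chi^\lambda_\mu S_\lambda$, applied with $\mu=(2^{n})$ identifies the inner sum as $P_{(2^{n})}(\mathbf{t})=p_2(\mathbf{t})^{n}=\bigl(\sum_j t_j^2\bigr)^{n}$. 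Resumming the resulting exponential series then gives
\[
\sum_{n=0}^{\infty}\frac{(-1)^{n}}{2^{n}\,n!}\Bigl(\sum_j t_j^2\Bigr)^{n}
= \exp\Bigl(-\tfrac{1}{2}\sum_j t_j^2\Bigr)
= \prod_j \exp\Bigl(-\tfrac{t_j^2}{2}\Bigr),
\]
which is \eqref{eq:my_identity}.

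The only step requiring any thought is the first, namely identifying $D^{(H)}_{\nu 0}$ as the normalised character $\chi^\nu_{(2^n)}/n!$; once that is in hand the remainder is a direct application of the power-sum--to--Schur expansion and a geometric/exponential resummation. There is no convergence obstacle, since the identity is to be read as an equality of formal power series in the $t_j$, graded by total degree, and each graded piece involves only finitely many partitions.
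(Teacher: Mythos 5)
Your proof is correct and follows the same route as the paper: both identify $D^{(H)}_{\nu 0}=\chi^\nu_{(2^n)}/n!$ by comparing \eqref{eq:schur_to_mulher_coef} at $\nu=0$ with \eqref{eq:psi_lambda0}, and then resum via the power-sum--to--Schur expansion \eqref{eq:power_to_schur} with $\mu=(2^n)$. Your write-up simply makes explicit the resummation step that the paper leaves implicit.
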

\begin{proof}
For a fixed $n$, $|\nu|=2n$. Comparing \eqref{eq:schur_to_mulher_coef} with  \eqref{eq:psi_lambda0},
\beq
D_{\nu 0}^{(H)} = \frac{1}{n!}\chi^\nu_{(2^n)}.
\eeq
Now using \eqref{eq:power_to_schur} proves the proposition.
\end{proof}
\begin{proposition}
Let $x_1,\dots,x_N$ and $t_1,\dots,t_N$ be two sets of variables. The multivariate Hermite polynomials defined in \eqref{eq:mulher_def} have the following generating function \cite{Baker1997}:
\beq\label{eq:mulher_genfun}
\sum_{\lambda}\frac{\mathcal{H}_\lambda(\textbf{x})}{C_\lambda(N)}S_\lambda(\textbf{t}) = \left(\sum_\mu\frac{S_\mu(\textbf{x})S_\mu(\textbf{t})}{C_\mu(N)}\right)\prod_{j}\exp\left(-\frac{t_j^2}{2}\right).
\eeq
\end{proposition}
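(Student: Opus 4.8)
The plan is to obtain \eqref{eq:mulher_genfun} directly from the already-factorised form \eqref{eq:mulher_genfun_old}. That identity writes the generating function as a product of the ``Cauchy kernel'' $\sum_\mu S_\mu(\textbf{x})S_\mu(\textbf{t})/C_\mu(N)$ and the auxiliary sum $\sum_{n}\sum_{\nu\vdash 2n}\frac{(-1)^{|\nu|/2}}{2^{|\nu|/2}}S_\nu(\textbf{t})D^{(H)}_{\nu 0}$. The preceding proposition, equation \eqref{eq:my_identity}, identifies this second sum \emph{verbatim} as $\prod_j\exp(-t_j^2/2)$. Substituting \eqref{eq:my_identity} into \eqref{eq:mulher_genfun_old} therefore yields \eqref{eq:mulher_genfun} in a single line, so once \eqref{eq:mulher_genfun_old} is granted there is nothing further to do.

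Because \eqref{eq:mulher_genfun_old} is itself only asserted (checked for small $|\lambda|$ via Pieri), I would also give a self-contained derivation that bypasses it. First I would insert the determinantal definitions \eqref{eq:mulher_def} and \eqref{eq:schur_poly} and set $\ell_i=\lambda_i+N-i$, so the sum over partitions $\lambda$ becomes a sum over strictly decreasing tuples $\ell_1>\dots>\ell_N\ge 0$. Since $1/C_\lambda(N)=\big(\prod_{k=0}^{N-1}k!\big)\prod_i(\ell_i!)^{-1}$, the $\lambda$-dependence reduces to the symmetric weight $\prod_i(\ell_i!)^{-1}$. The summand $\prod_i(\ell_i!)^{-1}\det[H_{\ell_i}(x_j)]\det[t_j^{\ell_i}]$ is invariant under permuting the $\ell_i$ (each determinant contributes the same sign) and vanishes whenever two indices coincide, so I may replace the ordered sum by $\tfrac{1}{N!}$ times the unrestricted sum over all $\ell\in\mathbb{N}^N$.

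Expanding both determinants as signed sums over $\mathcal{S}_N$ and factorising the sum over $\ell$ across the index $i$, the classical univariate generating function \eqref{eq:classical_her_genfun} collapses each inner sum to $\exp(x_{\sigma(i)}t_{\tau(i)}-t_{\tau(i)}^2/2)$. Pulling out the permutation-invariant factor $\prod_j\exp(-t_j^2/2)$ and resumming the two permutations (a standard Andr\'eief/Cauchy--Binet step) leaves $N!\,\prod_j e^{-t_j^2/2}\,\det[e^{x_it_j}]$, so the left-hand side equals $\frac{\prod_{k=0}^{N-1}k!}{\Delta(\textbf{x})\Delta(\textbf{t})}\prod_j e^{-t_j^2/2}\,\det[e^{x_it_j}]$. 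Running the identical computation with $e^{xt}$ in place of $e^{xt-t^2/2}$ shows $\sum_\mu S_\mu(\textbf{x})S_\mu(\textbf{t})/C_\mu(N)=\frac{\prod_{k=0}^{N-1}k!}{\Delta(\textbf{x})\Delta(\textbf{t})}\det[e^{x_it_j}]$, and comparing the two expressions gives \eqref{eq:mulher_genfun}.

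The main obstacle is not a hard estimate but the bookkeeping of the middle step: justifying the passage from the ordered partition sum to the unrestricted lattice sum, tracking the two independent permutations so their signs cancel correctly, and reading the whole statement as an equality of formal power series in $\textbf{t}$ so that interchanging the sums over $\ell$ and over $\mathcal{S}_N$ is legitimate. The one genuinely structural input is the auxiliary evaluation $\sum_\mu S_\mu(\textbf{x})S_\mu(\textbf{t})/C_\mu(N)=\frac{\prod_{k}k!}{\Delta(\textbf{x})\Delta(\textbf{t})}\det[e^{x_it_j}]$, a Cauchy/Harish-Chandra--Itzykson--Zuber-type identity; once it is in hand, the Gaussian factor $\prod_j e^{-t_j^2/2}$ is precisely the extra piece produced by replacing $e^{x_it_j}$ with $e^{x_it_j-t_j^2/2}$.
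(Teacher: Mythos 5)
Your first paragraph is exactly what the paper does: the proposition is stated immediately after \eqref{eq:my_identity} precisely so that it follows by substituting \eqref{eq:my_identity} into the second factor of \eqref{eq:mulher_genfun_old}, and the paper gives no further argument. You are also right to flag the weak link: \eqref{eq:mulher_genfun_old} itself is only cited from \cite{Baker1997} and ``verified for lower order partitions'' via Pieri's formula, so the paper's chain of reasoning rests on an imported identity. Your second, self-contained derivation is a genuinely different route and it is correct. Writing $\ell_i=\lambda_i+N-i$ turns $1/C_\lambda(N)$ into the symmetric weight $\bigl(\prod_k k!\bigr)\prod_i(\ell_i!)^{-1}$; antisymmetry of the two determinants justifies passing to $\tfrac{1}{N!}$ times the unrestricted lattice sum; and the discrete Andr\'eief/Cauchy--Binet step together with \eqref{eq:classical_her_genfun} collapses everything to
\[
\frac{\prod_{k=0}^{N-1}k!}{\Delta(\textbf{x})\Delta(\textbf{t})}\,\prod_j e^{-t_j^2/2}\,\det\bigl[e^{x_it_j}\bigr],
\]
while the same computation with monomials in place of Hermite polynomials evaluates the Cauchy-type kernel $\sum_\mu S_\mu(\textbf{x})S_\mu(\textbf{t})/C_\mu(N)$ as the same expression without the Gaussian factor. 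The sign bookkeeping ($\sgn(\sigma)\sgn(\tau)=\sgn(\sigma\tau^{-1})$ and resumming over one permutation) works out as you describe, and the formal-power-series reading is legitimate because $S_\lambda(\textbf{t})$ is homogeneous of degree $|\lambda|$, so each coefficient in $\textbf{t}$ receives only finitely many contributions. What your approach buys is significant: it proves the proposition without assuming \eqref{eq:mulher_genfun_old} at all, and in fact yields \eqref{eq:mulher_genfun_old} as a corollary once \eqref{eq:my_identity} is read backwards, whereas the paper's route leaves the reader dependent on the external reference. The only cost is that your argument is specific to the determinantal ($\beta=2$) definition \eqref{eq:mulher_def}, which is exactly the setting of this paper.
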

Several other analogues of properties of the classical Hermite polynomials, including an integral representation, summation, integration and differentiation formulae, are given for $\beta-$ensembles in \cite{Baker1997}. Note that in \cite{Baker1997} $C^\alpha_\mu$ ($\alpha\in\mathbb{R}$) is used to denote Schur polynomials with a specific normalisation where as in this work $C_\mu(N)$ is a constant in $N$ given in \eqref{eq:C_lambda}. 

\noindent\textit{Laguerre case. }When $N=1$, $\mathcal{L}^{(\gamma)}_\lambda$ coincides with the classical Laguerre polynomials 
\beq\label{eq:classical_lag}
L_n^{(\gamma)}(x) = \sum_{j=0}^n (-1)^j \frac{\Gamma(n+\gamma +1)}{\Gamma(j+\gamma +1)(n-j)!} \frac{x^j}{j!},
\eeq
whose generating function is
\beq\label{eq:classical_lag_genfun}
\sum_{j=0}^\infty	 \frac{1}{\Gamma(j+\gamma +1)}L^{(\gamma)}_j(x)t^j = e^{t}\frac{J_\gamma(2\sqrt{tx})}{(tx)^{\frac{\gamma}{2}}} = e^t\sum_{m=0}^\infty\frac{(-1)^m}{m!\Gamma(m+\gamma +1)}(tx)^m,
\eeq
where $J_\gamma$ is the Bessel function. By comparing \eqref{eq:mullag_to_schur} and \eqref{eq:classical_lag}, the generating function for multivariate Laguerre polynomials \cite{Baker1997} is
\beq
\sum_{\nu}\frac{1}{G_\nu(N,\gamma)}\mathcal{L}^{(\gamma)}_\nu(\textbf{x})S_\nu(\textbf{t}) = (-1)^{\frac{N(N-1)}{2}}\left(\sum_\lambda S_\lambda(\textbf{t})D^{(L)}_{\lambda 0}\right)\left(\sum_\mu \frac{(-1)^{|\mu|}}{G_\mu(N,\gamma)}\frac{S_\mu(\textbf{x})S_\mu(\textbf{t})}{G_\mu(N,0)}\right),
\eeq
or equivalently using \eqref{eq:D_lambda0_lag},
\beq
\sum_{\nu}\frac{1}{G_\nu(N,\gamma)}\mathcal{L}^{(\gamma)}_\nu(\textbf{x})S_\nu(\textbf{t}) = (-1)^{\frac{N(N-1)}{2}}\left(\sum_\mu \frac{(-1)^{|\mu|}}{G_\mu(N,\gamma)}\frac{S_\mu(\textbf{x})S_\mu(\textbf{t})}{G_\mu(N,0)}\right)\prod_{j=1}^Ne^{t_j}.
\eeq

\section{Ribbon graphs and matrix integrals}\label{app:ribbon graphs}
\renewcommand{\theequation}{B.\arabic{equation}}
Let $\textbf{x} = (x_1,\dots, x_N)$ be an $N-$dimensional random variable. Consider the normalised Gaussian measure 
\beq
d\mu(\textbf{x}) = (2\pi)^{-\frac{N}{2}}\sqrt{\det A}\,e^{-\frac{1}{2}\sum_{i,j}x_iA_{ij}x_j}\prod_kdx_k, 
\eeq
where $A$ is a positive definite symmetric matrix. The inverse 
\beq
B_{ij} = (A^{-1})_{ij}
\eeq
is called the propagator.

Correlations of Gaussian random variables can be computed in a combinatorial way using \textit{Wick's theorem} \cite{Wick1950}, also known as \textit{Isserlis' theorem}, which is stated below.
\begin{theorem}[Wick's theorem] The expectation value of product of Gaussian random variables is 
\beq
\mathbb{E}[x_{i_1}x_{i_2}\dots x_{i_n}] = 
\begin{cases}
0 \hspace{16em} \text{if $n$ is odd},\\
B_{i_1i_2}\hspace{14.5em}\text{if $n=2$},\\
\sum_{\text{pairings of }(i_1,\dots, i_n)}\prod_{\text{pairs}\,(k,l)}B_{i_ki_l},\hspace{2.8em} \text{if $n\geq2$ and even}.
\end{cases}
\eeq
For example,
\beq
\mathbb{E}[x_{i_1}x_{i_2}x_{i_3}x_{i_4}] = B_{i_1i_2}B_{i_3i_4} + B_{i_1i_3}B_{i_2i_4} + B_{i_1i_4}B_{i_2i_3}.
\eeq
\end{theorem}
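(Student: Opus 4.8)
The plan is to prove this by the source-term (moment generating function) method, which converts the combinatorial statement about pairings into a single Taylor expansion. First I would introduce $Z(\mathbf{J}) = \int e^{\sum_k J_k x_k}\, d\mu(\mathbf{x})$, where $\mathbf{J} = (J_1,\dots,J_N)$ is a vector of formal source variables. Since $d\mu$ is the normalized Gaussian with inverse-covariance $A$, completing the square in the exponent, $-\tfrac12 \mathbf{x}^T A \mathbf{x} + \mathbf{J}^T\mathbf{x} = -\tfrac12(\mathbf{x}-A^{-1}\mathbf{J})^T A(\mathbf{x}-A^{-1}\mathbf{J}) + \tfrac12 \mathbf{J}^T A^{-1}\mathbf{J}$, and using the fact that the shifted Gaussian integrates to one (by the normalization built into $d\mu$), gives the closed form $Z(\mathbf{J}) = \exp\!\big(\tfrac12\sum_{i,j} J_i B_{ij} J_j\big)$, with $B = A^{-1}$ the propagator.

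The moments are then recovered by differentiation, $\mathbb{E}[x_{i_1}\cdots x_{i_n}] = \partial_{J_{i_1}}\cdots\partial_{J_{i_n}} Z(\mathbf{J})\big|_{\mathbf{J}=0}$. Since $Z$ depends only on even powers of $\mathbf{J}$, odd $n$ yields zero immediately, establishing the first case. For even $n$ I would expand $Z(\mathbf{J}) = \sum_{m\ge 0} \frac{1}{m!\,2^m}\big(\sum_{i,j} J_i B_{ij} J_j\big)^m$ and observe that only the term $m = n/2$ survives after applying $n$ derivatives and setting $\mathbf{J}=0$: smaller $m$ leaves no $J$-monomial of high enough degree, while larger $m$ leaves surviving positive powers of $J$.

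The heart of the argument, and the step needing the most care, is the combinatorial bookkeeping showing that the prefactor $\frac{1}{m!\,2^m}$ is exactly cancelled by the multiplicity of ways the $n$ derivatives can act. In $(\sum_{i,j}J_iB_{ij}J_j)^m$ each of the $m$ quadratic factors must absorb exactly two of the $n = 2m$ derivatives to survive, so the nonzero terms are indexed by the distributions of the labelled derivatives $\{i_1,\dots,i_n\}$, two per factor, into the $m$ ordered factors; within a factor the pair $\partial_{J_{i_a}}\partial_{J_{i_b}}$ produces $B_{i_ai_b}+B_{i_bi_a}=2B_{i_ai_b}$ by symmetry of $B$. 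Collecting the $2^m$ from these factors and the $m!$ arising because $m!$ ordered distributions realize the same unordered matching precisely cancels $\frac{1}{m!\,2^m}$, leaving $\sum_{\text{pairings}}\prod_{(k,l)} B_{i_ki_l}$, the sum over the $(n-1)!! = \frac{n!}{2^{n/2}(n/2)!}$ perfect matchings of $\{i_1,\dots,i_n\}$.

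As a cross-check I would verify the $n=2$ and $n=4$ cases by hand, recovering $B_{i_1i_2}$ and the three-term expansion displayed in the statement. An alternative induction-based route uses the Gaussian integration-by-parts identity $\mathbb{E}[x_{i_1}g(\mathbf{x})] = \sum_k B_{i_1 k}\,\mathbb{E}[\partial_{x_k} g(\mathbf{x})]$ with $g = x_{i_2}\cdots x_{i_n}$, which produces the recursion $\mathbb{E}[x_{i_1}\cdots x_{i_n}] = \sum_{a=2}^n B_{i_1 i_a}\,\mathbb{E}[\prod_{b\ne 1,a} x_{i_b}]$, exactly the recursion satisfied by the pairing sum (pair $i_1$ with each partner, then reduce), and closes by induction from the base cases $n=1,2$. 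I would present the generating-function proof as the main argument, since it is self-contained and makes the pairing structure transparent.
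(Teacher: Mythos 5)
Your proposal is correct. Note, however, that the paper itself does not prove this statement: Wick's theorem is quoted in Appendix B as a classical fact with a citation to Wick's original article, so there is no in-paper argument to compare yours against. Your generating-function proof is the standard self-contained derivation and all the essential steps are right: completing the square uses the positive definiteness of $A$ to guarantee that $Z(\mathbf{J})=\exp\bigl(\tfrac12\mathbf{J}^{T}B\mathbf{J}\bigr)$ converges for all real $\mathbf{J}$ (which also justifies differentiating under the integral sign --- worth one sentence in a written version); the vanishing of odd moments follows from $Z$ being even in $\mathbf{J}$; and the bookkeeping that cancels $\tfrac{1}{m!\,2^{m}}$ against the $m!$ orderings of the factors and the factor $2$ per pair (the latter using the symmetry of $B$, inherited from the symmetry of $A$) correctly produces the sum over the $(n-1)!!$ perfect matchings. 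The $n=2$ case is of course subsumed in the general even case, so the three-way case split in the statement collapses to two in your proof, which is fine. Your alternative route via Gaussian integration by parts is also valid and is arguably closer in spirit to how the paper actually uses the theorem (pairing one leg at a time when gluing ribbon-graph half-edges), but either presentation would serve.
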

Wick's theorem becomes particularly useful when the indices $i_j$ are repeated. The problem of computing the expectation values $\mathbb{E}[x_{i_1}^{b_1}\dots x_{i_n}^{b_n}]$ can be mapped to counting the number of ways of gluing $n$ vertices with valencies $b_1,\dots b_n$, whose weights are determined by the propagators that correspond to their edges.
\beq\label{eq:gaussian moments graphs}
\mathbb{E}[x_{i_1}^{b_1}\dots x_{i_n}^{b_n}] = \sum_{\substack{\text{Graphs G with $n$ vertices}\\ \text{of valencies $b_j$}}}\,\,\prod_{\text{$(i_k,i_l)$ edge of G}}B_{i_ki_l}.
\eeq
For example,
\beq
\mathbb{E}[x_{i_1}^2x_{i_2}^2] = B_{i_1i_1}B_{i_2i_2} + 2B^2_{i_1i_2}.  
\eeq
Clearly many graphs in \eqref{eq:gaussian moments graphs} are topologically identical and have the same weight because of the symmetries among the edges and vertices. Let $\textbf{G}$ be the group of these symmetries, $\#gluings$ be the number of gluings of obtaining a graph, and $\text{Aut}(G)$ be the automorphism group of the graph. By orbit-stabiliser theorem,
\beq
\#\text{Aut}(G)\times \#gluings = \#\textbf{G}, 
\eeq
where $\#\textbf{G}$ is the order of group relabelling. Wick's theorem can be written only in terms of non-equivalent graphs as follows:
\beq\label{eq:wicks theorem}
\frac{1}{\#\textbf{G}}\mathbb{E}\big[\prod_jx_{i_j}^{b_j}\big] = \sum_{\text{Non-equivalent graphs $G$}}\frac{1}{\# \text{Aut}(G)}\,\prod_{(i,j)\text{ edge of $G$}}B_{ij}.
\eeq

In the case of Gaussian matrix integrals, Wick's theorem can be applied to compute correlators of traces by studying \textit{fat graphs} also called \textit{ribbon graphs}.

Consider the Hermitian Gaussian matrix model with probability measure
\beq
d\mu_0(M_R) = \frac{1}{\mathcal{Z}_0}e^{-2N\Tr M_R^2}\prod_{j=1}^NdM_{{R}_{jj}}\prod_{j<k}d\text{Re}M_{R_{jk}}\,d\text{Im}M_{R_{jk}},
\eeq
where
\beq
\mathcal{Z}_0 = \frac{1}{2^{N(N-1)}}\left(\frac{\pi}{N}\right)^{\frac{N^2}{2}}.
\eeq

The Wick's propagator is 
\beq\label{eq:propagator}
\mathbb{E}^{(H)}_N[M_{R_{ij}}M_{R_{kl}}] \equiv \br M_{R_{ij}}M_{R_{kl}}\kt= \frac{1}{4N}\delta_{il}\delta_{jk}.
\eeq
As an example, consider 
\beq
\mathbb{E}^{(H)}_N[(\Tr M_R^3)^2] = \sum_{\substack{i,j,k,\\l,m,n}}\mathbb{E}^{(H)}_N[M_{R_{ij}}M_{R_{jk}}M_{R_{ki}}M_{R_{lm}}M_{R_{mn}}M_{R_{nl}}]
\eeq
To map the problem to counting graphs, associate a vertex to each trace. The power of the matrix inside the trace gives the number of half-edges as double lines with index associated to each single line.
\begin{figure}[h]  
\centering 
\begin{tikzpicture}
\node at (-2,0.1) [left] {$\Tr  M_R^3$};
 \draw[very thick,black,->] (-1.9,0) -- (-0.8,0);
 \draw (0,1) 
  node [above left] {$i$} -- (0,0) -- (-1,-.8) node [above left] {$i$};
  \draw (-.9,-1)
  node [below right] {$k$} -- (0.1,-.2) -- (.9,-1) node [below left] {$k$};
  \draw (1,-.8)
  node [above right]{$j$} -- (0.2,0) -- (0.2,1) node [above right] {$j$};
\end{tikzpicture}
\end{figure}
The propagator in \eqref{eq:propagator} can be used to glue these half-edges together to form a double line edge of the graph.
Thus,
\beq\label{eq:trace example}
\begin{split}
\mathbb{E}^{(H)}_N[(\Tr M_R^3)^2] &= \sum_{\substack{i,j,k\\l,m,n}}\br M_{R_{ij}}M_{R_{jk}}\kt\br M_{R_{ki}}M_{R_{lm}}\kt\br M_{R_{mn}}M_{R_{nl}}\kt + \br M_{R_{ij}}M_{R_{ki}}\kt\br M_{R_{jk}}M_{R_{lm}}\kt\br M_{R_{mn}}M_{R_{nl}}\kt + \dots \\
&= \frac{1}{(4N)^3}\sum_{\substack{i,j,k\\l,m,n}} \delta_{ik}\delta_{km}\delta_{il}\delta_{ml} + \delta_{jk}\delta_{jm}\delta_{kl}\delta_{ml} + \dots\\
&=\frac{1}{4^3}\left(12+\frac{3}{N^2}\right).
\end{split}
\eeq
There are in total $5!! = 15$ graphs in \eqref{eq:trace example} with only two topologically distinct graphs shown below.
\begin{figure}[h]
\centering
\begin{tikzpicture}
  \draw [black,thick,double,double distance=6pt]
  (0,0) node [above left] {\scriptsize{$j$}} node [below left] {\scriptsize{$k$}} to (-1,0) to (-1.5,.86) node [below left] {\scriptsize{$i$}} node [above right] {\scriptsize{$j$}} to [out=left,in=up] (-2.5,0) to
  [out=down,in=left] (-1.5,-.86) node [above left] {\scriptsize{$i$}} node [below right] {\scriptsize{$k$}} to (-1,0) ;

  \draw [black,thick,double,double distance=6pt]
   (0,0) node [above right] {\scriptsize{$m$}} node [below right] {\scriptsize{$l$}} to (1,0) to (1.5,.86) node [above left] {\scriptsize{$m$}} node [below right] {\scriptsize{$n$}} to [out=right,in=up] (2.5,0) to
   [out=down,in=right] (1.5,-.86) node [above right] {\scriptsize{$n$}} node [below left] {\scriptsize{$l$}} to (1,0) ;
   
\draw [black,thick,double,double distance=6pt]
(6,0)  to (5.5,0.86) node [below left] {\scriptsize{$i$}}  node [below right] {\scriptsize{$j$}} [out=up,in=left] to  (7.5,1.5) [out=right,in=up] to  (9.5,0.86)node [below left] {\scriptsize{$m$}}  node [below right] {\scriptsize{$n$}} to [out=-120, in = 60](9.,0) to [out=-60,in=120](9.5,-0.86)node [above left] {\scriptsize{$l$}}  node [above right] {\scriptsize{$n$}} to [out=down,in=right] (9,-1.5) to [out=left,in=right](7,0) node [above left] {\scriptsize{$j$}}  node [below left] {\scriptsize{$k$}} [out=left,in=right]to (6,0) to [out=-120,in=60](5.5,-0.86)node [above left] {\scriptsize{$i$}}  node [above right] {\scriptsize{$k$}} to [out=down,in=left] (6,-1.5);
 \draw [black,thick,double,double distance=6pt] 
 (6,-1.5)  to [out=right,in=left] (8,0) node [above right] {\scriptsize{$m$}}  node [below right] {\scriptsize{$l$}} to [out=right,in=left] (8.935,0); 
  \draw (0,-2) node {$N^0$};
  \draw (8.2,-2) node {$N^{-2}$};
\end{tikzpicture}
\end{figure}

If we attach to each vertex a factor of $N$, the $N$ dependence of a graph is: There is a factor $N$ per vertex, a factor $N^{-1}$ per edge, a factor $N$ for each single line  when summed over indices. The number of single lines remaining at the end is the number of faces of the graph. So the total $N$ dependency of a graph is
\beq
N^{\# \text{vertices}-\#\text{edges}+\#\text{faces}} = N^{\chi(G)},
\eeq
where $\chi(G)$ is the topological invariant of the graph called its Euler-characteristic.

This notion of counting ribbon graphs can be extended to compute correlators of the form $\mathbb{E}^{(H)}_N[\prod_{j}(\Tr M_R^j)^{b_j}]$. When divided by $\prod_jj^{b_j}b_j!$, the order of group relabelling, matrix integrals takes a form similar to \eqref{eq:wicks theorem}. This formula is due to Brezin-Itzykson-Parisi-
Zuber in 1978 \cite{Brezin1978planar}
\beq
\mathbb{E}^{(H)}_N\left[\prod_{j=1}^n\frac{1}{b_j!}\left(\frac{N}{j}\Tr M_R^j\right)^{b_j}\right] = \sum_{\text{Ribbon Graphs $G$}}\frac{1}{\#\text{Aut($G$)}}4^{-\#\text{edges}}N^{\chi(G)},
\eeq
where the sum is over non-topologically equivalent ribbon graphs and $\#\text{Aut($G$)}$ is the number of automorphisms of $G$. There are a total of $(\sum_j jb_j-1)!!$ graphs (counting equivalent and non-equivalent graphs). The total number of vertices is $b=\sum_jb_j$ with $j$ valencies for each vertex and the total number of edges is $(\sum_jjb_j)/2$.

\subsection{Special cases}
Here we consider two cases (i) $\mathbb{E}^{(H)}_N[\Tr M_R^{2k-1}\Tr M_R]$ and (ii) $\mathbb{E}^{(H)}_N[(\Tr M_R^2)^n]$.

\noindent(i) $\mathbb{E}^{(H)}_N[\Tr M_R^{2k-1}\Tr M_R]$: We represent $\Tr M_R^{2k-1}\Tr M_R$ as two vertices with $2k-1$ and 1 valencies, respectively.
\begin{figure}[h]
\begin{center}
\begin{tikzpicture}
\draw (0,1.5) node [above left] {$i_1$} node [above right] {$i_2$}  -- (0,0) -- (-1,1.2);
\draw (-1.7,1.2) node {$i_{2k-1}$};
\draw (-1.1,1.5) node {$i_{1}$};
\draw (0.2,1.5)-- (0.2,0);
\draw (-1.2,1.1) -- (-0.2,-0.1);
\draw (-0.2,-0.1)--(-1.5,0)  node [above left] {$i_{2k-1}$} node [below left] {$i_{2k-2}$};
\draw (-0.2,-0.3)--(-1.5,-0.2);
\draw [black,dashed,domain=215:360] plot ({cos(\x)}, {sin(\x)});
\draw [black,dashed,domain=0:70] plot ({cos(\x)}, {sin(\x)});
\draw (4,1.5)--(4,0);
\draw (4.2,1.5) node [above right] {$i_{2k}$} node [above left] {$i_{2k}$}--(4.2,0) ;
\draw (0,-2) node {$\Tr M_R^{2k-1}$};
\draw (4,-2) node {$\Tr M_R$};
\end{tikzpicture}
\end{center}
\end{figure}

\noindent Since index $i_{2k}$ has $2k-1$ choices, by gluing the half-edges using \eqref{eq:propagator},
\beq
\begin{split}
\mathbb{E}^{(H)}_N[\Tr M_R^{2k-1}\Tr M_R] &= \frac{(2k-1)}{4N}\mathbb{E}^{(H)}_N[\Tr M_R^{2k-2}]\\
&=\frac{N}{k(4N)^k}(2k-1)!!i^{-k+1}P_{k-1}^{(1)}\left(iN,\frac{\pi}{2}\right),
\end{split}
\eeq 
where $P_{k-1}^{(1)}\left(iN,\frac{\pi}{2}\right)$ is a Meixner-Pollaczek polynomial.

\noindent(ii) $\mathbb{E}^{(H)}_N[(\Tr M_R^2)^n]$: Here we sketch the idea to calculate moments of $\Tr M_R^2$. We represent $(\Tr M_R^2)^n$ as $n$ vertices each with two valencies as shown below.
\begin{figure}[h]
\centering 
\begin{tikzpicture}
\draw (0,1.5) node [above right] {$i_2$} node [above left] {$i_1$}-- (0,-1.5) node [below right] {$i_2$} node [below left] {$i_1$};
\draw (0.2,1.5)  --(0.2,-1.5) ;
\draw (2,1.5) node [above right] {$i_4$} node [above left] {$i_3$}-- (2,-1.5) node [below right] {$i_4$} node [below left] {$i_3$};
\draw (2.2,1.5)--(2.2,-1.5);
\draw [black,thick,dotted] (3,0) to (4.5,0);
\draw (5,1.5) --(5,-1.5);
\draw (5.2,1.5)node [above right] {$i_{2n}$} node [above left] {$i_{2n-1}$}--(5.2,-1.5)node [below right] {$i_{2n}$} node [below left] {$i_{2n-1}$};
\draw (0,-3) node {$\Tr M_R^2$};
\draw (2,-3) node {$\Tr M_R^2$};
\draw (5,-3) node {$\Tr M_R^2$};
\end{tikzpicture}
\end{figure}
There are several ways of gluing this set of vertices and half-edges. Trivially $i_j$ can be glued with itself for $j=1,\dots,2n$ which  gives a total contribution of $N^{2n}/(4N)^n$. 

The next non-trivial contribution comes from choosing any two vertices and gluing their valencies to form an edge between them. There are $\binom{n}{2}$ ways of choosing two vertices. Let $(i_{p},i_{p+1})$ and $(i_{q},i_{q+1})$, $1\leq p,q\leq 2n$, be the indices of the valencies of these two vertices. There are two ways to pair $(i_{p},i_{p+1})$ and $(i_{q},i_{q+1})$. This gives a contribution of $n(n-1)N^{2}/(4N)^2$. The remaining $n-2$ disconnected graphs multiplicatively gives $N^{2n-4}/(4N)^{n-2}$. Hence the first two leading terms are
\beq
\mathbb{E}^{(H)}_N[(\Tr M_R^2)^n] = \frac{1}{(4N)^n}(N^{2n} + n(n-1)N^{2n-2} +\dots)
\eeq

Remaining terms in the $n^{th}$ moment can be likewise computed. 
\beq
\mathbb{E}^{(H)}_N[(\Tr M_R^2)^n] = \frac{1}{(4N)^n}\prod_{j=0}^{n-1}(N^2+2j).
\eeq 

Similar arguments can be used to show that
\beq
\mathbb{E}^{(H)}_N[(\Tr M_R^2)^k(\Tr M_R)^{2n-2k}] = (2n-2k-1)!!\frac{1}{(4N)^n}N^{n-k}\prod_{l=n-k}^{n-1} (N^2+2l)
\eeq 
for $k=1,\dots, n-1$.
 \bibliographystyle{abbrv}
\bibliography{mom_clt}{}
\end{document}